    \numberwithin{equation}{section}
\def\beq{\begin{equation}}
\def\eeq{\end{equation}}
\def\nn{\nonumber}
\def\bit{\begin{itemize}}
\def\eit{\end{itemize}}
\def\eqalign#1{\null\vcenter{\def\\{\cr}\openup\jot\m@th
  \ialign{\strut$\displaystyle{##}$\hfil&$\displaystyle{{}##}$\hfil
      \crcr#1\crcr}}\,}
\newcommand{\be}{\begin{equation}}
\newcommand{\ee}{\end{equation}}
    \def\Re{{\rm Re \,}}
    \def\Im{{\rm Im \,}}
    \def\Ai{{\rm Ai \,}}
    \def\bigO{{\cal O}}
    \def\P2n{{\rm P}_{{\rm II}}^{(n)}}
    \newtheorem{theorem}{Theorem}[section]
    \newtheorem{lemma}[theorem]{Lemma}
    \newtheorem{proposition}[theorem]{Proposition}
    \newtheorem{Definition}[theorem]{Definition}
    \newtheorem{Remark}[theorem]{Remark}
    \newenvironment{remark}{\begin{Remark}\rm}{\end{Remark}}
    \newtheorem{Example}[theorem]{Example}
    \newtheorem{Assumptions}[theorem]{Assumptions}
    \newenvironment{proof}%
    {\rm \trivlist \item[\hskip \labelsep{\bf Proof. }]}%
    {\hspace*{\fill}$\Box$\endtrivlist}
    {\rm \trivlist \item[\hskip \labelsep{\bf Proof}]}%
    {\hspace*{\fill}$\Box$\endtrivlist}
    \newcommand{\supp}{{\operatorname{supp}}}
    \DeclareMathOperator*{\Tr}{Tr}
\begin{document}
\title{Biorthogonal ensembles with two-particle interactions}
\author{Tom Claeys\footnote{Institut de Recherche en Math\'ematique et Physique,  Universit\'e
catholique de Louvain, Chemin du Cyclotron 2, B-1348
Louvain-La-Neuve, BELGIUM}
 \ and Stefano Romano\footnotemark[\value{footnote}] }

\maketitle

\begin{abstract}
We investigate determinantal point processes on $[0,+\infty)$ of the form
\begin{equation*}\label{probability distribution}
\frac{1}{Z_n}\prod_{1\leq i<j\leq n}(\lambda_j-\lambda_i)\prod_{1\leq i<j\leq n}(\lambda_j^\theta-\lambda_i^\theta) \prod_{j=1}^n w(\lambda_j)d\lambda_j,\qquad \theta\geq 1.
\end{equation*}
We prove that the biorthogonal polynomials associated to such models satisfy a recurrence relation and a Christoffel-Darboux formula if $\theta\in\mathbb Q$, and that they can be characterized in terms of $1\times 2$ vector-valued Riemann-Hilbert problems which exhibit some non-standard properties. In addition, we obtain expressions for the equilibrium measure associated to our model if $w(\lambda)=e^{-nV(\lambda)}$ in the one-cut case with and without hard edge.

\end{abstract}

\section{Introduction}

We study determinantal point processes which consist of $n$ particles $\lambda_1,\ldots, \lambda_n \in [0,+\infty)$ following a probability distribution of the form
\begin{equation}\label{probability distribution}
\frac{1}{Z_n}\Delta(\lambda_1,\ldots, \lambda_n)\cdot \Delta(\lambda_1^\theta,\ldots, \lambda_n^\theta)\cdot \prod_{j=1}^n w(\lambda_j)d\lambda_j,
\end{equation}
where $\theta\in [1,+\infty)$ and $\Delta$ is the Vandermonde determinant defined as
\begin{equation}\label{VdM}
\Delta(z_1,\ldots, z_n)=\prod_{1\leq i<j\leq n}(z_j-z_i).
\end{equation}
The weight function $w$ is positive and such that $x^kw(x)$ is integrable for any $k\in\mathbb N$.
The partition function $Z_n=Z_n(\theta,w)$ is given by
\begin{equation}
Z_n=\int_{[0,+\infty)^n}\Delta(\lambda_1,\ldots, \lambda_n)\cdot \Delta(\lambda_1^\theta,\ldots, \lambda_n^\theta)\cdot \prod_{j=1}^n w(\lambda_j)d\lambda_j.
\end{equation}
The probability distribution (\ref{probability distribution}) shows a  repulsion between the different particles $\lambda_i$ and a repulsion between the $\lambda_i^\theta$'s.
Such point processes (\ref{probability distribution}) with {\em two-particle interactions} appeared in the study of disordered conductors in the metallic regime \cite{Muttalib}, where it was shown that the correlation kernel can be expressed in terms of biorthogonal polynomials: it is given by
\begin{equation}\label{Kn}
K_n(x,y)=\sum_{j=0}^{n-1}p_j(x)q_j(y^\theta)\sqrt{w(x)w(y)},
\end{equation}
where $p_j(x)=\kappa_jx^j+\ldots$ and $q_j(x)=\kappa_jx^j+\ldots $ are polynomials of degree $j$ characterized by the orthogonality conditions
\begin{equation}\label{orth}
\int_0^{+\infty}p_j(x)q_k(x^\theta)w(x)dx=\delta_{jk},\end{equation}
where we normalize $p_j$ and $q_j$ in such a way that their leading coefficients $\kappa_j$ are equal and positive.
Existence and uniqueness of the families $\{p_j,j=0,1,\ldots\}$ and $\{q_j,j=0,1,\ldots\}$ is guaranteed if the bimoment matrix
\begin{equation}\label{Gramm}
\left(\int_0^{+\infty}x^{k+j\theta}w(x)dx\right)_{j,k=0,\ldots, n-1}
\end{equation}
is nonsingular for every $n$ \cite{IserlesNorsett2, AvHvM, Bertola}. It follows from Proposition \ref{proposition: biop} (i) below that this is indeed the case.
The polynomials $p_j$ and $q_j$ are biorthogonal polynomials of a special type, and
their history goes back to the 1950's when
they appeared in the penetration of gamma rays through matter \cite{SpencerFano} in the Laguerre case
$w(x)=x^\alpha e^{-nx}$ for $\theta=2$. The Laguerre biorthogonal polynomials were subsequently studied for $\theta\in\mathbb N$ by several authors in the 1960's, and recurrence relations, generating functions, differential equations, and a Rodrigues' formula were derived \cite{Preiser, Konhauser2, Carlitz, GeninCalvez2, GeninCalvez1, Prabhakar, Srivastava}.
A slightly more general notion of biorthogonal polynomials was introduced in \cite{Konhauser1}.
Biorthogonal polynomials with respect to a constant weight on $[0,1]$ were studied in \cite{LubinskySidiStahl, LubinskyStahl}, motivated by problems in numerical analysis.
The polynomials $p_j$ and $q_j$ can also be interpreted as multiple orthogonal polynomials \cite{Kuijlaars} of a degenerate type. In particular, $p_j$ is the multiple orthogonal polynomial of type II corresponding to $j$ different orthogonality weights $x^{k\theta}w(x)$, $k=0,\ldots, j-1$.

\medskip

In the simplest case $\theta=1$, if the weight $w$ takes the form $w(x)=x^\alpha e^{-nV(x)}$, the joint probability distribution (\ref{probability distribution}) is realized by the eigenvalues of $n\times n$ positive-definite Hermitian random matrices invariant under unitary conjugation with distribution
\begin{equation}
\frac{1}{\widehat Z_n}(\det M)^\alpha \exp(-n\Tr V(M))dM,\qquad dM=\prod_{j=1}^ndM_{jj}\prod_{1\leq i<j\leq n}d\Re M_{ij} d\Im M_{ij},
\end{equation}
and the correlation kernel is built out of usual orthogonal polynomials. In this case, large $n$ asymptotic results have been obtained for general classes of external fields $V$.
For $\theta=2$, the process (\ref{probability distribution}) appears in the so-called O(n) model \cite{Kostov, EynardZinnJustin} with ${\rm n}=-2$ and the limiting macroscopic behavior for a large number of particles has been studied, see also \cite{LSZ} for the Laguerre case.

\medskip

Except for $\theta=1$, asymptotic results about the microscopic behavior of particles are available only for special choices of weight functions: Borodin \cite{Borodin} obtained asymptotics for the correlation kernels for Laguerre, Jacobi, and Hermite weights for general $\theta\geq 1$ and computed large $n$ scaling limits of the correlation kernel (\ref{Kn}) for $x$ and $y$ close to $0$. He showed that there is a class of limiting kernels, depending on $\alpha$ and $\theta$, expressed in terms of certain generalizations of Bessel functions. For general $V$ and $\theta\notin\{1,2\}$, even the macroscopic behavior of the particles as $n\to\infty$ has not been described before.

\medskip

In the first part of the paper, we recall some well-known properties of the biorthogonal polynomials which hold for general weight functions $w$: they can be expressed as a determinant and they have a multiple integral formula.
In addition, we prove that the biorthogonal polynomials satisfy recurrence relations and a Christoffel-Darboux type formula if $\theta$ is rational:
\begin{theorem}\label{theorem: biOP}
Let $\theta=\frac{a}{b}$ with $a,b\in\mathbb N$, $a>b$.
The biorthogonal polynomials $p_j, q_j$ defined by (\ref{bi-orth}) satisfy recurrence relations of the form
\begin{align}\label{rec0}
x^ap_k(x) &=u_0(k)p_{k+a}(x) + u_1(k)p_{k+a-1}(x)+\dots +u_{a+b}(k)p_{k-b}(x),\\
x^bq_k(x) &=v_0(k)q_{k+b}(x) + v_1(k)q_{k+b-1}(x)+\dots +v_{a+b}(k)q_{k-a}(x),\label{rec2}
\end{align}
where we use the convention that $p_j(x)=q_j(x)=0$ for $j<0$,
and the coefficients $u_j(k), v_j(k)$ are related by
\beq\label{u_uhat0}
u_j(k) =v_{a+b-j}(k+a-j).
\eeq
Moreover, we have the Christoffel-Darboux formula
\begin{multline}\label{CD}
\sum_{k=0}^{n-1}p_k(x)q_k(y^\theta)=\frac{1}{x^a-y^a}\left(
\sum_{\ell=1}^{a}\sum_{k=n-\ell}^{n-1}u_{a-\ell}(k)p_{k+\ell}(x)q_k(y^\theta)\right. \\
\left. -\sum_{\ell=1}^b\sum_{k=n-\ell}^{n-1}u_{a+\ell}(k+\ell)p_k(x)q_{k+\ell}(y^\theta)
\right).\end{multline}
\end{theorem}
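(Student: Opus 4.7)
The plan is to derive both recurrences from biorthogonality together with degree bounds, and then telescope $(x^a-y^a)\sum_{k=0}^{n-1} p_k(x)q_k(y^\theta)$.

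For \eqref{rec0}, write $x^a p_k(x)=\sum_j c_{k,j}p_j(x)$. The biorthogonality \eqref{orth} gives $c_{k,j}=\int_0^{+\infty}x^a p_k(x)q_j(x^\theta)w(x)dx$. The crucial observation is that $x^a=(x^\theta)^b$, so the integrand equals $p_k(x)\cdot r(x^\theta)\cdot w(x)$ with $r(y)=y^b q_j(y)$ a polynomial of degree $j+b$ in $y$. Since $r(x^\theta)$ lies in the span of $q_0(x^\theta),\ldots,q_{j+b}(x^\theta)$, biorthogonality forces $c_{k,j}=0$ whenever $k>j+b$, i.e.\ $j<k-b$. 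Combined with the degree bound $j\leq k+a$, this gives \eqref{rec0} with $u_i(k):=c_{k,k+a-i}$. For \eqref{rec2}, expand $x^b q_k(x)=\sum_m d_{k,m}q_m(x)$ and extract coefficients by the twisted pairing $d_{k,m}=\int_0^{+\infty}p_m(x)\cdot(x^\theta)^b q_k(x^\theta)w(x)dx=\int_0^{+\infty}[x^a p_m(x)]q_k(x^\theta)w(x)dx=c_{m,k}$. Hence $d_{k,m}=0$ unless $k-a\leq m\leq k+b$, yielding \eqref{rec2} with $v_i(k):=d_{k,k+b-i}$; and the duality $d_{k,m}=c_{m,k}$ rewrites directly as the symmetry $u_j(k)=v_{a+b-j}(k+a-j)$ claimed in \eqref{u_uhat0}. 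In particular $u_a(k)=v_b(k)=c_{k,k}$.

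For \eqref{CD}, I would compute $(x^a-y^a)\sum_{k=0}^{n-1}p_k(x)q_k(y^\theta)$ by substituting \eqref{rec0} into $x^a p_k(x)$ and \eqref{rec2} (applied to $y^a q_k(y^\theta)=(y^\theta)^b q_k(y^\theta)$) into the second factor. Organize the resulting double sums by a shift parameter $\ell$, so that the $p$-side involves $p_{k+\ell}(x)q_k(y^\theta)$ with coefficient $u_{a-\ell}(k)$ and the $q$-side involves $p_k(x)q_{k+\ell}(y^\theta)$ with coefficient $v_{b-\ell}(k)$. The $\ell=0$ diagonals cancel because $u_a(k)=v_b(k)$. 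For each $\ell\neq 0$, shift the summation index of one of the two sums by $\pm\ell$ to align the polynomial arguments, and use \eqref{u_uhat0} to rewrite the $v$-coefficients in terms of $u$-coefficients; the bulk terms then cancel pairwise, and only the boundary contributions with $k\in\{n-\ell,\ldots,n-1\}$ survive, reproducing the two sums on the right-hand side of \eqref{CD} after dividing by $x^a-y^a$.

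The main obstacle is the combinatorial bookkeeping in this last step: keeping track of the simultaneous index shifts on the two sides and matching coefficients through \eqref{u_uhat0}. The symmetry relation is precisely what guarantees that the bulk of the telescoping cancels; without it, the formula would not close into boundary terms alone.
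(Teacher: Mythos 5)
Your proof is correct. The derivation of the two recurrences from biorthogonality plus degree counts is the same as the paper's (the paper phrases it after the change of variables $u=x^{1/b}$ and via the pairing $\langle f,g\rangle=\int f(u^b)g(u^a)\widetilde w(u)\,du$, but the content is identical). Where you genuinely diverge is in the derivation of the coefficient symmetry \eqref{u_uhat0}: the paper introduces the semi-infinite bimoment matrix $\mu_\infty$, its block-Hankel property $\Lambda^a\mu_\infty=\mu_\infty(\Lambda^t)^b$, the Borel factorization $\mu_\infty=S_1^{-1}S_2$, and dressed shift operators $L_1=S_1\Lambda S_1^{-1}$, $L_2=S_2\Lambda^t S_2^{-1}$, and then reads \eqref{u_uhat0} off the operator identity $L_1^a=L_2^b$ by inspecting matrix entries. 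You bypass all of this machinery: the integrand identity $x^a p_m(x)\,q_k(x^\theta)=p_m(x)\,(x^\theta)^b q_k(x^\theta)$ gives $d_{k,m}=c_{m,k}$ directly, and a one-line reindexing then yields $u_j(k)=v_{a+b-j}(k+a-j)$. This is shorter and entirely elementary; what the paper's framework buys is a visible connection to the Lax-operator/integrable-systems formalism, which is broader than what this theorem needs. Your Christoffel--Darboux telescoping is the same scheme the paper sketches (they telescope $(x^a-y^b)\sum p_k(x)q_k(y)$ and substitute $y\mapsto y^\theta$ at the end; you work with $y^\theta$ throughout, which is equivalent), and you have correctly identified the two features that make it close: the $\ell=0$ diagonal vanishes because $u_a(k)=v_b(k)=c_{k,k}$, and after shifting the negative-$\ell$ sums by $\pm\ell$ the symmetry \eqref{u_uhat0} makes the bulk cancel, leaving precisely the boundary contributions $k\in\{n-\ell,\dots,n-1\}$ that appear in \eqref{CD}.
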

\begin{remark}
If $\theta$ is integer, a similar Christoffel-Darboux formula had already been obtained in \cite{Ilyasov}. In \cite{IserlesNorsett2}, Christoffel-Darboux formulas of a different nature have been derived for biorthogonal polynomials.
It is also interesting to compare our Christoffel-Darboux formula with the ones for multiple orthogonal polynomials \cite{DaemsKuijlaars} and for multiple orthogonal polynomials of mixed type \cite{DaemsKuijlaars2}.
\end{remark}

As a next result,
we characterize the polynomials $p_j$ and $q_j$ in terms of Riemann-Hilbert (RH) problems, which are similar to RH problems in \cite{ClaeysWang} for polynomials related to a random matrix model with equi-spaced external source. We believe that asymptotic methods similar to the one developed in \cite{ClaeysWang} can be applied to the RH problems in order to obtain asymptotics for the biorthogonal polynomials, but such an asymptotic RH analysis is not the aim of the present paper.

For the formulation of the RH problems, we restrict ourselves to weights of the form $w(x)=x^\alpha e^{-nV(x)}$ here, with $\alpha>-1$ and $V$ sufficiently smooth on $[0,+\infty)$. Nevertheless, it is straightforward to generalize the RH problems for general weight functions by adapting condition (d) below.
Define the following modified Cauchy transform of the polynomial $p_j$:
\begin{equation} \label{def Cp0}
  C p_j(z)  \equiv  \frac{1}{2\pi i} \int_{0}^{+\infty} \frac{p_j(x)}{x^\theta - z^\theta} w(x) dx.
\end{equation}
This function is well-defined as long as $z^\theta\in\mathbb C\setminus [0,+\infty)$, but we consider it as a function
defined in $\mathbb H_\theta\setminus[0,+\infty)$, with
\begin{equation}\label{Htheta}
\mathbb{H}_{\theta} = \left\{ z\in \mathbb{C}:-\frac{\pi}{\theta}<\text{arg}(z)<\frac{\pi}{\theta}\right\}.
\end{equation}
For $z\in \mathbb H_\theta$, we define $z^\theta$ corresponding to arguments between $-\pi/\theta$ and $\pi/\theta$.
In the following theorem, we characterize $p_j$ and $Cp_j$ in terms of a RH problem.

\begin{theorem}\label{theorem: RHP1}
Let $Y$ be defined by
\begin{equation}\label{def Y}
  Y(z) \equiv  \left(\frac{1}{\kappa_j}p_j(z), \frac{1}{\kappa_j}C p_j(z)\right).
\end{equation} Then $Y$ is the unique function which satisfies the following conditions:
\subsubsection*{RH problem for $Y$}
\begin{itemize}
\item[(a)] \label{enu:RH_of_Y:1}
  $Y = (Y_1, Y_2)$ is analytic in $\left(\mathbb C, \mathbb H_\theta \setminus \mathbb [0,+\infty)\right)$,
\item[(b)] \label{enu:RH_of_Y:2}
  $Y$ has continuous boundary values $Y_\pm$ when
  approaching $(0,+\infty)$ from above ($+$) and below ($-$), and we have
  \begin{equation}\label{RHP_Y:jump1}
    Y_+(x) = Y_-(x)
    \begin{pmatrix}
      1 & \frac{1}{\theta x^{\theta-1}}w(x) \\
      0 & 1
    \end{pmatrix},
    \quad \textnormal{for $x>0$,}
  \end{equation}
  \item[(c1)]
    as $z\to\infty$, $Y_1$ behaves as  $Y_1(z)=z^j+\bigO(z^{j-1})$,
  \item[(c2)]
    as $z\to\infty$ in $\mathbb H_\theta$, $Y_2$ behaves as
    $Y_2(z)=\bigO(z^{-(j+1)\theta})$,
   \item[(d)] as $z\to 0$, $Y_2(z)=\bigO(1)+\bigO(z^{\alpha+1-\theta})$ if $\alpha+1-\theta\neq 0$; $Y_2(z)=\bigO(\log z)$ if $\alpha+1-\theta=0$,
   \item[(e)] for $x>0$, we have the boundary condition
   \begin{equation}
   Y_2(e^{\pi i/\theta}x)=   Y_2(e^{-\pi i/\theta}x).
   \end{equation}
\end{itemize}
\end{theorem}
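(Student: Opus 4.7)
The plan is the standard RH existence/uniqueness split. For existence, I would verify conditions (a)--(e) directly for $Y$ defined by \eqref{def Y}. Properties (a) and (c1) are immediate from the fact that $p_j$ is a polynomial of degree $j$ with leading coefficient $\kappa_j$; and (e) follows from $(e^{\pm i\pi/\theta}x)^\theta = -x^\theta$, which makes the two integrals defining $Y_2(e^{\pm i\pi/\theta}x)$ literally identical. After the substitution $u = x^\theta$, the jump (b) reduces to the standard Sokhotski--Plemelj formula applied to the transformed density, producing exactly the factor $w(x)/(\theta x^{\theta-1})$. Condition (c2) follows from the geometric expansion $1/(x^\theta - z^\theta) = -\sum_{k\ge 0} x^{k\theta}/z^{(k+1)\theta}$ combined with biorthogonality \eqref{orth}: each $x^{k\theta}$ with $k<j$ is a linear combination of $q_0(x^\theta),\dots,q_k(x^\theta)$, so the moments $\int_0^\infty p_j(x) x^{k\theta} w(x)\,dx$ vanish for $k=0,\dots,j-1$ and the first nonzero term is of order $z^{-(j+1)\theta}$. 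Condition (d) is standard local analysis of the Cauchy-type integral of a density with an $x^\alpha$ factor at $0$: its leading singular part is of order $(z^\theta)^{(\alpha+1)/\theta - 1} = z^{\alpha+1-\theta}$, with a logarithmic correction in the resonant case $\alpha+1=\theta$.

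For uniqueness, let $\tilde Y$ be a second solution and set $R:=Y-\tilde Y$. Condition (a) makes $R_1$ entire, and (c1) forces $R_1(z)=O(z^{j-1})$ at infinity, so $R_1$ is a polynomial of degree at most $j-1$. I would then introduce the auxiliary function
\begin{equation*}
h(z):=R_2(z)-\frac{1}{2\pi i}\int_{0}^{+\infty}\frac{R_1(x)w(x)}{x^\theta-z^\theta}\,dx,
\end{equation*}
which carries the same jump as $R_2$ on $(0,+\infty)$, the same boundary symmetry (e), the same local bound at $0$, and decays at infinity; the subtracted integral exactly cancels the jump, so $h$ extends analytically through $(0,+\infty)$ to all of $\mathbb H_\theta$.

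The main obstacle is converting the boundary condition (e) into a genuine analytic continuation, and I would handle this by lifting to $\zeta=z^\theta$: the function $\tilde h(\zeta):=h(\zeta^{1/\theta})$ (principal branch) is analytic on $\mathbb C\setminus(-\infty,0]$, and (e) translates exactly into the vanishing of the jump of $\tilde h$ across $(-\infty,0)$, extending $\tilde h$ to $\mathbb C\setminus\{0\}$. At the origin, the bound inherited from (d) gives $\tilde h(\zeta)=O(\zeta^{(\alpha+1)/\theta-1})$ with exponent strictly greater than $-1$ thanks to $\alpha>-1$, so $0$ is a removable singularity and $\tilde h$ is entire; decay at infinity and Liouville's theorem then force $\tilde h\equiv 0$. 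Consequently $R_2$ equals the Cauchy-type integral of $R_1$, and the sharper decay $R_2(z)=O(z^{-(j+1)\theta})$ imposes the $j$ moment conditions $\int_0^{+\infty}R_1(x)x^{\ell\theta}w(x)\,dx=0$ for $\ell=0,\dots,j-1$. Nonsingularity of the bimoment matrix \eqref{Gramm}, recorded in Proposition \ref{proposition: biop}(i), then yields $R_1\equiv 0$ and hence $R_2\equiv 0$.
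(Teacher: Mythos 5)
Your proposal is correct and takes essentially the same route as the paper: existence by direct verification (Sokhotski--Plemelj for the jump, the geometric expansion plus orthogonality for the decay at infinity), and uniqueness by subtracting the Cauchy-type transform of the first entry, lifting to the variable $\zeta=z^\theta$ so that condition (e) becomes analytic continuation across the negative axis, invoking removability of the singularity at the origin (thanks to $\alpha>-1$) and Liouville. The only cosmetic difference is that you subtract two solutions $R=Y-\tilde Y$ while the paper argues directly that a generic solution must equal the stated $Y$ via uniqueness of $p_j$; your final appeal to nonsingularity of the bimoment matrix from Proposition \ref{proposition: biop}(i) is the same fact phrased differently.
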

\begin{remark}
There are several differences between the above RH problem and the classical RH problem for orthogonal polynomials introduced by Fokas, Its, and Kitaev \cite{FokasItsKitaev}. A first difference is that our RH problem is vector-valued and not matrix-valued. One could add a second row to $Y$ containing the orthogonal polynomial of degree $j-1$ and its modified Cauchy transform, thus obtaining a $2\times 2$ matrix RH problem, but contrary to the determinant of the solution to the Fokas-Its-Kitaev problem, the determinant of $Y$ would not be equal to $1$ in our case, and without the guarantee that the determinant of $Y$ is different from $0$, it is not clear that a matrix RH problem is more convenient than a vector RH problem, for example for asymptotic analysis. Therefore we prefer to write our RH problem as a vector-valued one.
Another, more crucial, difference is that the entries $Y_1$ and $Y_2$ live in different domains $\mathbb C$ and $\mathbb H_\theta\setminus[0,+\infty)$. The jump contour $[0,+\infty)$ lies in the intersection of both domains, and this leads to a standard multiplicative jump condition with matrix-valued jump matrix.
The asymptotic conditions for $Y_1$ and $Y_2$ hold as $z\to\infty$ and $z\to 0$ in their respective domains. The boundary condition (e) glues together the edges of the sector $\mathbb H_\theta$, so that $Y_2$ actually lives on a cone.
\end{remark}

A similar RH characterization exists for the polynomials $q_j$.
Define
\begin{equation} \label{def C}
  \widetilde C q_j(z)  \equiv  \frac{1}{2\pi i} \int_{0}^{+\infty} \frac{q_j(x^\theta)}{x- z} w(x) dx,\qquad z\in \mathbb C\setminus[0,+\infty).
\end{equation}

\begin{theorem}\label{theorem: RHP2}
Let $\widetilde Y$ be defined by
\begin{equation}\label{def tilde Y}
  \widetilde Y(z) \equiv  \left(\frac{1}{\kappa_j}q_j(z^\theta), \frac{1}{\kappa_j} \widetilde C q_j(z)\right).
\end{equation} Then $\widetilde Y$ is the unique function which satisfies the following conditions:
\subsubsection*{RH problem for $\widetilde Y$}
\begin{itemize}
\item[(a)] \label{enu:RH_of_Y:12}
  $\widetilde Y = (\widetilde Y_1, \widetilde Y_2)$ is analytic in $(\mathbb H_\theta, \mathbb C \setminus \mathbb [0,+\infty))$,
\item[(b)] \label{enu:RH_of_Y:22}
  $\widetilde Y$ has continuous boundary values $\widetilde Y_\pm$ when
  approaching $(0,+\infty)$ from above and below, and we have
  \begin{equation}\label{RHP_Ytilde:jump1}
    \widetilde Y_+(x) = \widetilde Y_-(x)
    \begin{pmatrix}
      1 & w(x) \\
      0 & 1
    \end{pmatrix},
    \quad \textnormal{for $x >0$,}
  \end{equation}
  \item[(c1)]
    as $z\to\infty$ in $\mathbb H_\theta$, $\widetilde Y_1$ behaves as  $\widetilde Y_1(z)=z^{j\theta}+\bigO(z^{(j-1)\theta})$,
  \item[(c2)]
    as $z\to\infty$, $\widetilde Y_2$ behaves as
    $\widetilde Y_2(z)=\bigO(z^{-(j+1)})$,
   \item[(d1)] as $z\to 0$ in $\mathbb H_\theta$, $\widetilde Y_1(z)=\bigO(1)$,
   \item[(d2)] as $z\to 0$, $\widetilde Y_2(z)=\bigO(1)+\bigO(z^{\alpha})$ if $\alpha\neq 0$; $\widetilde Y_2(z)=\bigO(\log z)$ if $\alpha=0$,
   \item[(e)] for $x>0$, we have the boundary condition
   \begin{equation}
   \widetilde Y_1(e^{\pi i/\theta}x)=  \widetilde Y_1(e^{-\pi i/\theta}x).
   \end{equation}
\end{itemize}
\end{theorem}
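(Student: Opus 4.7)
The plan is to follow the standard two-step approach: first verify that $\widetilde Y$ defined in (\ref{def tilde Y}) satisfies (a)--(e), and then prove uniqueness by exploiting the boundary condition (e) together with the orthogonality encoded in (c2) and the nonsingularity of the bimoment matrix (\ref{Gramm}).

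For the verification part, analyticity (a) is immediate since $q_j(z^\theta)$ is a polynomial in $z^\theta$, which is single-valued analytic on $\mathbb{H}_\theta$ by the choice of branch, while $\widetilde C q_j$ is a Cauchy transform. The jump (b) follows from the Sokhotski--Plemelj formula applied to $\widetilde C q_j$, combined with the fact that $\widetilde Y_1$ extends analytically across $(0,+\infty)$. Condition (c1) is the definition of the leading coefficient of $q_j$. For (c2), I would expand $\frac{1}{x-z}$ as a geometric series in $x/z$ and invoke the orthogonality (\ref{orth}), which implies $\int_0^{+\infty} x^k q_j(x^\theta) w(x)\,dx = 0$ for $k=0,\ldots,j-1$ (expanding each $x^k$ as a linear combination of $p_0,\ldots,p_k$), yielding $O(z^{-(j+1)})$ decay. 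Condition (d1) is obvious; (d2) comes from a local analysis of the Cauchy integral near $0$, isolating the factor $x^\alpha$ in $w$ to produce the $z^\alpha$ (or $\log z$) singularity. Finally, (e) follows from $q_j(e^{\pm i\pi}x^\theta) = q_j(-x^\theta)$.

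For uniqueness, let $\widetilde Y,\widetilde Y'$ be two solutions and set $D=\widetilde Y-\widetilde Y'$, so that (c1) gives $D_1(z)=O(z^{(j-1)\theta})$ as $z\to\infty$ in $\mathbb H_\theta$. The key step is to use the boundary condition (e) to recognize $D_1$ as a polynomial in $z^\theta$ of degree $\leq j-1$. I would introduce $f(w)=D_1(w^{1/\theta})$, where $w^{1/\theta}$ is the inverse of the biholomorphism $z\mapsto z^\theta:\mathbb H_\theta\to\mathbb C\setminus(-\infty,0]$. Then $f$ is analytic on $\mathbb C\setminus(-\infty,0]$; the boundary condition (e) becomes $f_+=f_-$ on $(-\infty,0)$, so $f$ extends across the negative real axis; boundedness (d1) at $z=0$ makes the isolated singularity at $w=0$ removable. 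Hence $f$ is entire with $|f(w)|=O(|w|^{j-1})$, so $f$ is a polynomial of degree $\leq j-1$; equivalently $D_1(z)=p(z^\theta)$ for some polynomial $p$ of degree $\leq j-1$.

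It remains to kill $p$. The jump in (b) gives $D_{2,+}-D_{2,-}=p(x^\theta)w(x)$ on $(0,+\infty)$, so $D_2(z) - \frac{1}{2\pi i}\int_0^{+\infty}\frac{p(x^\theta)w(x)}{x-t}\,dt$ (with the obvious change $t\mapsto z$) has no jump; the mild endpoint behavior in (d2) is removable for $\alpha>-1$, and by (c2) the difference vanishes at infinity, hence it is identically $0$. Expanding this integral representation at $\infty$ and matching with $D_2=O(z^{-(j+1)})$ gives $\int_0^{+\infty}x^k p(x^\theta)w(x)\,dx=0$ for $k=0,\ldots,j-1$, a homogeneous linear system in the coefficients of $p$ whose matrix is the bimoment matrix (\ref{Gramm}). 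By Proposition \ref{proposition: biop}(i) this matrix is nonsingular, so $p\equiv 0$, and then $D_1=0$ forces $D_2=0$. The main obstacle, and the only nonroutine ingredient, is the gluing argument in the third paragraph: turning the boundary identification (e) together with the growth bounds into a statement that $D_1$ is a polynomial in $z^\theta$; once this is in place, the rest is a standard Cauchy-integral and orthogonality manipulation.
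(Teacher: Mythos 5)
Your proof is correct and follows essentially the same strategy the paper sketches: verify the conditions directly (using the vanishing of $\int_0^\infty x^k q_j(x^\theta)w(x)\,dx$ for $k<j$ to get (c2)), then for uniqueness use (e) and (d1) to glue $\widetilde Y_1$ into a polynomial in $z^\theta$, recover $\widetilde Y_2$ as the Cauchy transform, and kill the polynomial via the moment conditions forced by (c2). The only cosmetic difference is that you argue on the difference of two solutions and appeal explicitly to the nonsingularity of the bimoment matrix, whereas the paper works with a single candidate solution and invokes uniqueness of $q_j$ — these are the same underlying fact.
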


\begin{remark}
In the RH problem for $q_j$, the first entry $\widetilde Y_1$ lives on a cone, and the second entry $\widetilde Y_2$ lives in the complex plane. Consequently, we have a boundary condition for $\widetilde Y_1$, and the asymptotic conditions for $\widetilde Y_1$ are valid in $\mathbb H_\theta$.
\end{remark}

In the final part of the paper, we will study in detail the equilibrium problem associated to the point processes (\ref{probability distribution}) if $w$ has the form $w(x)= e^{-nV(x)}$, with $V$ sufficiently smooth on $[0,+\infty)$ and satisfying the growth condition
\begin{equation}\label{growth}
\lim_{x\to +\infty}\frac{V(x)}{\log x}=+\infty.
\end{equation}
The relevant equilibrium problem for our model is to find the unique probability measure $\mu=\mu_{V,\theta}$ minimizing the functional
\begin{equation}\label{energy}I_{V,\theta}(\mu)=\frac{1}{2}\iint\frac{1}{|x-y|}d\mu(x)d\mu(y)+\frac{1}{2}\iint\log\frac{1}{|x^\theta-y^\theta|}d\mu(x)d\mu(y)+\int V(x)d\mu(x),
\end{equation}
among all probability measures $\mu$ on $[0,+\infty)$.
The solution to this problem describes the limiting mean distribution of the particles in the process as $n\to\infty$ \cite{ESS}.
Existence and uniqueness of the measure $\mu$ can be proved using similar methods as in \cite{Deift}.
If a measure $\mu$ satisfies the Euler-Lagrange conditions
\begin{align}
&\label{EL1} \int\log|x-y|d\mu(y)+\int\log|x^\theta-y^\theta|d\mu(y)-V(x)=\ell,&&\mbox{ for $x\in\supp\,\mu$,}\\
& \label{EL2}\int\log|x-y|d\mu(y)+\int\log|x^\theta-y^\theta|d\mu(y)-V(x)\leq \ell,&&\mbox{ for $x\in [0,+\infty)$,}
\end{align}
for some $\ell\in\mathbb R$, then it is well understood that $\mu$ is the unique minimizer of (\ref{energy}).
We will study in detail, for $\theta\geq 1$, a class of external fields $V$ for which the equilibrium measure is supported on a single interval of the form $[0,b]$ (the one-cut case with a hard edge) and we will also study the case where the support consists of one interval of the form $[a,b]$ with $a>0$ (the one-cut case without hard edge).
This equilibrium problem has been studied extensively for $\theta=1$, see e.g.\ \cite{Deift, DKM, SaffTotik}, and many results are available about the equilibrium measure in this case. Once the support of the equilibrium measure is known, one can compute the density of $\mu_{V,1}$ from the Euler-Lagrange condition (\ref{EL1}) using the Sokhotski-–Plemelj formula. For $\theta>1$, it is not obvious how one can generalize this method, but we will show that one can compute the equilibrium measure $\mu$ by solving (\ref{EL1}) if the measure $\mu$ is supported on one interval.

\medskip

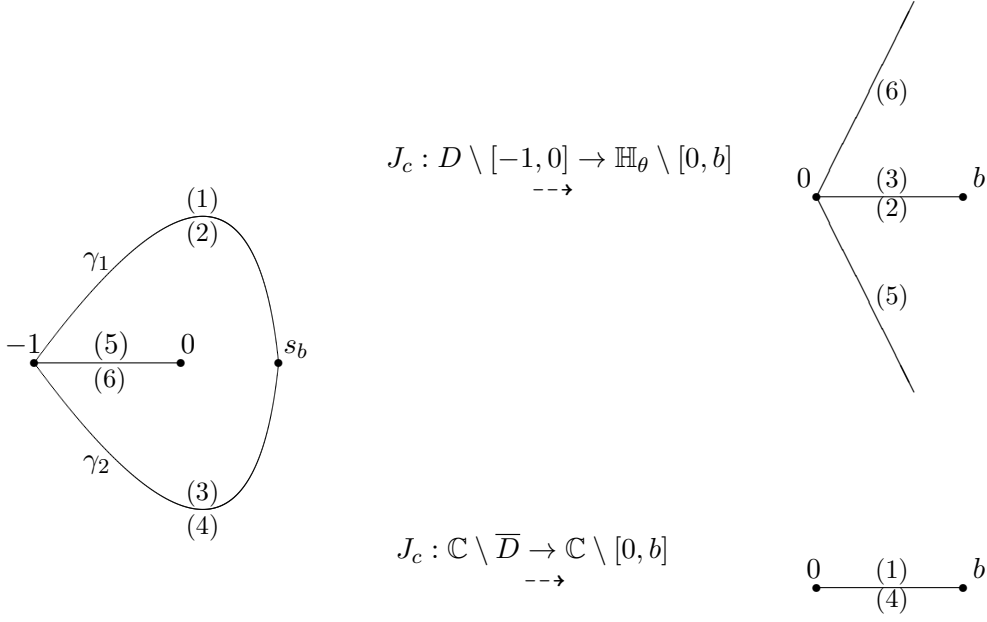
\begin{figure}[t]
\begin{center}
    \setlength{\unitlength}{1.3truemm}
    \begin{picture}(100,73)(0,25)
        \put(0,63){\thicklines\circle*{.8}}
        \put(15,63){\thicklines\circle*{.8}}
        \put(25,63){\thicklines\circle*{.8}}
        \put(0,63){\line(1,0){15}}
        \qbezier(0,63)(22,93)(25,63)
        \qbezier(0,63)(22,33)(25,63)
\put(-3,64){$-1$}
\put(15,64){$0$}
\put(25.5,64){$s_b$}
\put(5,73){$\gamma_1$}
\put(5,52.2){$\gamma_2$}
\put(6,64){$(5)$}
\put(6,60.5){{\small $(6)$}}
        \put(80,80){\line(1,2){10}}
        \put(80,80){\line(1,-2){10}}
        \put(80,80){\line(1,0){15}}
        \put(80,80){\thicklines\circle*{.8}}
        \put(95,80){\thicklines\circle*{.8}}
    \put(15.5,79){{\small $(1)$}}
    \put(15.5,45.5){{\small $(4)$}}
\put(15.5,75.5){{\small $(2)$}}
    \put(15.5,49){{\small $(3)$}}

\put(78,81){$0$}
\put(96,81){$b$}
\put(50,80){${\bf \dashrightarrow}$}
\put(36,83){$J_c:D\setminus[-1,0]\to \mathbb H_\theta\setminus[0,b]$}

        \put(80,40){\line(1,0){15}}
        \put(80,40){\thicklines\circle*{.8}}
        \put(95,40){\thicklines\circle*{.8}}

\put(79,41){$0$}
\put(96,41){$b$}
\put(50,40){${\bf \dashrightarrow}$}

\put(37,43){$J_c:\mathbb C\setminus \overline{D}\to \mathbb C\setminus[0,b]$}

 \put(86,81){{\small $(3)$}}
    \put(86,78){{\small $(2)$}}
    \put(86,90){{\small $(6)$}}
    \put(86,69){{\small $(5)$}}
\put(86,38){{\small $(4)$}}
    \put(86,41){{\small $(1)$}}

    \end{picture}
    \caption{The transformation $J_c$ which maps $D\setminus[-1,0]$ to $\mathbb H_\theta\setminus[0,b]$ and $\mathbb C\setminus \overline D$ to $\mathbb C\setminus[0,b]$. The upper and lower edges (1)-(6) of the boundary of $D\setminus[-1,0]$ at the left are mapped to (1)-(6) at the right.}
    \label{figure: J}
\end{center}
\end{figure}

A key role in the construction of the equilibrium measure in the one-cut case with hard edge is played by the transformation
\begin{equation}\label{def J}
J(s)=J_c(s)=c\left(s+1\right)\left(\frac{s+1}{s}\right)^{\frac{1}{\theta}},
\end{equation}
where the branch is chosen which is analytic in $\mathbb C\setminus [-1,0]$, and such that $J(s)\sim cs$ as $s\to\infty$.
It has two critical points on the real line, $-1$ and $s_b = \frac{1}{\theta}$, which are mapped to $0$ and
\begin{equation}\label{b}
b = c\frac{(1+\theta)^{1+1/\theta}}{\theta}.
\end{equation}
$J_c(s)$ is real for $s \in(-\infty, -1]\,\cup\, [s_b, +\infty)$ and along two complex conjugate arcs $\gamma_1, \gamma_2$ joining $-1$ and $s_b$ in the upper and lower half planes respectively. Both $\gamma_1$ and $\gamma_2$ are mapped bijectively to the interval $[0, b]$. We denote by $I_+(x)$ and $I_-(x)$ the inverse images of $x\in[0, b]$ belonging to $\gamma_1, \gamma_2$ respectively. We write $\gamma$ for the union of $\gamma_1$ and $\gamma_2$, oriented in the counterclockwise direction, and we write $D$ for the region enclosed by $\gamma$. As we will show in Section \ref{section: J}, the region of the complex plane outside $\gamma$ is mapped bijectively to the complex plane with the interval $[0, b]$ removed; the inner region $D\setminus[-1,0]$ is mapped bijectively to $\mathbb H_\theta\setminus [0,b]$.
The mapping $J_c$ is illustrated in Figure \ref{figure: J}.
\begin{remark}
If $\theta=1$, it is easy to verify that the curve $\gamma$ is the unit circle, and that $J$ maps both the interior and the exterior of the circle to $\mathbb C\setminus [0,b]$.
\end{remark}

\begin{theorem}\label{theorem: one-cut reg}
Let $\theta>1$, let $V$ be twice continuously differentiable on $[0,+\infty)$ and such that (\ref{growth}) holds.
If
\bit
\item[(i)] $V''(x)x+V'(x)> 0$ for all $x>0$,
\item[(ii)] $V''(x)\geq 0$ for all $x>0$,
\eit
the equilibrium measure $\mu_{V,\theta}$ is supported on a single interval $[0, b]$, where $b$ is given by
$$
b = c\frac{(1+\theta)^{1+1/\theta}}{\theta},
$$
and $c$ is the unique solution of the equation
\beq\label{getc0}
\frac{1}{2\pi i} \oint_{\gamma}\frac{V'(J_c(s))J_c(s)}{s}ds = 1+\theta.
\eeq
The density $\psi=\psi_{V,\theta}$ of $\mu_{V,\theta}$ is given by
\beq\label{psi}
\psi(x) = \frac{1}{2\pi^2 x}\int_0^b\Big(V''(y)y + V'(y)\Big)\log\left|\frac{I_+(y) - I_-(x)}{I_+(y)- I_+(x)}\right|dy.
\eeq
\end{theorem}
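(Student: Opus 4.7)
The plan is to proceed in three stages. First, reduce the equilibrium conditions (\ref{EL1})--(\ref{EL2}) to a scalar boundary-value problem on the $s$-plane via the conformal map $J_c$; second, solve this problem in closed form to derive both the density formula (\ref{psi}) and the transcendental equation (\ref{getc0}); third, verify nonnegativity of $\psi$ and the variational inequality outside the support using hypotheses (i) and (ii). Assuming $\mu_{V,\theta}$ has density $\psi$ on $[0,b]$, I differentiate (\ref{EL1}) and multiply by $x$ to get
\begin{equation*}
\mathrm{PV}\int_0^b\frac{x\,\psi(y)}{x-y}\,dy + \mathrm{PV}\int_0^b\frac{\theta x^\theta\,\psi(y)}{x^\theta-y^\theta}\,dy = xV'(x),\qquad x\in(0,b),
\end{equation*}
and introduce the Cauchy-type resolvents
\begin{equation*}
\tilde\Phi(z)=\int_0^b\frac{z\,\psi(y)}{z-y}\,dy,\qquad \tilde\Psi(z)=\int_0^b\frac{\theta z^\theta\,\psi(y)}{z^\theta-y^\theta}\,dy,
\end{equation*}
analytic on $\mathbb C\setminus[0,b]$ and on $\mathbb H_\theta\setminus[0,b]$, with $\tilde\Phi(\infty)=1$ and $\tilde\Psi(\infty)=\theta$. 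The Sokhotski--Plemelj formula gives $\tilde\Phi_+-\tilde\Phi_-=\tilde\Psi_+-\tilde\Psi_-=-2\pi i\,x\psi(x)$ on $(0,b)$, which combined with the equilibrium identity produces the gluing conditions $\tilde\Phi_\pm(x)-xV'(x)=-\tilde\Psi_\mp(x)$.

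The key construction is to define
\begin{equation*}
R(s) = \begin{cases}\tilde\Phi(J_c(s))-J_c(s)V'(J_c(s)), & s\in\mathbb C\setminus\overline D,\\ -\tilde\Psi(J_c(s)), & s\in D\setminus[-1,0].\end{cases}
\end{equation*}
Using the mapping properties of $J_c$ (opposite sides of each arc $\gamma_j$ map to opposite sides of $[0,b]$), the gluing conditions force $R$ to be continuous, hence analytic, across $\gamma$. Across $[-1,0]$, the two boundary branches of $J_c$ land on the rays $\arg z=\pm\pi/\theta$, on which $z^\theta=-r^\theta$ and the integrand of $\tilde\Psi$ has no singularity, so $R$ extends analytically across $[-1,0]$ as well. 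Thus $R$ is entire, with $R(0)=-\theta$, $R(-1)=0$, and $R(s)=1-J_c(s)V'(J_c(s))+O(1/s)$ as $s\to\infty$. Applying Cauchy's formula to $\tilde\Psi(J_c(\cdot))$ on $D$ and rewriting its boundary values along $\gamma$ via the gluing condition produces an explicit contour-integral representation for $R$ on $D$ in terms of $J_c V'\!\circ\! J_c$ along $\gamma$. Recovering $\psi$ from the jump $\tilde\Phi_+-\tilde\Phi_-$, parametrizing $\gamma$ by $u=I_\pm(y)$ for $y\in[0,b]$, and integrating by parts using $(yV'(y))'=V''(y)y+V'(y)$ then yields the formula (\ref{psi}).

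The mass constraint $\int_0^b\psi=1$ is encoded in the $1/s$-coefficient of $R$ at infinity, which reduces precisely to (\ref{getc0}); uniqueness of $c>0$ follows from strict monotonicity in $c$ of the left-hand side, which is what hypothesis (i) delivers via $\frac{d}{dc}[\,\cdot\,]$ being a contour integral of $J_1(s)\,[xV''(x)+V'(x)]|_{x=J_c(s)}/s$. Nonnegativity of $\psi$ rests on the elementary identity $|I_+(y)-I_-(x)|^2-|I_+(y)-I_+(x)|^2 = 4\,\mathrm{Im}\,I_+(x)\,\mathrm{Im}\,I_+(y)\ge 0$ (since $I_-(x)=\overline{I_+(x)}$ and $\gamma_1$ lies in the upper half plane), which makes the logarithm in (\ref{psi}) nonnegative, while the prefactor $V''(y)y+V'(y)$ is strictly positive by (i). Finally, the variational inequality (\ref{EL2}) on $(b,\infty)$ follows by differentiating the equilibrium expression: hypothesis (ii) $V''\ge 0$, together with the decay of the logarithmic interactions beyond the support, renders the left-hand side of (\ref{EL1}) concave on $(b,\infty)$ with vanishing value and first derivative at $x=b$, which forces it to stay below $\ell$. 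The main technical obstacle is the contour-integral reduction in the middle paragraph: tracking the two branches of $J_c$ on $\gamma_1$ versus $\gamma_2$, executing the change of variables $u=I_\pm(y)$, and carrying out the integration by parts that identifies the resulting double integral with (\ref{psi}) are the most delicate steps.
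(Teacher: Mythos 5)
Your proposal follows the same route as the paper: pull back the Euler--Lagrange conditions via $J_c$ to a scalar boundary-value problem on $\gamma\cup[-1,0]$, solve it by a Cauchy integral, recover $\psi$ from the jump and an integration by parts, and then verify positivity and the variational inequality (\ref{EL2}) from hypotheses (i) and (ii). Your algebraic identity $|I_+(y)-I_-(x)|^2-|I_+(y)-I_+(x)|^2=4\,\mathrm{Im}\,I_+(x)\,\mathrm{Im}\,I_+(y)$ is a cleaner statement of the paper's geometric positivity observation. Two small imprecisions to flag: for $V$ merely $C^2$, your $R$ is not entire (the term $-J_cV'\circ J_c$ outside $\gamma$ need not be analytic), though this does no harm since you only invoke Cauchy's formula on $D$, where $R=-\tilde\Psi\circ J_c$ is analytic; and the constraint (\ref{getc0}) comes from imposing $R(0)=-\theta$ (equivalently $\tilde\Psi(\infty)=\theta$), i.e.\ the behavior of $N$ at the inner preimage $s=0$ of $z=\infty$, rather than from the $1/s$-coefficient of $R$ at $s\to\infty$ as you state.
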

\begin{remark}\label{remark: local}
From the formula (\ref{psi}) for the equilibrium density, we can derive its local behavior near the endpoints.
We have
\begin{align}
&J_c(s)=ce^{\mp \pi i /\theta}(s+1)^{1+\frac{1}{\theta}}\left(1+\bigO(s+1)\right),&& s\to -1,\ \pm\Im s>0, \\
&J_c(s)= b+\frac{1}{2}J_c''(s_b)(s-s_b)^2+\bigO\left((s-s_b)^3\right),&& s\to s_b,
\end{align}
where the branch cut for $(s+1)^{1+\frac{1}{\theta}}$ lies on $(-\infty, -1]$.
For $x\in (0,b)$, the inverses $I_\pm$ behave consequently as
\begin{align}
&I_\pm(x)=-1+c^{-\frac{\theta}{\theta+1}}e^{\pm \frac{\pi i}{\theta+1}}x^{\frac{\theta}{\theta+1}}(1+\bigO(x)),&& x\to 0_+,\\
&I_\pm(x)=s_b\pm i\sqrt\frac{2}{J_c''(s_b)}(b-x)^{1/2}(1+\bigO(x-b)),&& x\to b_-.
\end{align}
Substituting this into (\ref{psi}) as $x\to b_-$,
we obtain after a straightforward calculation
\begin{align}
&\label{exponent}\psi(x)=d_1x^{-\frac{1}{\theta+1}}+\bigO(x^{\frac{\theta-1}{\theta+1}}),&& x\to 0,\\
&\label{square root}\psi(x)=d_2(b-x)^{1/2} \left(1+o(1)\right),&& x\to b_-,
\end{align}
with
\begin{align}
&d_1=-\frac{1}{\pi^2}c^{-\frac{\theta}{\theta+1}}\sin\frac{\pi}{\theta+1}\int_0^b\Big(V''(y)y + V'(y)\Big)\Im\frac{1}{I_+(y)+1}dy>0,\label{c1}\\
&d_2=-\frac{1}{\pi^2b}\sqrt\frac{2}{J_{c}''(s_b)}\int_0^b \Big(V''(y)y + V'(y)\Big)\Im\frac{1}{I_+(y)-s_b}dy>0.
\end{align}
\end{remark}
\begin{remark}
The square root behavior (\ref{square root}) near $b$ is similar to what one has in the random matrix case $\theta=1$ near soft edges of the support.
On the other hand, the exponent $-\frac{1}{\theta+1}$ in (\ref{exponent}) near $0$ is different from the behavior of the equilibrium measure in random matrix ensembles, where one typically has an exponent $-1/2$ near a hard edge.
In more general random matrix ensembles, one can have equilibrium measures corresponding other rational exponents $\pm p/q$ \cite{BE}, but the exponents $-1/(\theta+1)$ we obtain for our model, range over the entire interval $[-1/2,0)$.
\end{remark}

The following result states that the formula (\ref{psi}) for the equilibrium density holds under weaker assumptions than the ones in Theorem \ref{theorem: one-cut reg}.

\begin{theorem}\label{theorem: eq measure formula}
Let $\theta>1$, let $V$ be twice continuously differentiable on $[0,+\infty)$ and such that (\ref{growth}) holds.
If a probability measure $\mu$ is supported on one interval $[0,b]$, has a continuous density $\psi$ on $(0,b)$, and satisfies the Euler-Lagrange conditions (\ref{EL1})-(\ref{EL2}), then $\psi=\psi_{V,\theta}$ is given by (\ref{psi}). The endpoint $b$ is given by (\ref{b}), where $c$ solves equation (\ref{getc0}).
\end{theorem}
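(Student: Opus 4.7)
The plan is to derive (\ref{psi}) by differentiating the Euler--Lagrange equality (\ref{EL1}) and then pulling the resulting singular integral equation back to the $s$-plane via the map $J=J_c$ of Figure \ref{figure: J}, where it becomes a scalar boundary value problem solvable by the Plemelj formula. Differentiating (\ref{EL1}) in $x$ gives, in principal-value sense,
\[
\PVint_0^b\!\left[\frac{1}{x-y}+\frac{\theta x^{\theta-1}}{x^\theta-y^\theta}\right]\psi(y)\,dy=V'(x),\qquad x\in(0,b);
\]
multiplying by $x$ and using $\int_0^b\psi=1$ recasts this as $\PVint_0^b\bigl[\tfrac{y}{x-y}+\tfrac{\theta y^\theta}{x^\theta-y^\theta}\bigr]\psi(y)\,dy=xV'(x)-(1+\theta)$. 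I then introduce the Cauchy transforms $G_1(z)=\int_0^b\frac{y\psi(y)}{z-y}\,dy$ on $\mathbb C\setminus[0,b]$ and $H(z)=\int_0^b\frac{y^\theta\psi(y)}{z^\theta-y^\theta}\,dy$ on $\mathbb H_\theta\setminus[0,b]$, and apply Sokhotski--Plemelj to rewrite EL as the additive boundary condition
\[(G_1)_\pm(x)+\theta H_\pm(x)=xV'(x)-(1+\theta)\mp 2\pi i\, x\psi(x)\quad\text{on }(0,b).\]

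I next glue $G_1$ and $\theta H$ into a single function $F$ on the $s$-plane by setting $F(s)=G_1(J(s))$ for $s\in\mathbb C\setminus\overline D$ and $F(s)=-\theta H(J(s))$ for $s\in D$. The key observation is that $H(z)$ depends on $z$ only through $z^\theta$, and on $D$ the function $J(s)^\theta=c^\theta(s+1)^{\theta+1}/s$ is analytic away from $s=-1$; so the branch cut of $J$ itself along $[-1,0]$ is invisible to $J^\theta$, hence to $H\circ J$. Consequently $F|_D$ is analytic on $D\setminus\{-1\}$, with $F(s)\to 0$ as $s\to 0$ and a finite limit at $s=-1$ inside $D$, so $F$ is sectionally analytic on $\mathbb C\setminus\gamma$ and vanishes at infinity (since $G_1(z)=O(1/z)$). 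Pulling the boundary condition of the previous paragraph back via $J$ yields the jump
\[F_{\rm out}(s)-F_{\rm in}(s)=J(s)V'(J(s))-(1+\theta)\mp 2\pi i\,J(s)\,\psi(J(s))\quad\text{on }\gamma_{1,2}.\]

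By the Plemelj formula, $F(s)=\tfrac{1}{2\pi i}\oint_\gamma\frac{F_{\rm in}(t)-F_{\rm out}(t)}{t-s}\,dt$. Parametrizing $\gamma_{1,2}$ by $t=I_\pm(y)$, $y\in[0,b]$, this Cauchy integral becomes a sum of real integrals on $[0,b]$ split into a ``source'' contribution from $yV'(y)-(1+\theta)$ and a ``self'' contribution from $J(t)\psi(J(t))$. Equating this representation with $F(s)=G_1(J(s))$ for $s$ approaching $I_+(x)$ from outside $\overline D$, and integrating by parts in $y$ to transfer the derivative from the Cauchy kernel onto $yV'(y)$---the boundary terms vanish because $I_+(0)=I_-(0)=-1$ and $I_+(b)=I_-(b)=s_b$ force the resulting log factors $\log|(I_+(y)-I_\pm(x))|$ to cancel in pairs at $y\in\{0,b\}$---followed by taking imaginary parts to isolate $\psi(x)$, produces exactly the formula (\ref{psi}). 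The endpoint equation (\ref{getc0}) for $c$ arises in parallel: matching the $1/s$-coefficient in the large-$s$ expansion of $F(s)=G_1(J(s))\sim(cs)^{-1}\int y\psi\,dy$ against the leading term of the Plemelj integral, followed by an integration by parts of identical structure, yields (\ref{getc0}), and $b$ is then fixed by (\ref{b}).

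The main obstacles I anticipate are the careful bookkeeping of branch conventions for $I_\pm$ and $J^\theta$, the orientation of $\gamma_{1,2}$, and the verification that $F$ has only mild singularities at the critical points $s\in\{-1,s_b\}$ of $J$ on $\gamma$, so that the Plemelj representation is valid as written. Structurally, the entire argument hinges on the fact that the inner resolvent $\theta H$ sees only the object $J^\theta$, which remains analytic across the branch cut of $J$ on $[-1,0]$; without this, the two sides of $\gamma$ could not be fused into the single sectionally analytic $F$ on which the Plemelj inversion rests, and the need to analytically extend $V'$ off the real axis---which the hypotheses of the theorem do not permit---would block the construction.
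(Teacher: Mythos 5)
Your plan follows the same route as the paper: differentiate the Euler--Lagrange equality, pass to the resolvents of $\mu$, and use the conformal map $J_c$ to glue the outer and inner resolvents into a single sectionally analytic scalar on $\mathbb C\setminus\gamma$ whose Plemelj inversion yields the density. The observation that $\theta H\circ J$ sees only $J^\theta$, which is analytic across $[-1,0]$, is precisely what the paper exploits when it replaces $M(s)$ by $N(s)=J_c(s)M(s)$: multiplication by $J_c$ cancels the multiplicative jump inherited from the boundary condition on $\widetilde G$, since $J_c(s)\widetilde G(J_c(s))$ depends only on $J_c(s)^\theta$. In fact your $F$ equals $N-1$ outside $\gamma$ and $\theta-N$ inside, so the two constructions carry the same information.

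There are, however, two concrete problems. First, your jump relation is incorrect. The map $J_c$ sends the outside and the inside of $D$ to \emph{opposite} sides of $[0,b]$: near $s_b$ this is visible from $J_c(s)-b\approx\tfrac{1}{2}J_c''(s_b)(s-s_b)^2$, and it is recorded in the edge labels of Figure~\ref{figure: J}. Hence for $s\in\gamma_1$ one has $F_{\rm out}(s)=(G_1)_+(J_c(s))$ but $F_{\rm in}(s)=-\theta H_-(J_c(s))$, so the relevant boundary relation is the cross one, $(G_1)_\pm(x)+\theta H_\mp(x)=xV'(x)-(1+\theta)$, in which the $\pm 2\pi i\,x\psi(x)$ contributions cancel. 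The correct jump is
\[
F_{\rm out}(s)-F_{\rm in}(s)=J_c(s)V'(J_c(s))-(1+\theta),\qquad s\in\gamma\setminus\{-1,s_b\},
\]
with no $\psi$-dependent term. This matters: with the correct jump the Plemelj integral gives $F$ (hence $\psi$) directly in terms of $V$, whereas your stated jump leaves a genuine singular integral equation for $\psi$, and the claim that it ``produces exactly the formula (\ref{psi})'' is not substantiated.

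Second, you flag but do not resolve the behavior of $F$ at the critical points $-1$ and $s_b$, and this is not a cosmetic issue. The Plemelj representation defines one particular sectionally analytic function with the prescribed jump and decay at infinity; it need not coincide with the $F$ built from $\mu$ unless one rules out additional meromorphic contributions at $\{-1,s_b\}$. In the paper this is exactly where the Euler--Lagrange inequality (\ref{EL2}) enters: a simple pole of $N-N_0$ at $s_b$ would force $\psi(x)$ to blow up like $(b-x)^{-1/2}$ as $x\to b^-$, which is shown to contradict the variational conditions, while boundedness near $-1$ is checked directly from the definition of $N$. Your proposal never invokes (\ref{EL2}), so without this step the conclusion $\psi=\psi_{V,\theta}$ does not follow from the equality (\ref{EL1}) alone.
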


In order to study the one-cut case without hard edge, define the transformation
\begin{equation}\label{def tilde J}
\widetilde J(s)=\widetilde J_{c_0,c_1}(s)=(c_1s+c_0)\left(\frac{s+1}{s}\right)^{\frac{1}{\theta}},\qquad c_0>c_1>0,
\end{equation}
with branch cut on $[-1,0]$, and such that $\widetilde J(s)\sim c_1s$ as $s\to\infty$.
$\widetilde J$ has two critical points $s_a< s_b$ on the real line,
\begin{align}\label{sab}
s_a &=-\frac{\theta-1}{2\theta}-\frac{1}{2\theta c_1}\sqrt{4c_0c_1\theta+c_1^2(\theta-1)^2},\\
s_b &=-\frac{\theta-1}{2\theta}+\frac{1}{2\theta c_1}\sqrt{4c_0c_1\theta+c_1^2(\theta-1)^2}.\label{sab2}
\end{align}
We have $s_a<-1$ and $s_b>0$, and $a\equiv \widetilde J(s_a)>0$, $b\equiv \widetilde J(s_b)>a$.
There are two complex conjugate curves $\widetilde\gamma_1, \widetilde\gamma_2$ starting at $s_a$ and ending at $s_b$ in the upper and lower half plane respectively which are mapped to the interval $[a, b]$ through $\widetilde J_{c_0, c_1}$. Let $\widetilde\gamma$ be the counterclockwise oriented closed curve consisting of the union of $\widetilde\gamma_1$ and $\widetilde\gamma_2$, enclosing a region $\widetilde D$.
Define $\widetilde I_+, \widetilde I_-$ as the two inverses of $\widetilde J_{c_0, c_1}$ from $[a, b]$ to $\widetilde \gamma_1, \widetilde \gamma_2$ respectively.
$\widetilde J_{c_0,c_1}$ maps the region of the complex place outside $\widetilde\gamma$ to $\mathbb C\setminus [a, b]$. The inner region with $[-1,0]$ removed is mapped to $\mathbb H_\theta\setminus [a,b]$.
The mapping $\widetilde J_{c_0,c_1}$ is illustrated in Figure \ref{figure: tilde J}.

\begin{figure}[t]
\begin{center}
    \setlength{\unitlength}{1.3truemm}
    \begin{picture}(100,73)(0,25)
        \put(0,63){\thicklines\circle*{.8}}
        \put(15,63){\thicklines\circle*{.8}}
        \put(25,63){\thicklines\circle*{.8}}
        \put(-6,63){\thicklines\circle*{.8}}
        \put(0,63){\line(1,0){15}}
        \qbezier(-6,63)(-4,83)(10,83)
        \qbezier(-6,63)(-4,43)(10,43)
        \qbezier(10,83)(23,83)(25,63)
        \qbezier(10,43)(23,43)(25,63)
\put(-2,64){$-1$}
\put(15,64){$0$}
\put(26,63.5){$s_b$}
\put(-9,64){$s_a$}
\put(1,77){$\widetilde\gamma_1$}
\put(1,46){$\widetilde\gamma_2$}
\put(6,64){$(5)$}
\put(6,60.5){{\small $(6)$}}
        \put(80,80){\line(1,2){10}}
        \put(80,80){\line(1,-2){10}}
        \put(90,80){\line(1,0){15}}
        \put(90,80){\thicklines\circle*{.8}}
        \put(105,80){\thicklines\circle*{.8}}
        \put(80,80){\thicklines\circle*{.8}}
    \put(14,83){{\small $(1)$}}
    \put(14,41.5){{\small $(4)$}}
\put(14,79.2){{\small $(2)$}}
    \put(14,45){{\small $(3)$}}

\put(78,81){$0$}
\put(105,81){$b$}
\put(89,81){$a$}
\put(50,80){${\Large{\bf \dashrightarrow}}$}
\put(37,83){$\widetilde J:D\setminus[-1,0]\to \mathbb H_\theta\setminus[a,b]$}

        \put(90,40){\line(1,0){15}}
        \put(90,40){\thicklines\circle*{.8}}
        \put(105,40){\thicklines\circle*{.8}}

\put(89,41){$a$}
\put(105,41){$b$}
\put(50,40){${\Large{\bf \dashrightarrow}}$}
\put(37,43){$\widetilde J:\mathbb C\setminus \overline{D}\to \mathbb C\setminus[a,b]$}

 \put(96,81){{\small $(3)$}}
    \put(96,77.5){{\small $(2)$}}
    \put(86,90){{\small $(6)$}}
    \put(86,69){{\small $(5)$}}
\put(96,37.5){{\small $(4)$}}
    \put(96,41){{\small $(1)$}}

    \end{picture}
    \caption{The transformation $\widetilde J=\widetilde{J}_{c_0,c_1}$ which maps $D\setminus[-1,0]$ to $\mathbb H_\theta\setminus[a,b]$ and $\mathbb C\setminus \overline D$ to $\mathbb C\setminus[a,b]$. The upper and lower edges (1)-(6) of the boundary of $D\setminus[-1,0]$ at the left are mapped to (1)-(6) at the right.}
    \label{figure: tilde J}
\end{center}
\end{figure}

\begin{theorem}\label{theorem: eq measure formula soft edge}
Let $\theta>1$, let $V$ be twice continuously differentiable on $[0,+\infty)$ and such that (\ref{growth}) holds.
If a probability measure $\mu$ is supported on one interval $[a,b]$ with $a>0$, has a continuous density $\psi$ on $(a,b)$, and satisfies the Euler-Lagrange conditions (\ref{EL1})-(\ref{EL2}), then $\psi=\psi_{V,\theta}$ is given by
\begin{equation}\label{psi2}
\psi(x) = \frac{1}{2\pi^2 x}
\int_a^b\Big(V''(y)y + V'(y)\Big)
\log\left|\frac{\widetilde I_+(y) - \widetilde I_-(x)}{\widetilde I_+(y)- \widetilde I_+(x)}\right| dy.
\end{equation}
The endpoints $a$ and $b$ are given by $a=\widetilde J(s_a)$, $b=\widetilde J(s_b)$, with $s_a$ and $s_b$ given by (\ref{sab})-(\ref{sab2}), and where $c_0$ and $c_1$ solve the system of equations
\begin{align}
&\label{eqc0}\frac{1}{2\pi i}\oint_{\widetilde\gamma}\frac{V'\left(\widetilde J_{c_0, c_1}(t)\right)\widetilde J_{c_0, c_1}(t)}{t}dt=1+\theta,\vspace{3pt}\\
&\label{eqc1}\frac{1}{2\pi i}\oint_{\widetilde\gamma}\frac{V'\left(\widetilde J_{c_0, c_1}(t)\right)\widetilde J_{c_0, c_1}(t)}{t+1}dt=1.
\end{align}
\end{theorem}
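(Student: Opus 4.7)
My plan would be to mirror the proof strategy of Theorem \ref{theorem: eq measure formula}, now with the map $\widetilde J_{c_0,c_1}$ playing the role of $J_c$. First I would differentiate the Euler--Lagrange equality (\ref{EL1}) along $(a,b)$ and multiply by $x$, obtaining the identity $\Omega_+(x)+\Omega_-(x)=2xV'(x)$ for $x\in(a,b)$, where $\Omega(z):=zG(z)+\theta z^\theta H(z)$ is built from the Cauchy transforms $G(z)=\int d\mu(y)/(z-y)$, analytic on $\mathbb{C}\setminus[a,b]$, and $H(z)=\int d\mu(y)/(z^\theta-y^\theta)$, analytic on $\mathbb{H}_\theta\setminus[a,b]$. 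The key structural fact I will use is that $H$ is invariant under the cone boundary identification $z=e^{\pm\pi i/\theta}r$, since $z^\theta$ takes the same negative real value $-r^\theta$ on both rays.

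Next I would pull back to the $s$-plane via $\widetilde J$ and construct a scalar function $\Phi$ on $\mathbb{C}\setminus\widetilde\gamma$ that is analytic in both the interior and the exterior of $\widetilde\gamma$, encoding the Cauchy transforms $G$ and $H$. The essential analytic observation is that $\widetilde J(s)^\theta=(c_1 s+c_0)^\theta(s+1)/s$ is single-valued on $\mathbb{C}\setminus(-\infty,-c_0/c_1]$ and takes negative real values on $(-1,0)$, so that the composition $\theta\widetilde J(s)^\theta H(\widetilde J(s))$ continues analytically through the potential branch locus $[-1,0]$. The piecewise definition of $\Phi$ must be chosen so that the interior piece, containing the $H$ summand, is analytic across $[-1,0]$, while the exterior piece relies only on the $G$ summand (already analytic on $\mathbb{C}\setminus[a,b]$). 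The constants appearing in the piecewise definition are determined by the asymptotics at $s=\infty$, $s=-1$, $s=0$, computed from $\widetilde J(s)\sim c_1 s$, $\widetilde J(-1)=0$, and $\widetilde J(s)\to\infty$ as $s\to 0$, together with $G(z)\sim 1/z$ and $H(z)\sim 1/z^\theta$ at $\infty$ and the regularity of $G,H$ at $z=0$ (here the hypothesis $a>0$ is essential). Transporting the EL identity to $\widetilde\gamma$ then produces an additive jump of the form $\Phi_+(s)+\Phi_-(s)=2\widetilde J(s)V'(\widetilde J(s))+\mathrm{const}$ for $s\in\widetilde\gamma$.

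The Plemelj--Sokhotski formula on the closed curve $\widetilde\gamma$ represents $\Phi$ as an explicit Cauchy integral $\tfrac{1}{2\pi i}\oint_{\widetilde\gamma}\widetilde J(t)V'(\widetilde J(t))/(t-s)\,dt$ plus known corrections. Extracting the density from $G_+(x)-G_-(x)=-2\pi i\psi(x)$, changing variables from $\widetilde\gamma$ to $[a,b]$ via $t=\widetilde I_\pm(y)$, and integrating by parts to convert $V'(y)$ into $(yV'(y))'=yV''(y)+V'(y)$ produces the formula (\ref{psi2}). The normalization equations (\ref{eqc0})--(\ref{eqc1}) then emerge as the two compatibility constraints for the Cauchy representation: matching the leading $1/s$ coefficient of $\Phi$ at $s=\infty$ (pinned by $\mu$ being a probability measure, i.e.\ $G(z)\sim 1/z$) yields (\ref{eqc0}), while matching the value $\Phi(-1)$ (pinned by regularity of $G,H$ at $z=0$) yields (\ref{eqc1}).

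The main obstacle is the careful execution of the piecewise construction of $\Phi$ in the second step. One must verify that $\Phi$ is genuinely analytic across $[-1,0]$, where the branch cut of $\widetilde J$ lies, and track carefully which side of $\widetilde\gamma_1$ versus $\widetilde\gamma_2$ maps to the upper versus lower side of $[a,b]$, so that the final integrand in (\ref{psi2}) carries the correct sign and the specific log kernel $\log|(\widetilde I_+(y)-\widetilde I_-(x))/(\widetilde I_+(y)-\widetilde I_+(x))|$. Once this is in place, the Plemelj inversion and the integration by parts reduce to standard computations parallel to the hard-edge case.
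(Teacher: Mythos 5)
Your strategy mirrors the paper's proof closely and is the right one: introduce $g, \widetilde g$, pass to the $s$-plane via $\widetilde J_{c_0,c_1}$, define a scalar function piecewise on $\mathbb{C}\setminus\widetilde\gamma$ (the $G$-piece outside $\widetilde\gamma$, the $\widetilde G$-piece inside), use the cone-boundary invariance of $\widetilde G$ together with the compensating branch behaviour of $\widetilde J$ to continue the interior piece across $[-1,0]$, solve the resulting scalar additive RH problem by a Cauchy integral, and extract the density by Plemelj followed by integration by parts. This is exactly what the paper does with $N(s)=\widetilde J(s)G(\widetilde J(s))$ outside $\widetilde\gamma$ and $N(s)=\widetilde J(s)\widetilde G(\widetilde J(s))$ inside.

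Two bookkeeping points in your sketch are off and would need correcting in execution. First, the additive jump of $N$ on $\widetilde\gamma$ is $N_+(s)+N_-(s)=\widetilde J(s)V'(\widetilde J(s))$, with no factor of $2$ and no extra constant: the two boundary values of $N$ across $\widetilde\gamma$ consist of one $G$-boundary value (from outside) and one $\widetilde G$-boundary value (from inside), and the EL relation $G_\pm(x)+\widetilde G_\mp(x)=V'(x)$ gives the jump directly; your $\Omega=z(G+\widetilde G)$ double-counts. Second, in the Cauchy representation the condition $N(\infty)=1$ (from $G(z)\sim 1/z$) is automatic and does not constrain $(c_0,c_1)$; equation (\ref{eqc0}) comes from the condition $N(0)=\theta$, which encodes $\widetilde G(z)\sim\theta/z$ as $z\to\infty$ in $\mathbb{H}_\theta$, while (\ref{eqc1}) comes from $N(-1)=0$, which uses regularity of $\widetilde G$ at $z=0$ (here $a>0$ is essential, as you note). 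Finally, one should not skip the step that the solution of the scalar jump problem is unique only after imposing boundedness near $s_a$ and $s_b$; the paper justifies this boundedness from the variational conditions, and without it the Cauchy formula is not singled out among solutions differing by terms with $(s-s_a)^{-1/2}$ or $(s-s_b)^{-1/2}$ singularities.
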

\begin{remark}
By a similar computation as in Remark \ref{remark: local}, one observes that the equilibrium density vanishes generically as a square root near the endpoints $a$ and $b$.
\end{remark}

\subsubsection*{Outline}
In Section \ref{section: biOP}, we study the biorthogonal polynomials $p_j$ and $q_j$, and prove the recurrence relation and Christoffel-Darboux formula stated in Theorem \ref{theorem: biOP}. In Section \ref{section: RHP}, we prove the RH characterizations for the biorthogonal polynomials given in Theorem \ref{theorem: RHP1} and Theorem \ref{theorem: RHP2}. In Section \ref{section: eq}, we construct the equilibrium measure $\mu_{V,\theta}$ in the one-cut case with hard edge and without hard edge, which leads to proofs of Theorem \ref{theorem: one-cut reg}, Theorem \ref{theorem: eq measure formula}, and Theorem \ref{theorem: eq measure formula soft edge}. We also study several examples to illustrate the construction and the behavior of the equilibrium measure in particular cases, including examples for the one-cut case with hard edge, the one-cut case without hard edge, and a transition between them. In Section \ref{section: univ} finally, we discuss, without proving any results, various scaling limits of the correlation kernel.

\section{General properties of the bi-orthogonal polynomials $p_j$ and $q_j$}\label{section: biOP}

\subsection{Determinantal and integral expressions for the polynomials}
Let $\theta\geq 1$ and let $w$ be positive on $(0,+\infty)$ and such that $x^kw(x)$ is integrable for every $k\in\mathbb N\cup\{0\}$. Then all the bimoments
\begin{equation}\label{muij}
m_{jk}=\int_0^{+\infty}x^{k+j\theta}w(x)dx, \qquad j,k\in\mathbb N\cup\{0\}
\end{equation}
exist,
and we define
\begin{equation}\label{def Hn}
H_n(w)=\det(m_{jk})_{j,k=0,\ldots, n-1}.
\end{equation}
Write \[\Delta_n(x)=\Delta(x_1,\ldots, x_n),\qquad \Delta_n(x^\theta)=\Delta(x_1^\theta,\ldots, x_n^\theta).\]
In the following proposition, we collect some well-known results about the determinant of $H_n(w)$ and about the biorthogonal polynomials $p_j$ and $q_j$.

\begin{proposition}\label{proposition: biop}
\begin{itemize}
\item[(i)] The determinant $H_n(w)$ can be expressed as
\begin{equation}\label{Hnformula}
H_n(w)=\frac{1}{n!}\int_{[0,+\infty)^n}\Delta_n(x)\cdot \Delta_n(x^\theta) \prod_{j=1}^nw(x_j)dx_j>0.
\end{equation}
\item[(ii)] The polynomials $p_j$ and $q_j$ have the following determinantal expressions:
\begin{equation}\label{det pj}
p_j(x)=\frac{1}{\sqrt{H_j(w)H_{j+1}(w)}}\det\begin{pmatrix}m_{00}&m_{01}&m_{02}&\ldots &m_{0j}\\
m_{10}&m_{11}&m_{12}&\ldots &m_{1j}\\
\vdots & \vdots &\vdots & &\vdots \\
m_{j-1,0}&m_{j-1,1}&m_{j-1,2}&\ldots &m_{j-1,j}\\
1&x&x^2&\ldots &x^j
\end{pmatrix},
\end{equation}
and\begin{equation}\label{det qj}
q_j(x)=\frac{1}{\sqrt{H_j(w)H_{j+1}(w)}}\det\begin{pmatrix}m_{00}&m_{01}&\ldots &m_{0,j-1} &1\\
m_{10}&m_{11}&\ldots &m_{1,j-1}&x\\
\vdots & \vdots & & \vdots&\vdots \\
m_{j,0}&m_{j,1}&\ldots & m_{j,j-1}&x^j
\end{pmatrix}.
\end{equation}
\item[(iii)] The polynomials $p_j$ and $q_j$ have the following multiple integral representations:
\begin{align}
&\label{int pj}p_j(x)=\frac{1}{j!\sqrt{H_j(w)H_{j+1}(w)}}\int_{[0,+\infty)^j}\prod_{k=1}^j(x-x_j)\Delta_j(x) \Delta_j(x^\theta)\prod_{k=1}^jw(x_k)dx_k,\\
&\label{int qj}q_j(x^\theta)=\frac{1}{j!\sqrt{H_j(w)H_{j+1}(w)}}\int_{[0,+\infty)^j}\prod_{k=1}^j(x^\theta-x_j^\theta)\Delta_j(x) \Delta_j(x^\theta) \prod_{k=1}^jw(x_k)dx_k.
\end{align}
\end{itemize}
\end{proposition}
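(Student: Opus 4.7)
All three claims are classical Heine/Andreief-type identities, so the plan is to prove part (i) by a direct application of the Andreief (generalized Cauchy--Binet) identity, then derive (ii) by verifying the defining orthogonality of the determinantal ansatz, and finally obtain (iii) by re-running the Andreief argument after using the Vandermonde absorption trick $\prod_{k=1}^{j}(x-x_{k})\,\Delta_{j}(x_{1},\ldots ,x_{j})=\Delta_{j+1}(x_{1},\ldots ,x_{j},x)$.

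For (i), I would write $\Delta_{n}(x)=\det(x_{i}^{k-1})_{i,k=1,\ldots ,n}$ and $\Delta_{n}(x^{\theta})=\det(x_{i}^{\theta(l-1)})_{i,l=1,\ldots ,n}$, then invoke Andreief to rewrite
\[
\int_{[0,+\infty)^{n}}\!\Delta_{n}(x)\,\Delta_{n}(x^{\theta})\prod_{j=1}^{n}w(x_{j})\,dx_{j}
=n!\,\det\!\Bigl(\int_{0}^{+\infty}\!x^{(k-1)+\theta(l-1)}w(x)\,dx\Bigr)_{k,l}
=n!\,H_{n}(w).
\]
Positivity comes from the fact that the integrand is symmetric (product of two antisymmetric factors), and on the Weyl chamber $0<x_{1}<\cdots <x_{n}$ both $\Delta_{n}(x)$ and $\Delta_{n}(x^{\theta})$ are strictly positive since $x\mapsto x^{\theta}$ is increasing on $[0,+\infty)$. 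In particular $H_{n}(w)>0$, so the bimoment matrix is nonsingular and the biorthogonal families $\{p_{j}\}$, $\{q_{j}\}$ exist and are unique.

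For (ii), I would take the determinantal formula for $p_{j}$ as an ansatz, call it $\widehat p_{j}$, and check three things: (a) it is a polynomial of degree $j$; (b) its leading coefficient is positive and equals $\sqrt{H_{j}(w)/H_{j+1}(w)}$ (expanding along the last row, the coefficient of $x^{j}$ is $H_{j}(w)/\sqrt{H_{j}H_{j+1}}$); (c) for $0\le \ell\le j-1$, one has $\int_{0}^{+\infty}\widehat p_{j}(x)\,x^{\ell\theta}w(x)\,dx=0$. Claim (c) follows by pulling the integral inside the expansion along the last row: the resulting determinant has its last row replaced by $(m_{\ell,0},m_{\ell,1},\ldots ,m_{\ell,j})$, which duplicates row $\ell$ and therefore vanishes. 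For $\ell=j$ the same manipulation produces $H_{j+1}(w)/\sqrt{H_{j}H_{j+1}}=\sqrt{H_{j+1}/H_{j}}=\kappa_{j}^{-1}$, so that $\int \widehat p_{j}\,x^{j\theta}w\,dx=\kappa_{j}^{-1}$, and hence $\int \widehat p_{j}\,q_{j}(x^{\theta})w\,dx=1$. Since $\widehat p_{j}$ annihilates $1,x^{\theta},\ldots ,x^{(j-1)\theta}$ against $w$, it is biorthogonal to $q_{0},\ldots ,q_{j-1}$ and correctly normalized against $q_{j}$, so by uniqueness $\widehat p_{j}=p_{j}$. The argument for $q_{j}$ is identical, with the roles of rows and columns (and of $1,x,\ldots ,x^{\theta (j-1)}$ and $1,x,\ldots ,x^{j}$) swapped.

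For (iii), the key observation is the identity
\[
\prod_{k=1}^{j}(x-x_{k})\,\Delta_{j}(x_{1},\ldots ,x_{j})=\Delta_{j+1}(x_{1},\ldots ,x_{j},x),
\]
so that $\prod(x-x_{k})\,\Delta_{j}(x)\,\Delta_{j}(x^{\theta})$ becomes the product of a $(j+1)\times (j+1)$ Vandermonde determinant (with last row $(1,x,\ldots ,x^{j})$) and a $j\times j$ determinant in $x^{\theta}$. I would expand the $(j+1)\times (j+1)$ determinant along its last row and apply Andreief to each of the $j\times j$ minor integrals. This reassembles exactly the cofactor expansion along the last row of the determinant appearing in (ii) for $p_{j}$, giving (iii) after the prefactor $1/(j!\sqrt{H_{j}H_{j+1}})$ cancels the $j!$ from Andreief. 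The formula for $q_{j}(x^{\theta})$ follows by the symmetric argument, expanding $\prod(x^{\theta}-x_{k}^{\theta})\,\Delta_{j}(x^{\theta})=\Delta_{j+1}(x_{1}^{\theta},\ldots ,x_{j}^{\theta},x^{\theta})$ along the row containing $x^{\theta}$.

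There is no real obstacle here; the only point requiring care is the bookkeeping of signs, indices, and the $\sqrt{H_{j}H_{j+1}}$ normalization so that the leading coefficients of $p_{j}$ and $q_{j}$ come out equal and positive, as required by the convention fixed in (\ref{orth}).
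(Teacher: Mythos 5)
Your proof is correct and follows exactly the standard Heine--Andreief route that the paper itself indicates (the paper does not spell out a proof, but simply notes that the properties are proved in analogy with the classical orthogonal-polynomial case and cites references that use precisely this argument). Your bookkeeping of the leading coefficients, the row-duplication argument for the orthogonality in (ii), and the Vandermonde-absorption reduction in (iii) are all the intended steps.
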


Those properties are straightforward to prove in analogy to similar properties for usual orthogonal polynomials, see for example \cite{Johansson} for (i), \cite{AvHvM, Bertola, DF, IserlesNorsett, Kuijlaars} for (ii) and \cite{DF, Kuijlaars} for (iii). It is also worth noting that it has been proved in \cite{IserlesNorsett} that the polynomials $p_j$ have positive and simple zeros, and that the zeros of $p_j$ and $p_{j-1}$ interlace.

\subsection{Recurrence relation and Christoffel-Darboux formula for $\theta\in\mathbb Q$}
Assume that $\theta=a/b$ with $a> b$ positive integers.
Let $p_j$ and $q_j$ be the biorthogonal polynomials defined by (\ref{orth}). After the substitution $u=x^{1/b}$ in (\ref{orth}), we obtain
\begin{equation}\label{orth2}
\int_0^{+\infty}p_j(u^b)q_k(u^a)\widetilde w(u)du=\delta_{jk},\qquad \widetilde w(u)=bu^{b-1}w(u^b).
\end{equation}
Writing
\begin{equation}
<f,g>\equiv \int_0^{+\infty}f(u^b)g(u^a)\widetilde w(u)du,
\end{equation}
we have the relation
\begin{equation}\label{blockHankel}
< x^a f(x), g(x)> = < f(x), x^bg(x)>,
\end{equation}
and the semi-infinite moment matrix \[\mu_{\infty} \equiv (\mu_{ij})_{i, j=0,1,\ldots}\equiv (<x^i,x^j>)_{i, j=0,1,\ldots}\] is block-Hankel with rectangular $a\times b$ blocks.
In terms of the shift operator $\Lambda = (\delta_{i+1, j})_{i,j=0,1,\ldots}$, we have
\beq\label{blockmu}
\Lambda^a\mu_{\infty} = \mu_{\infty}\left(\Lambda^t\right)^b.
\eeq
Define the semi-infinite column vectors
of biorthogonal polynomials
\begin{align}
&\mathbf{p}(x) = (p_0(x), p_1(x), p_2(x), \dots)^t,\\&\mathbf{q}(x) = (q_0(x), q_1(x), q_2(x), \dots)^t.
\end{align}
Those vectors can be expressed in the basis of monomials as
\begin{equation}\label{vec_bi-orth}
\mathbf{p}(x)=S_1\chi(x)\qquad\mathbf{q}(x)= \left(S_2^{-1}\right)^t\chi(x),
\end{equation}
where $\chi(x) = (1, x, x^2, \dots)^t$, where $S_1$ is a lower triangular semi-infinite matrix, and $S_2$ is an upper triangular semi-infinite matrix.
Since \beq\label{bi-orth}
\langle p_j, q_k\rangle = \delta_{jk},
\eeq
the semi-infinite moment matrix $\mu_\infty$ can be written in the factorized form
\beq\label{Borel}
\mu_\infty = S_1^{-1}S_2.
\eeq
The dressed shift operators
\begin{equation}\label{L}
L_1 = S_1\Lambda S_1^{-1},\qquad
L_2 = S_2\Lambda^tS_2^{-1}
\end{equation}
play the role of recursion operators for the biorthogonal polynomials. Indeed from \eqref{vec_bi-orth} we obtain
\beq\label{matrix_rec}
L_1\mathbf{p}(x) = x\mathbf{p}(x)\qquad  L_2^t \mathbf{q}(x) = x \mathbf{q}(x).
\eeq
The recursion operators \eqref{L} satisfy the identity
\beq\label{bi-graded}
L_1^a = L_2^b.
\eeq
This follows from \eqref{blockmu} and \eqref{Borel}:
$$
L_1^a = S_1\Lambda^aS_1^{-1} = S_1\Lambda^a\mu_{\infty}S_2^{-1} = S_1\mu_{\infty}\left(\Lambda^t\right)^bS_2^{-1} = S_2\left(\Lambda^t\right)^bS_2^{-1} = L_2^b.
$$
In order to prove the recursion relation (\ref{rec0}), note that
$$
\langle x^a p_n(x), x^k\rangle = \langle p_n(x), x^{k+b}\rangle = 0 \qquad \text{for}\;\; k< n-b.
$$
Therefore the expression of $x^ap_n(x)$ as a linear combination of left-orthogonal polynomials $p_j$ can only involve polynomials of degree at least $n-b$. Noting that the left hand side is of degree $n+a$, we obtain \eqref{rec0}. The recursion relation for the right-orthogonal polynomials $q_j$ is obtained in the same way.\\
To prove the relation \eqref{u_uhat0} between the coefficients of the recursion relations, note first by (\ref{rec0})-(\ref{rec2}) that
\begin{equation}\label{uv}
u_j(k)=\langle x^a p_k(x), q_{k+a-j}(x)\rangle,\qquad v_j(k)=\langle p_{k+b-j}(x), x^b q_k(x)\rangle.\end{equation}
On the other hand we have that
\begin{equation}
x^ap_k(x)=\left(L_1^a\mathbf{p}(x)\right)_k,\qquad x^bq_k(x)=\left(L_2^{bt}\mathbf{q}(x)\right)_k,
\end{equation}
and by (\ref{uv}) we obtain
\begin{equation}\label{uv2}
u_j(k)=\left(L_1^a\right)_{k,k+a-j},\qquad v_j(k)=\left(L_2^b\right)_{k+b-j,k}.\end{equation}
By (\ref{bi-graded}), we have $u_j(k)=v_{a+b-j}(k+a-j)$.
This proves the recurrence relation (\ref{rec0}) and the relation between the recurrence coefficients (\ref{u_uhat0}).

\medskip

The Christoffel-Darboux formula is a consequence of the recurrence relation: consider $(x^a-y^b)\sum_{k=0}^{n-1}p_k(x)q_k(y)$ and use the recurrence relations (\ref{rec0}). Cancellations take place by (\ref{u_uhat0}) and one obtains the formula
\begin{multline}
\sum_{k=0}^{n-1}p_k(x)q_k(y)=\frac{1}{x^a-y^b}\left(
\sum_{\ell=1}^{a}\sum_{k=n-\ell}^{n-1}u_{a-\ell}(k)p_{k+\ell}(x)q_k(y)\right. \\
\left. -\sum_{\ell=1}^b\sum_{k=n-\ell}^{n-1}u_{a+\ell}(k+\ell)p_k(x)q_{k+\ell}(y)
\right).\end{multline}
Substituting $y^\theta$ for $y$, we arrive at (\ref{CD}).
This completes the proof of Theorem \ref{theorem: biOP}.

\section{RH problems for $p_j$ and $q_j$}\label{section: RHP}

As stated in Theorem \ref{theorem: RHP1} and Theorem \ref{theorem: RHP2}, the polynomials $p_j$ and $q_j$ can be characterized in terms of two $1\times 2$ vector RH problems. Those RH problems have some non-standard properties compared to those for orthogonal polynomials. It is our hope that the RH problems are suitable for a Deift/Zhou asymptotic analysis following the methods developed in \cite{ClaeysWang}.

\medskip

\subsection{Proof of Theorem \ref{theorem: RHP1}}
Assume that $w(x)=x^\alpha e^{-nV(x)}$ with $\alpha>-1$, and that $\theta>1$.
Define, like in (\ref{def Cp0}),
\begin{equation} \label{def Cp}
  C f(z)  \equiv  \frac{1}{2\pi i} \int_{0}^{+\infty} \frac{f(x)}{x^\theta - z^\theta} w(x) dx,
\end{equation}
defined in $\mathbb H_\theta\setminus[0,+\infty)$ and with $z^\theta$ corresponding to arguments between $-\pi/\theta$ and $\pi/\theta$.

We first show that $Y=\begin{pmatrix}\frac{1}{\kappa_j}p_j& \frac{1}{\kappa_j}Cp_j\end{pmatrix}$ indeed solves the RH problem for $Y$ stated in Theorem \ref{theorem: RHP1}.
Conditions (a) and (e) are clearly satisfied by the definition (\ref{def Cp}).
Condition (b) follows from the relation
\begin{equation}\label{jump C}
 (C p_j)_+(x)-(C p_j)_-(x)= \frac{1}{\theta x^{\theta -1}}p_j(x)w(x),\qquad x\in (0,+\infty),
\end{equation}
which is easily seen by Cauchy's theorem.
Condition (c1) is valid since $Y_1$ is a monic polynomial of degree $j$.
For condition (c2), note that as $z\to\infty$ in $\mathbb H_\theta\setminus[0,+\infty)$,
\begin{equation}\label{Cpas}
Cp_j(x)=-\frac{1}{2\pi i }\sum_{k=1}^{j-1}\frac{1}{z^{(k+1)\theta}}\int_0^{+\infty}p_j(x)x^{k\theta}w(x)dx +\bigO(z^{-(j+1)\theta}),
\end{equation}
and by orthogonality each of the integrals in the sum vanishes. Condition (d) is straightforward to verify using (\ref{def Cp}) and the fact that $w(x)=x^\alpha e^{-nV(x)}$.

\medskip

For the proof of the uniqueness of the RH solution, let us assume that $Y=(Y_1,Y_2)$ is a solution to the RH problem for $Y$, a priori possibly different from  (\ref{def Y}). Then, since $Y_1$ is an entire function, by the asymptotic behavior (c1) and by Liouville's theorem, $Y_1$ is a monic polynomial of degree $j$. Given $Y_1$, the jump condition (b) gives us the relation
\begin{equation}\label{jump Y2}
Y_{2,+}(x)-Y_{2,-}(x)= \frac{1}{\theta x^{\theta -1}}Y_1(x)w(x),\qquad x\in (0,+\infty).
\end{equation}
Consider the function
\[U(s)=Y_2(s^{1/\theta})-(C Y_1)(s^{1/\theta}).\]
Then $U$ is analytic in $\mathbb C\setminus \mathbb R$.
By (\ref{jump C}), which holds for any polynomial $p_j$, and (\ref{jump Y2}), it follows that $U$ is continuous across $(0,+\infty)$, and by condition (e) it is continuous on $(-\infty,0)$. As $s\to 0$, we have that $U(s)=o(s^{-1})$ by condition (d) and (\ref{def Cp}), and it follows that $U$ is entire. Moreover, it tends to $0$ as $s\to \infty$ by condition (c) and (\ref{def Cp}), and it follows that $U=0$, and thus $Y_2=C Y_1$.

It remains to show that $Y_1$ is orthogonal with respect to $w(x),x^\theta w(x),\ldots, x^{(j-1)\theta}w(x)$. Therefore, recall (\ref{Cpas}), which holds for any polynomial $p_j$. Applying this to the polynomial $Y_1$, (\ref{Cpas}) can only be consistent with condition (c2) if
\[\int_0^{+\infty}Y_1(x)x^{k\theta}w(x)dx=0,\qquad k=0,1,\ldots, j-1.\]
By uniqueness of the biorthogonal polynomial $p_j$ we have
$Y_1=\frac{1}{\kappa_j}p_j$ and $Y_2=\frac{1}{\kappa_j}C p_j$.

\subsection{Proof of Theorem \ref{theorem: RHP2}}

The proof of Theorem \ref{theorem: RHP2} is similar to the one of Theorem \ref{theorem: RHP1}.
The fact that $\widetilde Y$ solves the RH problem stated in Theorem \ref{theorem: RHP2}, follows again from a straightforward verification of all of the properties, where the orthogonality of $q_j$ is needed for condition (c2).
The uniqueness follows from similar considerations as before: first one uses conditions (a),  (b), (c1), and (e) to show that $\widetilde Y_1$ is a monic polynomial of degree $j$ in $z^\theta$, then one shows that $\widetilde Y_2=\widetilde C \widetilde Y_1$ using conditions (b), (d), and (e), and finally one shows that $\widetilde Y_1$ is biorthogonal using condition (c2).

\section{Equilibrium problem}\label{section: eq}
\subsection{The mapping $J$}\label{section: J}

Recall the transformation $J=J_c$ defined in (\ref{def J}).
As already explained in the introduction, $J$ has two critical points on the real line, $0$ and $s_b = \frac{1}{\theta}$, which are mapped to $-1$ and
\begin{equation}\label{b2}
b = c\frac{(1+\theta)^{1+1/\theta}}{\theta}.
\end{equation}
Let us now investigate which parts of the complex plane are mapped to $[0,b]$.
Writing $s=re^{i\phi}$ with $0\leq \phi<2\pi$, we have
\[\arg J(s)=(1+\frac{1}{\theta})\arg ( 1 + re^{i\phi})-\frac{1}{\theta}\phi.\] For this to vanish, we need that
\[(1+\frac{1}{\theta})\arg (1+r\cos\phi  + ir\sin\phi)=\frac{1}{\theta}\phi.\]
For $0<\phi<\pi$, the left hand side of this equation is an increasing function of $r$, which is $0$ at $r=0$, and which tends to $(1+\frac{1}{\theta})\phi$ as $r\to\infty$. By continuity, there exists, for any $\phi\in (0,\pi)$, a unique value of $r=r(\phi)$ such that $s=r(\phi)e^{i\phi}$ is mapped to the positive half-line by $J$. The equation $r=r(\phi)$ describes a curve $\gamma_1$ in the upper half plane connecting $-1$ with $s_b$, which is mapped bijectively to $(0,b)$ by $J$, and the same is true for $\gamma_2=\overline{\gamma_1}$. It is also easy to see that $\gamma$, the counterclockwise oriented union of $\gamma_1$ and $\gamma_2$, lies inside the unit circle for $\theta>1$, and that it is independent of $c>0$. The curve $\gamma$ is shown in Figure \ref{figure: J1} for different values of $\theta$. The half-line $(s_b,+\infty)$ is mapped bijectively to $(b,+\infty)$, and
the upper and lower side of the interval $[-1,0]$ are mapped bijectively to the boundary of the region $\mathbb H_\theta$ defined in (\ref{Htheta}).
\begin{figure}[t]
\begin{center}
\includegraphics[scale=0.5,angle=0]{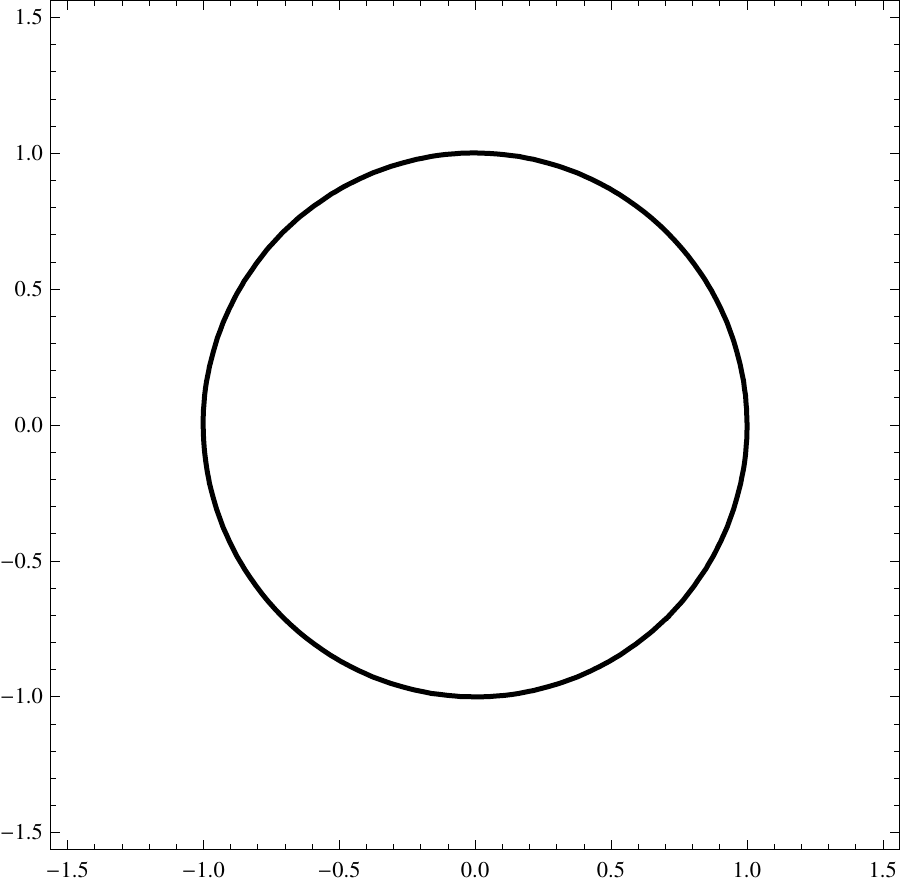}
\includegraphics[scale=0.5,angle=0]{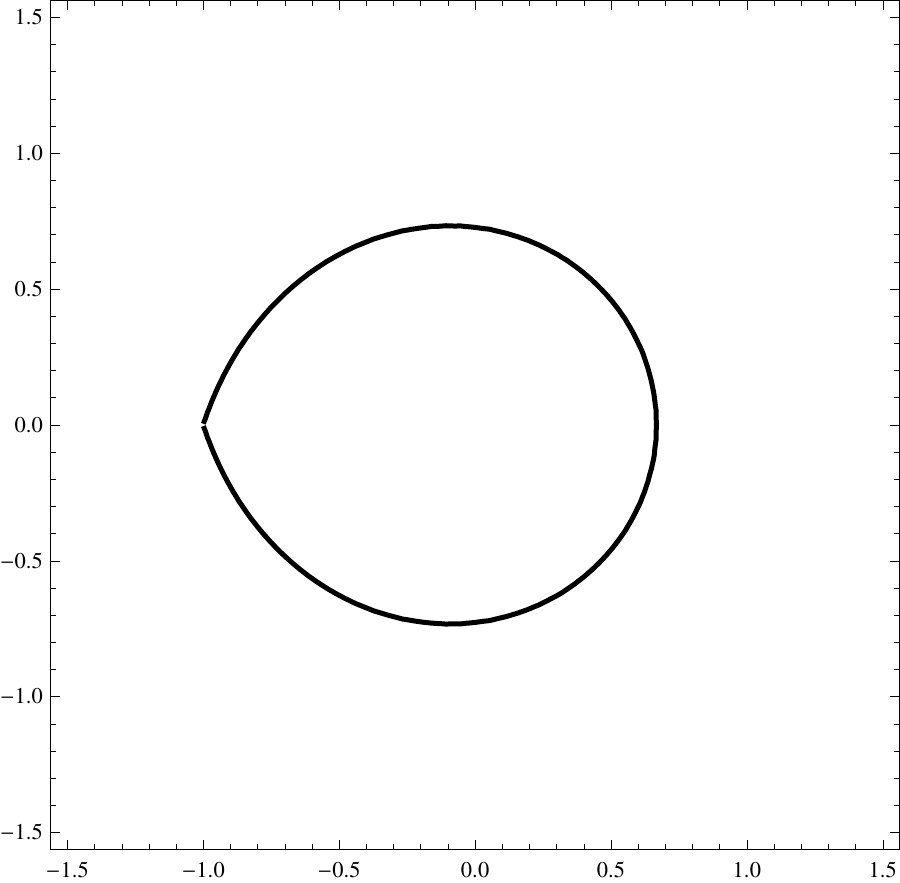}
\includegraphics[scale=0.5,angle=0]{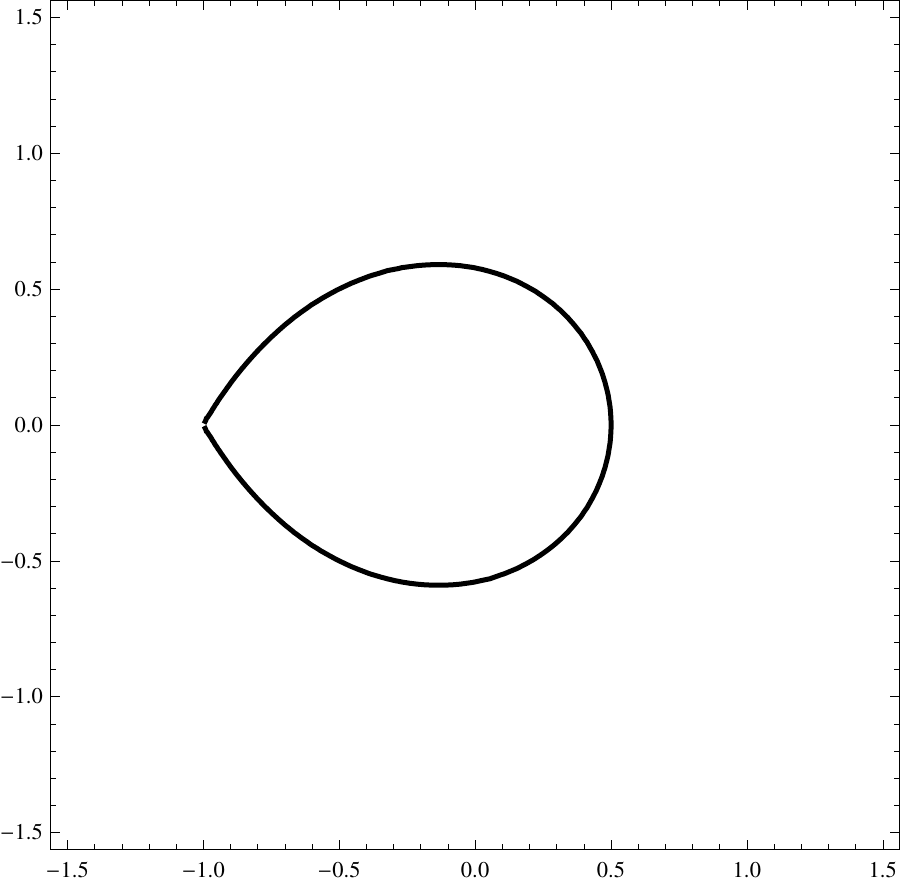}
\includegraphics[scale=0.5,angle=0]{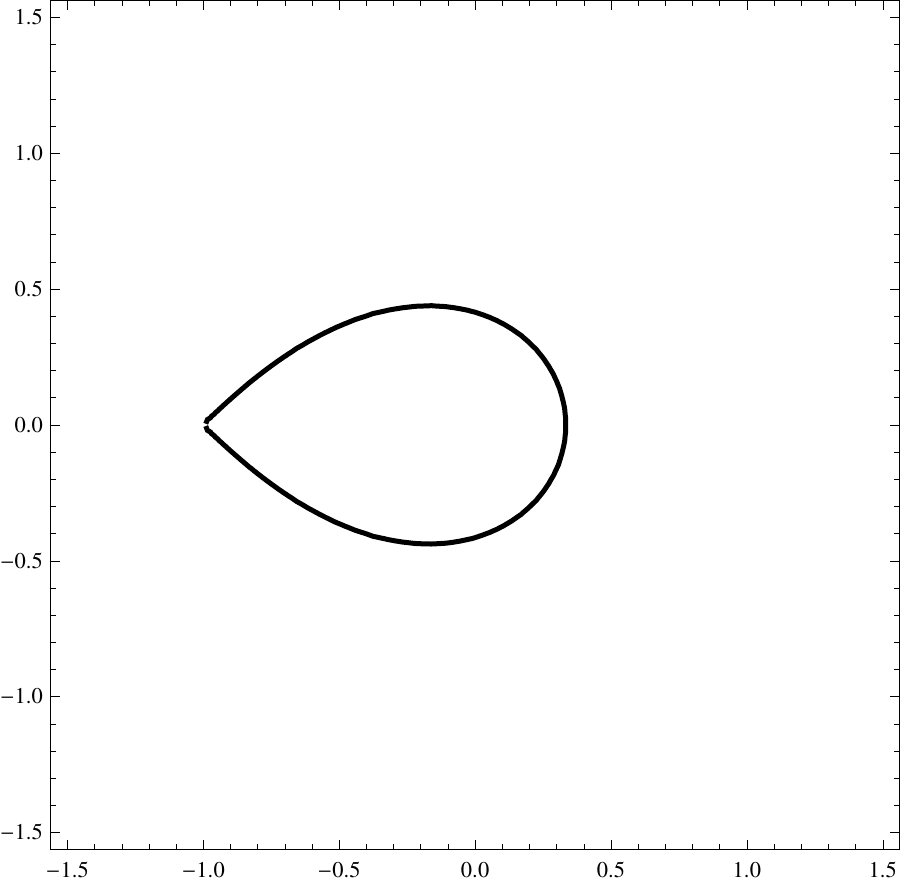}
\includegraphics[scale=0.5,angle=0]{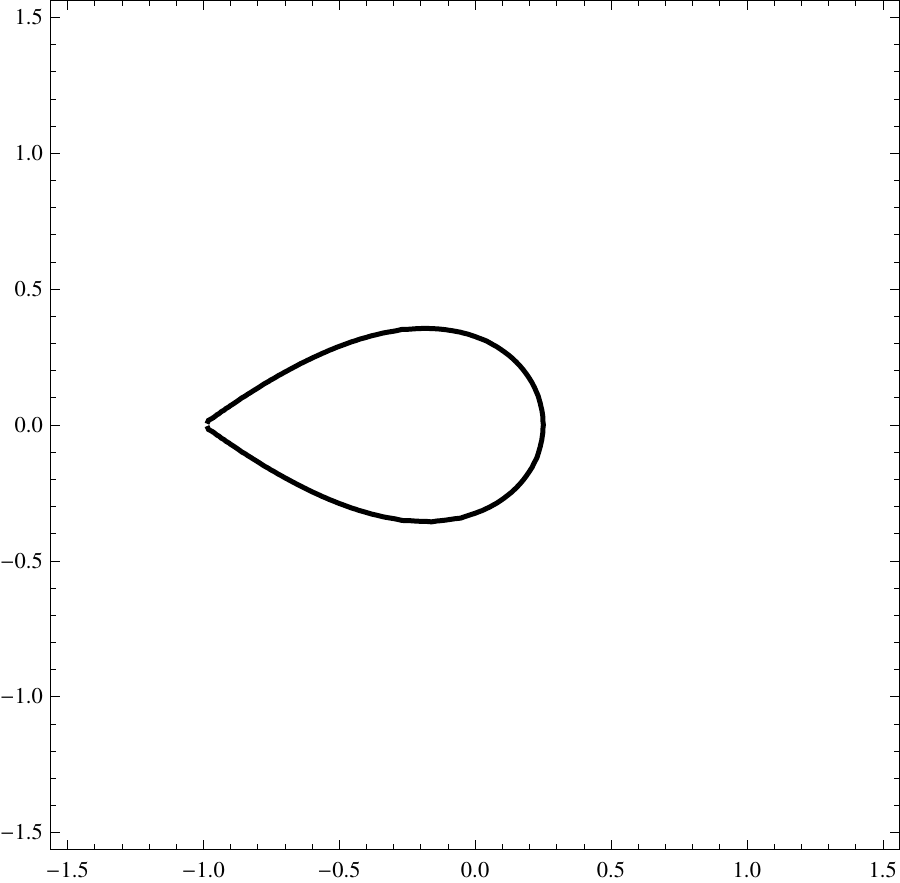}
\includegraphics[scale=0.5,angle=0]{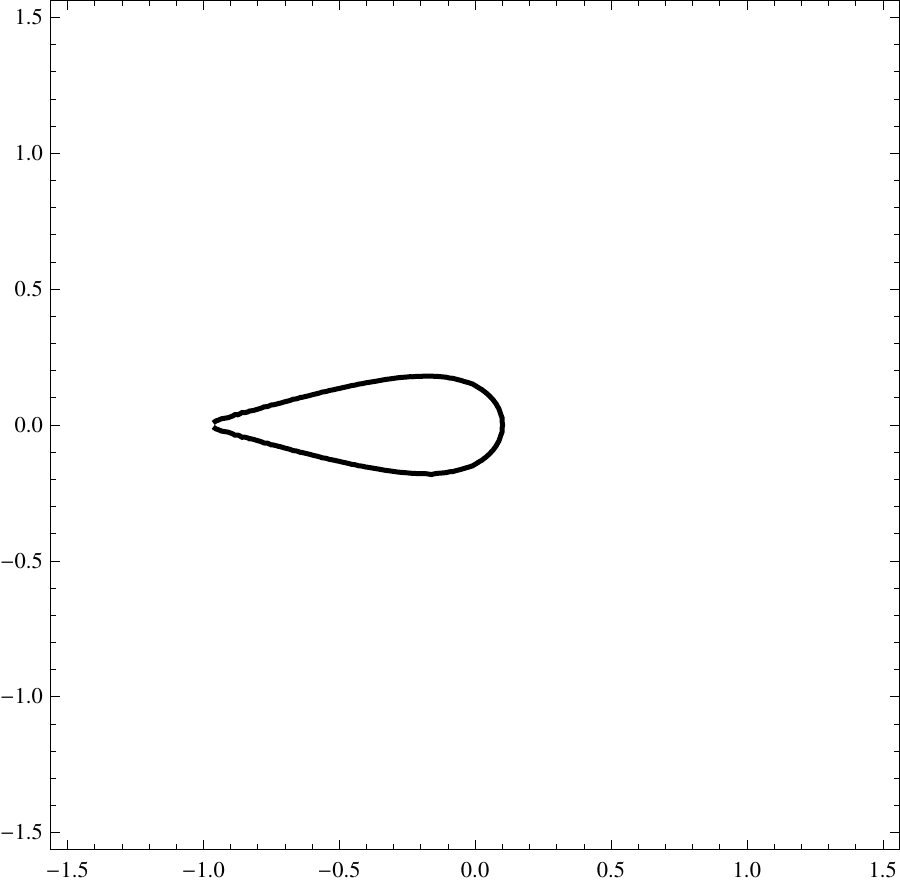}
\end{center}
\caption{The curve $\gamma$ for $\theta=1$ (top left), $\theta=1.5$, $\theta=2$, $\theta=3$, $\theta=4$, and $\theta=10$ (bottom right).}
\label{figure: J1}
\end{figure}

We thus observe that
$J$ maps the region $D\setminus[-1,0]$ to a subset of $\mathbb H_\theta\setminus [0,b]$ and that it is bijective between the boundaries of the regions.
Let $a\in \mathbb H_\theta\setminus[0,b]$ and let $\Sigma$ be a counterclockwise oriented closed curve in $D\setminus[-1,0]$ sufficiently close to its boundary for $J$ to be bijective on $\Sigma$, and such that $J(\Sigma)$ winds around $a$. Then
$\frac{1}{2\pi i}\int_\Sigma \frac{J'(s)}{J(s)-a}ds=\int_{J(\Sigma)}\frac{dz}{z-a}$.
The right hand side of this equation is equal to $2\pi i$, and the left hand side is equal to the number of points $s$ inside $\Sigma$ such that $J(s)=a$ by the residue theorem. This number has to be $1$, and by letting $\Sigma$ approach the boundary of $D\setminus[-1,0]$, one can conclude that there is a unique point in $D\setminus[-1,0]$ which is mapped to $a$. Thus, $J$ is bijective.
In order to prove bijectivity between $\mathbb C\setminus \overline D$ and $\mathbb C\setminus[0,b]$, one can proceed in a similar way. This gives us the following lemma:
\begin{lemma}\label{lemma J}
Let $D$ be the open region enclosed by the curves $\gamma_1$ and $\gamma_2$. $J$ maps $\mathbb C\setminus \overline D$ bijectively to $\mathbb C\setminus [0,b]$, and it maps $D\setminus[-1,0]$ bijectively to $\mathbb H_\theta\setminus[0,b]$.
\end{lemma}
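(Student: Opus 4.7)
The plan is to prove Lemma \ref{lemma J} using the argument principle, following the strategy sketched in the paragraph preceding the lemma. The preceding discussion has already established that the boundary arcs $\gamma_1, \gamma_2$ are mapped bijectively onto $[0,b]$, that the real interval $[s_b,+\infty)$ is mapped bijectively to $[b,+\infty)$, and that the two edges of $[-1,0]$ are mapped to the two boundary rays $\{|\arg z| = \pi/\theta\}$ of $\mathbb{H}_\theta$. The first task is therefore to record carefully that $J$, extended to boundary values, is a continuous bijection from the topological boundary of $D\setminus[-1,0]$ onto the topological boundary of $\mathbb{H}_\theta\setminus[0,b]$ (where the two sides of $\gamma$ give the two sides of $[0,b]$, and the two sides of $[-1,0]$ give the two rays of $\partial\mathbb{H}_\theta$). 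This uses only the explicit formula $\arg J(s)=(1+1/\theta)\arg(1+s)-\phi/\theta$ and the monotonicity argument already laid out in the text.

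Next, to count preimages in the interior region $D\setminus[-1,0]$, I would fix $a\in \mathbb{H}_\theta\setminus[0,b]$ and choose a counterclockwise-oriented closed curve $\Sigma\subset D\setminus[-1,0]$ sufficiently close to the boundary of $D\setminus[-1,0]$ that (i) $J$ is injective on $\Sigma$ and (ii) $J(\Sigma)$ is homotopic, within $\mathbb{H}_\theta\setminus\{a\}$, to the boundary of a small neighborhood of $\partial(\mathbb{H}_\theta\setminus[0,b])$. Then by the substitution $z=J(s)$ together with the residue theorem,
\begin{equation*}
\#\{s \in \text{int}(\Sigma) : J(s)=a\}
= \frac{1}{2\pi i}\int_{\Sigma} \frac{J'(s)}{J(s)-a}\,ds
= \frac{1}{2\pi i}\int_{J(\Sigma)} \frac{dz}{z-a}.
\end{equation*}
The right-hand side is the winding number of $J(\Sigma)$ around $a$. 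By the boundary bijectivity set up in the first step, this winding number equals $1$ when $a\in \mathbb{H}_\theta\setminus[0,b]$ and $0$ otherwise. Letting $\Sigma$ exhaust $D\setminus[-1,0]$ from the inside gives surjectivity onto $\mathbb{H}_\theta\setminus[0,b]$ and injectivity simultaneously.

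For the exterior region $\mathbb{C}\setminus\overline{D}$ and a target point $a\in\mathbb{C}\setminus[0,b]$, I would run the same argument, but now with a contour $\Sigma$ consisting of a large circle $|s|=R$ traversed counterclockwise together with $-\gamma$ (the boundary $\gamma$ traversed in the opposite, clockwise, sense), so that $\Sigma$ bounds an annular approximation of $\mathbb{C}\setminus\overline{D}$. Since $J(s)\sim cs$ at infinity, the image of the large circle winds once around every fixed $a$, while the image of $-\gamma$ traces $[0,b]$ twice in such a way that its winding number about $a\notin[0,b]$ is $0$. Hence the total winding number of $J(\Sigma)$ about $a$ is $1$, and the argument principle again yields a single preimage.

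The main obstacle I anticipate is not the algebra but the topological/boundary bookkeeping: making sure that as $\Sigma$ is pushed to the boundary of $D\setminus[-1,0]$, the image $J(\Sigma)$ actually tracks $\partial(\mathbb{H}_\theta\setminus[0,b])$ once, with the correct orientation, given that both sides of $\gamma$ are folded onto $[0,b]$ and both sides of $[-1,0]$ are opened onto the two rays of $\partial\mathbb{H}_\theta$. This is handled by a careful orientation check using the local expansions $J(s)=ce^{\mp\pi i/\theta}(s+1)^{1+1/\theta}(1+O(s+1))$ near $s=-1$ and $J(s)=b+\tfrac{1}{2}J''(s_b)(s-s_b)^2+O((s-s_b)^3)$ near $s=s_b$ noted in Remark \ref{remark: local}, which show that $J$ doubles angles at $s_b$ and folds correctly at $-1$, so the image boundary is indeed traced exactly once.
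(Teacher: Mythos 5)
Your proposal is correct and takes essentially the same approach as the paper: after recording boundary bijectivity of $J$, you apply the argument principle with a closed contour $\Sigma$ in the open region and let $\Sigma$ approach the boundary, concluding that each target has exactly one preimage. You merely flesh out two points that the paper leaves terse — the explicit annular contour $\{|s|=R\}\cup(-\gamma)$ for the exterior case, and the orientation bookkeeping via the local expansions of $J$ near $-1$ and $s_b$ — both of which are sound and consistent with the paper's intent.
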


\subsection{Solution in the one-cut case with hard edge}\label{section: eq hard}
In this section, we will prove Theorem \ref{theorem: eq measure formula} and Theorem \ref{theorem: one-cut reg}.
We first need the following lemma.
\begin{lemma} Let $V$ be twice continuously differentiable on $[0,+\infty)$ and let $V$ satisfy (\ref{growth}) and the conditions
\beq\label{nc}
V''(x)x+V'(x)>0,\quad V''(x)\geq 0\qquad\text{for } x>0.
\eeq
Then there exists a unique $c\in (0, +\infty)$ such that
\beq\label{getc}
f(c)\equiv \frac{1}{2\pi i} \oint_{\gamma}\frac{U_{c}(s)}{s}ds = 1+\theta,
\eeq
where $$
U_c(s) = V'(J_c(s))J_c(s).
$$
\end{lemma}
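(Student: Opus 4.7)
The plan is to reduce the contour integral defining $f(c)$ to a real integral over $[0,b]$ via the parametrization $x=J_c(s)$, then to deduce strict monotonicity of $f$ from assumption~(i) and the geometric properties of $\gamma$ established in Section~\ref{section: J}, and finally to compute the behaviour of $f$ at $0$ and $+\infty$ using the homogeneity $J_c=cJ_1$ together with assumption~(ii) and the growth condition~(\ref{growth}).

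First I would substitute $s=I_+(x)$ on $\gamma_1$ and $s=I_-(x)=\overline{I_+(x)}$ on $\gamma_2$. Taking into account the counterclockwise orientation of $\gamma$ (so that $x$ runs from $b$ to $0$ on $\gamma_1$ and from $0$ to $b$ on $\gamma_2$), and using $\log I_-(x)-\log I_+(x)=-2i\arg I_+(x)$ with $\arg I_+(x)\in(0,\pi)$, one gets
\[
f(c) = -\frac{1}{\pi}\int_0^b V'(x)\,x\,\frac{d\arg I_+(x;c)}{dx}\,dx.
\]
Since $\arg I_+(0;c)=\pi$, $\arg I_+(b;c)=0$, and $xV'(x)\to 0$ as $x\to 0^+$, an integration by parts yields
\[
f(c) = \frac{1}{\pi}\int_0^b \big(V''(x)\,x+V'(x)\big)\,\arg I_+(x;c)\,dx.
\]

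For strict monotonicity, the key observation is $\partial_c J_c(s)=J_c(s)/c$; differentiating under the integral in the definition of $f(c)$ and repeating exactly the same change of variables produces
\[
f'(c) = -\frac{1}{\pi c}\int_0^b \big(V''(x)\,x+V'(x)\big)\,x\,\frac{d\arg I_+(x;c)}{dx}\,dx.
\]
Assumption~(i) makes the parenthesis strictly positive for $x>0$, and the bijective parametrization of $\gamma_1$ by $\phi=\arg s\in(0,\pi)$ from Section~\ref{section: J}, combined with the bijectivity of $J_c\colon\gamma_1\to(0,b)$, shows that $\arg I_+(\,\cdot\,;c)$ is strictly decreasing on $(0,b)$. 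Consequently $f'(c)>0$.

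Finally, to identify the limits I would exploit $J_c=cJ_1$, which gives $I_+(x;c)=I_+^{(1)}(x/c)$ for the inverse $I_+^{(1)}$ corresponding to $c=1$, and substitute $u=x/c$:
\[
f(c) = \frac{c}{\pi}\int_0^{b_1}\big(V''(cu)\,cu+V'(cu)\big)\,\arg I_+^{(1)}(u)\,du,\qquad b_1=\frac{(1+\theta)^{1+1/\theta}}{\theta}.
\]
As $c\to 0^+$, continuity of $V'$ and $V''$ makes the integrand uniformly bounded, so $f(c)=\bigO(c)\to 0$. For $c\to +\infty$, assumption~(ii) says $V'$ is non-decreasing, and combined with (\ref{growth}) this forces $\lim_{x\to+\infty}V'(x)=L\in(0,+\infty]$: indeed a limit $L\le 0$ would make $V'\le 0$ throughout, hence $V$ non-increasing and bounded, violating (\ref{growth}). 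Fixing small $\epsilon>0$ and using $V'(cu)\ge V'(c\epsilon)$ on $[\epsilon,b_1]$, one gets $f(c)\ge (c V'(c\epsilon)/\pi)\int_\epsilon^{b_1}\arg I_+^{(1)}(u)\,du\to+\infty$. Continuity, strict monotonicity, and the intermediate value theorem then give existence and uniqueness of $c$ with $f(c)=1+\theta$.

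The main obstacle is the $c\to+\infty$ analysis: the hypothesis (\ref{growth}) is quite weak, so extracting a positive lower bound on $V'$ at infinity requires carefully combining it with the convexity~(ii). The strict monotonicity of $\arg I_+(\,\cdot\,;c)$ is the other delicate point and rests on the description of $\gamma_1$ from Section~\ref{section: J}.
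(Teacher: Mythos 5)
Your proof is correct and follows essentially the same approach as the paper: both reduce the contour integral to a real integral via the parametrization $s=I_\pm(x)$, deduce $f'(c)>0$ from the positivity of $V''(x)x+V'(x)$ together with the fact that $\arg I_+(\cdot)$ decreases monotonically from $\pi$ to $0$ along $\gamma_1$, and control the limits $c\to 0^+$ and $c\to+\infty$ using $\partial_c J_c=J_c/c$, convexity, and the growth condition (\ref{growth}). Your treatment of $c\to+\infty$ (extracting $\lim_{x\to\infty}V'(x)>0$ from (ii) and (\ref{growth}) and bounding $f(c)$ directly) is a slight variant of the paper's (which bounds $f'(c)$ from below for large $c$), and is in fact a bit cleaner, but the two routes are essentially equivalent.
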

\begin{proof}
We easily verify that $f(c)$ is real for $c\in(0,+\infty)$ and that
$\lim_{c\to 0} f(c)=0$. The latter follows from the fact that $\lim_{c\to 0}U_c(s)=0$ uniformly on $\gamma$.
If we can prove that $f'(c)>0$ for $c>0$ and that $f'(c)>\delta>0$ for $c>0$ large, it follows that $f(c)$ is a bijection from $[0,+\infty)$ to $[0,+\infty)$, which proves the lemma.
We have
\begin{eqnarray*}
f'(c)&=&\frac{d}{dc}\frac{1}{2\pi i}\oint_\gamma\frac{U_c(s)}{s}ds 	\\&=& \frac{1}{2\pi i}\oint_\gamma\Big(V''(J_c(s))J_c(s)+V'(J_c(s))\Big)\frac{J_c(s)}{c}d\log s\\
												&=&-\frac{1}{\pi c}\int_{\gamma_1}\Big(V''(J_c(s))J_c(s)+V'(J_c(s))\Big)J_c(s)\text{ Im }d\log s\\
												&=&-\frac{1}{\pi c}\int_0^{c\frac{(1+\theta)^{1+1/\theta}}{\theta}}\Big(V''(x)x+V'(x)\Big)x\;d\;\text{arg}\,I_+(x).
\end{eqnarray*}
In the second line we used the fact that $\gamma_1$ and $\gamma_2$ are complex conjugates and that $V(J_c(s))$ and $J_c(s)$ are real on $\gamma$. In the last line we made the change of variable $s = I_+(x)$. By inspection of the function $J_c$ one can check that $\text{arg}\,I_+(x)$ strictly decreases from $\pi$ to $0$ along $\gamma_1$. Using \eqref{nc} we conclude that indeed $f'(c)>0$. Moreover, by the convexity of $V$ and (\ref{growth}), we have that $V''(x)$ and $V'(x)$ are positive for large $x$, and hence $V''(x)x+V'(x)$ is larger than a positive constant for $x$ large. This implies that  $f'(c)>\delta>0$ for $c$ sufficiently large,
 which completes the proof.
\end{proof}

\subsubsection{Proof of Theorem \ref{theorem: eq measure formula}}
In this section, we assume that $\mu$ is a probability measure supported on $[0,b]$ with a continuous density $\psi$ and which satisfies the Euler-Lagrange conditions (\ref{EL1})-(\ref{EL2}).
We introduce the logarithmic transforms
\begin{align}
&\label{g}g(z)\equiv\int_0^b\log(z-y)d\mu(y),&z \in{\mathbb{C}\setminus(-\infty, b]},\\
&\label{tilde g}\widetilde{g}(z)\equiv\int_0^b\log(z^\theta - y^\theta)d\mu(y),&z\in\mathbb{H}_\theta\setminus[0, b],
\end{align}
where we choose the logarithms corresponding to arguments between $-\pi$ and $\pi$.
By (\ref{EL1}) and the fact that $\mu$ is a probability measure, it follows that $g$ and $\widetilde g$ satisfy the following conditions.
\subsubsection*{RH problem for $g$ and $\widetilde g$}
\bit
\item[(a)] $(g, \widetilde{g})$ is analytic in $(\mathbb{C}\setminus [-\infty, b], \mathbb{H}_\theta\setminus[0, b])$.
\item[(b)] Writing $g_+, g_-, \widetilde g_+, \widetilde g_-$ for the boundary values of $g$ and $\widetilde g$ when approaching $(-\infty,b)$ and $(0,b)$ from above (for $+$) or below (for $-$), we have the relations
\begin{align}&g_\pm(x)+\widetilde{g}_\mp(x) = V(x)+\ell,&\mbox{ for $x\in (0, b)$},\\
&\widetilde{g}(e^{-i\pi/\theta}x) = \widetilde{g}(e^{i\pi/\theta}x)-2\pi i,&\mbox{ for $x>0$,}\\
&g_+(x)=g_-(x)+2\pi i,&\mbox{ for $x<0$}.
\end{align}
\item[(c1)] As $z\to\infty$, $g(z) = \log z+\bigO(z^{-1})$.
\item[(c2)] As $z\to \infty$ in the sector $\mathbb H_\theta$, we have
	$\widetilde{g}(z) = \theta\log z+\bigO(z^{-\theta})$.
\eit
Let
\begin{equation}\label{G}
G(z) = g'(z)\qquad \widetilde{G}(z) = \widetilde{g}'(z).
\end{equation}
Then the RH conditions for $(g,\widetilde g)$ imply
\subsubsection*{RH problem for $(G,\widetilde G)$}
\bit
\item[(a)] $(G, \widetilde{G})$ is analytic in $(\mathbb{C}\setminus [0, b], \mathbb{H}_\theta\setminus[0, b])$.
\item[(b)] $G_\pm(x)+\widetilde{G}_\mp(x) = V'(x)$ for $x\in (0, b)$.\\
	$\widetilde{G}(e^{-i\pi/\theta}x) = e^{2\pi i/\theta}\widetilde{G}(e^{i\pi/\theta}x)$ for $x>0$.
\item[(c1)] $G(z) = \frac{1}{z} + O(z^{-2})$ for $z\to\infty$.
\item[(c2)]
	$\widetilde{G}(z) = \frac{\theta}{z} + O(z^{-1-\theta})$ for $z\to\infty$ in $\mathbb H_\theta$.
\eit
A crucial observation to solve this RH problem, is that we can use $J_c$ to pull back the domain  $(\mathbb C\setminus[0,b], \mathbb H_\theta\setminus[0,b])$ to $\mathbb C\setminus(\gamma\cup[-1,0])$. In this way we can transform the RH problem for $(G,\widetilde G)$ to a scalar RH problem in the complex plane. Namely, let
\begin{equation}\label{def M}
M(s)\equiv
\begin{cases}
G(J_c(s))	&\text{ for $s$ outside }\gamma,\\
\widetilde{G}(J_c(s)) &\text{ for $s$ inside }\gamma.
\end{cases}
\end{equation}
Then, relying on Lemma \ref{lemma J}, we have that $M$ solves the Riemann-Hilbert problem
\subsubsection*{RH problem for $M$}
\bit
\item[(a)] $M$ is analytic in $\mathbb{C}\setminus\left(\gamma\cup\left[-1,0\right]\right)$.
\item[(b)] $M_+(s)+M_-(s) = V'(J_c(s))$ for $s\in\gamma\setminus\{-1,s_b\}$.\\
	$M_+(x)=e^{2\pi i/\theta}M_-(x)$ for $x\in (-1, 0)$, with $(-1,0)$ oriented from left to right.
\item[(c1)] $M(s) = \frac{1}{J_c(s)}\left(1+o(1)\right)$ as $s\to\infty$.
\item[(c2)]
	$M(s) = \frac{\theta}{J_c(s)}\left(1+o(1)\right)$ as $s\to 0$.
\eit
We can solve this RH problem explicitly. To that end, we let
\begin{equation}\label{def N}
N(s) = J_c(s)M(s).
\end{equation}
By the multiplication with $J_c(s)$, the jump along $(-1,0)$ is eliminated and the RH problem is transformed to the following form:
\subsubsection*{RH problem for $N$}
\bit
\item[(a)] $N$ is analytic in $\mathbb{C}\setminus\gamma$.
\item[(b)] $N_+(s) + N_-(s) = J_c(s)V'(J_c(s)) = U_c(s)$ for $s\in\gamma\setminus\{-1,s_b\}$.
\item[(c1)] $N(s)\to 1$ as $s\to\infty$.
\item[(c2)]$N(0)=\theta$.
\eit

A solution to the RH solution $N$ can be constructed immediately as follows. The function defined by
\begin{equation}\label{sol N}
N_0(s) = \begin{cases}
-\frac{1}{2\pi i}\oint_\gamma\frac{U_c(\xi)}{\xi-s}d\xi +1	&\text{outside }\gamma\\
\frac{1}{2\pi i }\oint_\gamma\frac{U_c(\xi)}{\xi-s}d\xi -1	&\text{inside }\gamma
\end{cases}
\end{equation}
satisfies conditions (a), (b), and (c1).
In order to have condition (c2) as well, we need $c$ to satisfy \eqref{getc}. We also have (\ref{b}) since $b=J_c(s_b)$.

\medskip

There is, however, a caveat: the solution to the RH problem for $N$ is not unique without imposing additional conditions about the behavior of $N$ near $-1$ and $s_b$, and therefore we cannot be sure that the function $N$ constructed by (\ref{g})-(\ref{tilde g}), (\ref{G}), (\ref{def M}), and (\ref{def N}) is the same as the function $N_0$ defined in (\ref{sol N}).
If we assume for a moment that $N=N_0$, we can reconstruct the density $\psi$ of $\mu$ by a direct computation. Indeed, by (\ref{g}) and (\ref{G}), for $x\in[0, b]$ we have
\begin{align}
\psi(x) 	&=-\frac{1}{2\pi i}(G_+(x)-G_-(x))=-\frac{1}{2\pi i x}(N_{-}(I_+(x))-N_{-}(I_-(x)))\label{psi4}\\
		&=-\frac{1}{4\pi^2 x}\oint_\gamma U_c(t)\left(\frac{1}{t-I_+(x)}-\frac{1}{t-I_-(x)}\right)dt\nonumber\\
		&=\frac{1}{2\pi^2 x}\int_0^bV'(y)y\;\text{Re}\left[\frac{I_+'(y)}{I_+(y)-I_+(x)}-\frac{I_+'(y)}{I_+(y)-I_-(x)}\right]dy\nonumber\\
		&=\frac{1}{2\pi^2x}\int_0^bV'(y)y\;\frac{d}{dy}\text{Re}\left[\log\frac{I_+(y)-I_+(x)}{I_+(y)-I_-(x)}\right]dy\nonumber\\
		&=\frac{1}{2\pi^2x}\int_0^b\Big(V''(y)y+V'(y)\Big)\log\left|\frac{I_+(y)-I_-(x)}{I_+(y)-I_+(x)}\right|dy,\label{psi3}
\end{align}
which is formula \eqref{psi}. In order to complete the proof of Theorem \ref{theorem: eq measure formula}, we only need to show that $N=N_0$.

\medskip

By the RH conditions for $N$ and $N_0$, we have that $Q(s)\equiv N(s)-N_0(s)$ has no jumps and is thus a meromorphic function, which tends to $0$ as $s\to\infty$ and which possibly has isolated singularities at $-1$ and $s_b$. If we show that these singularities are removable, we have $Q\equiv 0$ and $N\equiv N_0$ by Liouville's theorem.
By the definition of $N$ outside $\gamma$ (see (\ref{g}), (\ref{G}), (\ref{def M}), and (\ref{def N})), one verifies that $N(s)$ is bounded for $s$ outside $\gamma$ and near $-1$. Together with (\ref{sol N}), this implies that $-1$ is either a removable or an essential singularity of $Q$, the latter being impossible by the definition of $N$ inside $\gamma$ (see (\ref{tilde g}), (\ref{G}), (\ref{def M}), and (\ref{def N})).
For $s$ near $s_b$, again by the definition of $N$, $Q$ can only have a simple pole or a removable singularity at $s_b$. If it has a simple pole, we would have that $\psi(x)\sim c(b-x)^{-1/2}$ as $x\searrow b$, but one can show that this is in contradiction with the variational conditions (\ref{EL1})-(\ref{EL2}). We have that $N=N_0$, and this completes the proof of Theorem \ref{theorem: eq measure formula}.

\subsubsection{Proof of Theorem \ref{theorem: one-cut reg}}
In this section, we assume the conditions of Theorem \ref{theorem: one-cut reg} are satisfied. Then, without assuming one-cut regularity, we can pursue the construction done in the previous section, and define the density $\psi$ as before, with $c$ defined by (\ref{getc}) and $b$ by (\ref{b}).
By construction the measure $d\mu=\psi(x)dx$ satisfies the Euler-Lagrange condition (\ref{EL1}). In order to prove that our candidate-equilibrium measure is indeed the one, we need to prove that it is a probability measure, and that it satisfies the Euler-Lagrange inequality (\ref{EL2}).

Positivity of $\psi$ follows immediately from \eqref{psi3} and the assumption that $xV''(x)+V'(x)> 0$, since the term inside the logarithm of (\ref{psi3}) can be checked geometrically to be greater than $1$. The fact that $\int\psi(x)dx=1$ is equivalent to condition (c1) of the RH problem for $(G,\widetilde G)$ and this is true by construction.

To show the variational inequality outside the support, consider the function
$$
h(x) = G_+(x)+\widetilde{G}_-(x) - V'(x)
$$
We have $h(x)=0$ for $x\in[0, b]$ by condition (b) of the RH problem for $(G,\widetilde G)$. Now, by the assumption $V''(x)\geq 0$, for $x>b$ we have
\begin{align*}
\frac{d}{dx}h(x) &=\frac{d}{dx}(G(x)+\widetilde{G}(x)-V'(x))\\
			&=g''(x)+\widetilde{g}''(x)-V''(x)\\
			&=-\int_0^b\left(\frac{1}{(x-y)^2}+\frac{\theta x^{2\theta-2}+\theta(\theta-1)x^{\theta-2}y^\theta}{(x^\theta- y^\theta)^2}\right)\psi(y)dy-V''(x)<0
\end{align*}
where we used the fact that $\theta>1$ in the last inequality. We conclude that $h(x)<0$ for $x>b$, which yields (\ref{EL2}) after integration.\\
This shows that the density $\psi$ we constructed is indeed the equilibrium measure, and this completes the proof.

\subsection{The mapping $\widetilde J$}

Consider the transformation $\widetilde J$ defined in (\ref{def tilde J}). It has two critical points $s_a$ and $s_b$ given by (\ref{sab}), which are mapped to $a\equiv \widetilde J(s_a)$ and $b\equiv \widetilde J(s_b)$.
Write $s=re^{i\phi}$ with $0\leq \phi <2\pi$.
It is easy to verify that $\arg \widetilde J(s)=0$ if and only if
\begin{equation}
\arg\left(\frac{c_0}{c_1} +re^{i\phi}\right)+\frac{1}{\theta}\arg(1+re^{i\phi})=\frac{\phi}{\theta}.\label{arg tilde J}
\end{equation}
If $\frac{c_0}{c_1}>0$, given $0<\phi<\pi$, the left hand side is increasing in $r$, tends to $0$ as $r\to 0$, and to $\left(1+\frac{1}{\theta}\right)\phi$ as $r\to\infty$. We can conclude that there is a unique $r=r(\phi)$ such that (\ref{arg tilde J}) holds, or in other words such that $\widetilde J(re^{i\phi})>0$. We have that there exists a curve $\widetilde\gamma_1$ connecting $s_a$ with $s_b$ in the upper half plane which is mapped bijectively to $[a,b]$. By symmetry the complex conjugate curve $\widetilde\gamma_2$ is also mapped bijectively to $[a,b]$. The union $\widetilde\gamma$ of $\widetilde\gamma_1$ and $\widetilde\gamma_2$ is shown in Figure \ref{figure: J2} for different values of $\theta$ and $c_0/c_1$.
In a similar way as we did before for $J$, see Section \ref{section: J}, we can prove the following lemma.
\begin{figure}[t]
\begin{center}
\includegraphics[scale=0.38,angle=0]{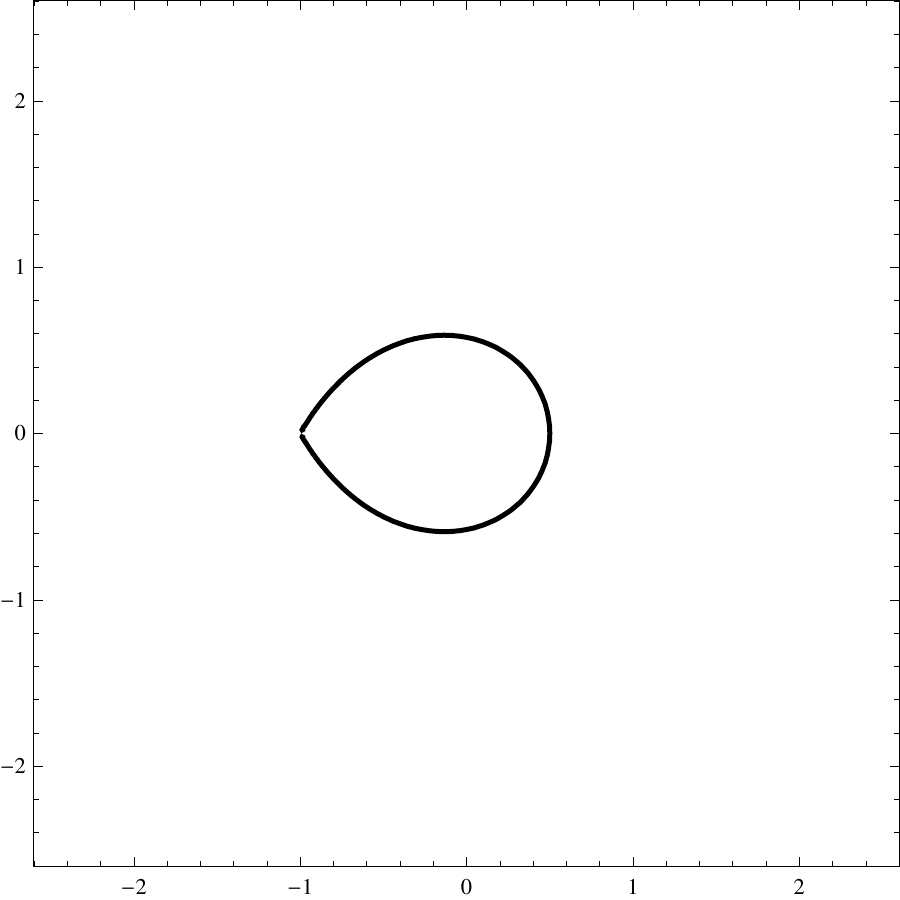}
\includegraphics[scale=0.38,angle=0]{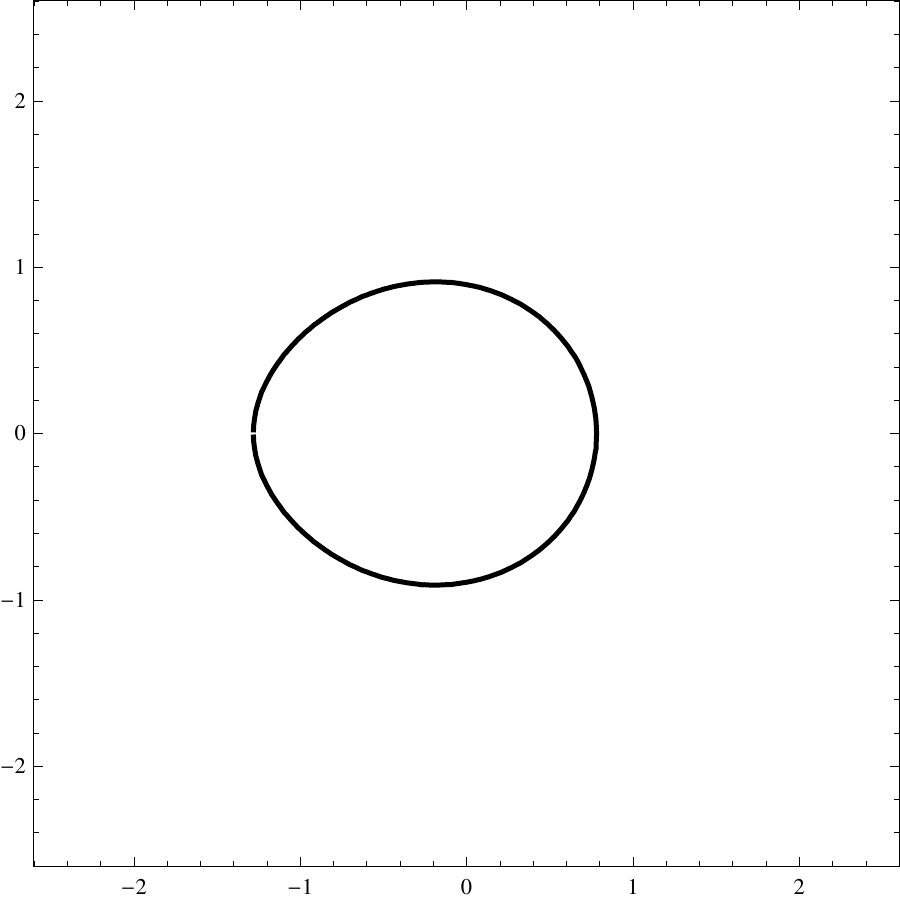}
\includegraphics[scale=0.38,angle=0]{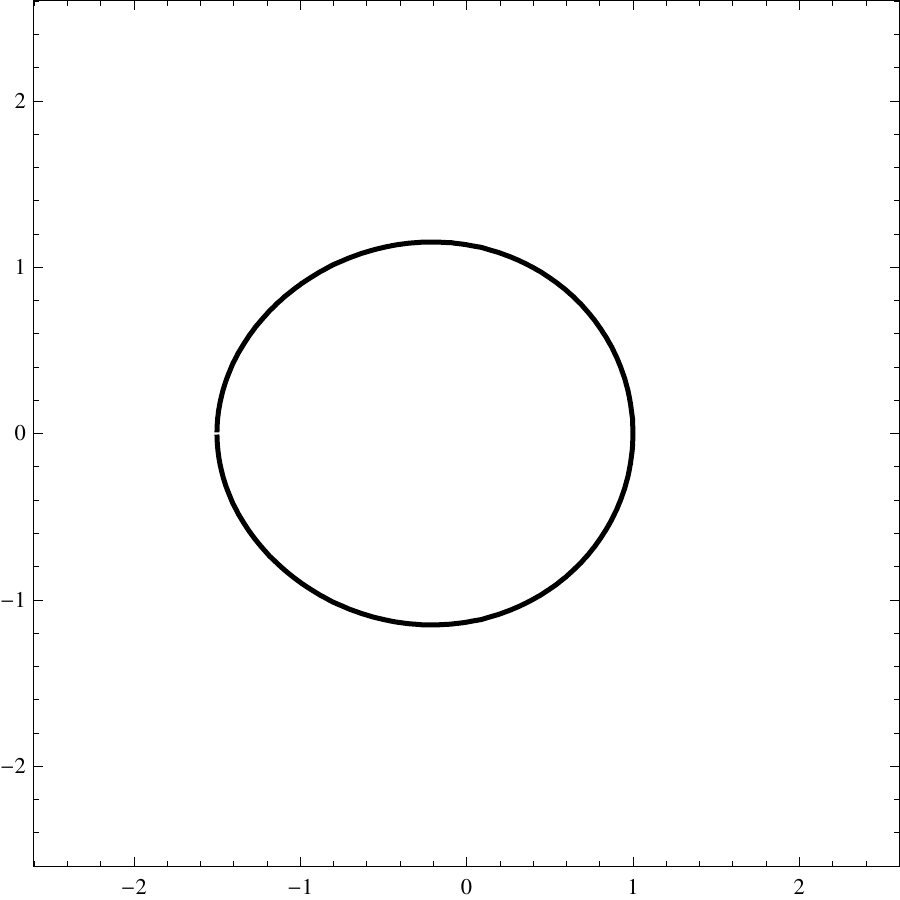}
\includegraphics[scale=0.38,angle=0]{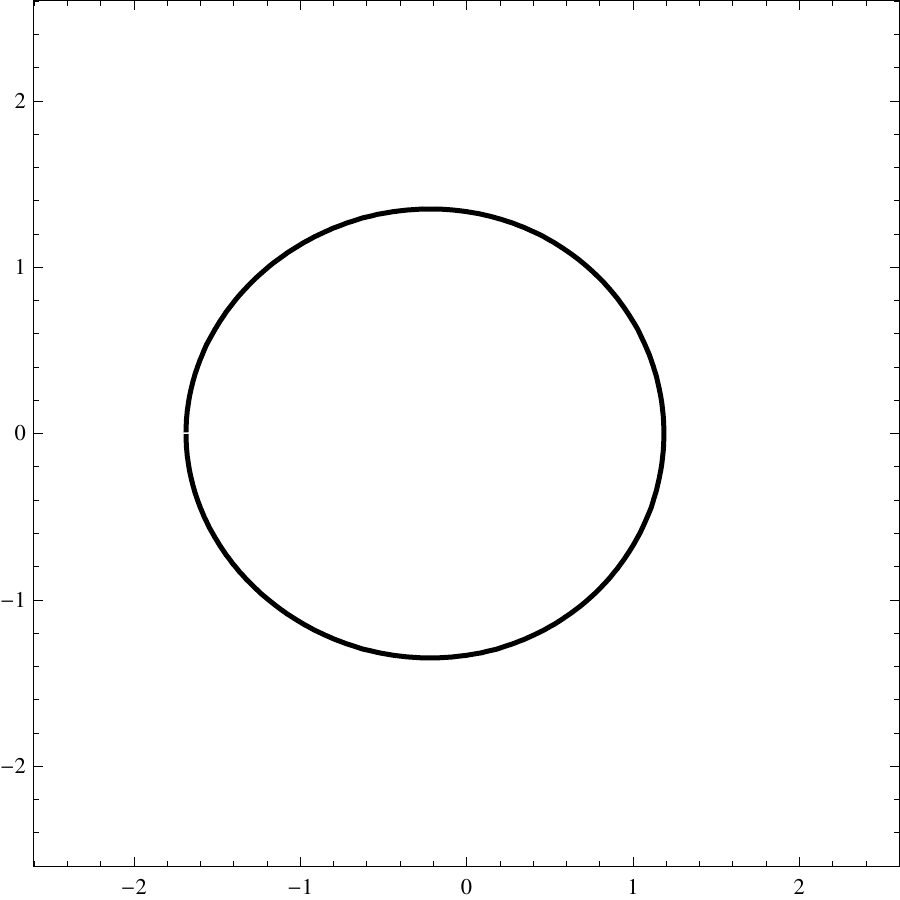}
\includegraphics[scale=0.38,angle=0]{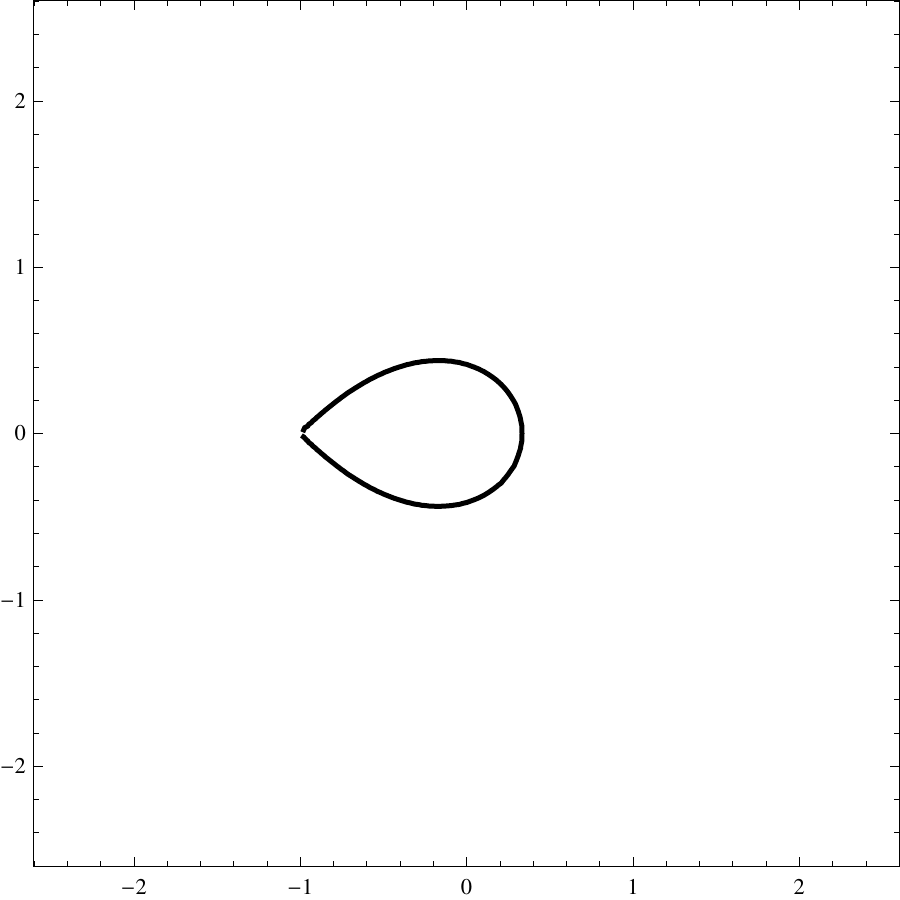}
\includegraphics[scale=0.38,angle=0]{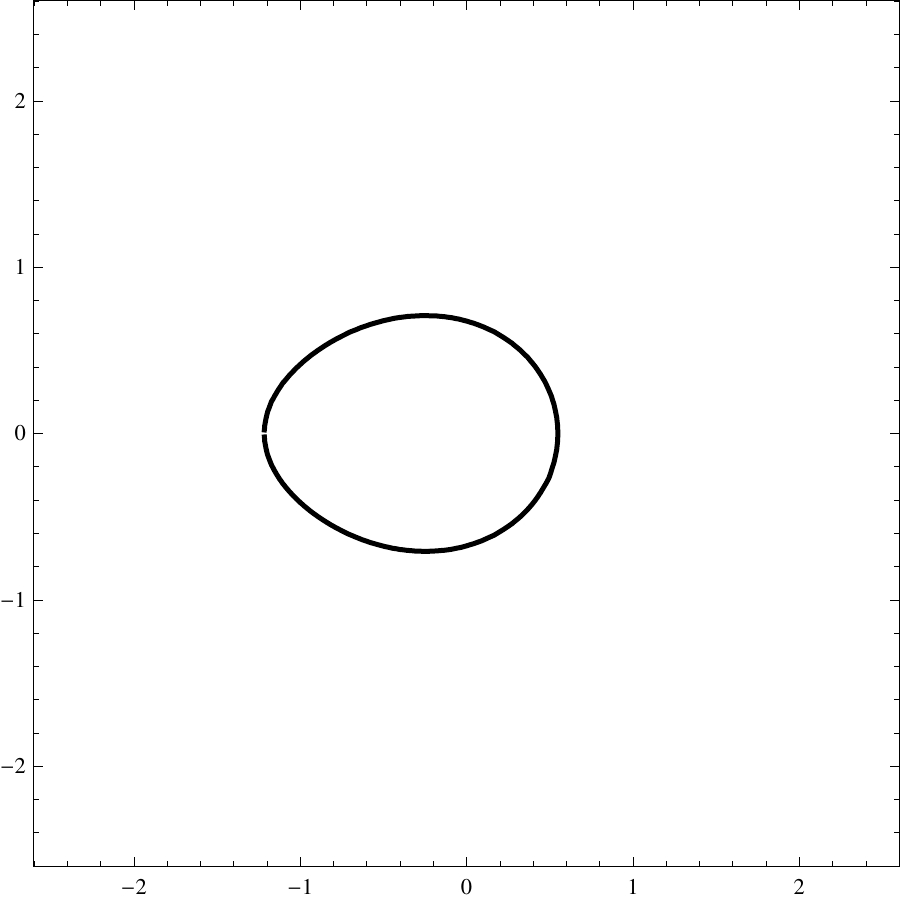}
\includegraphics[scale=0.38,angle=0]{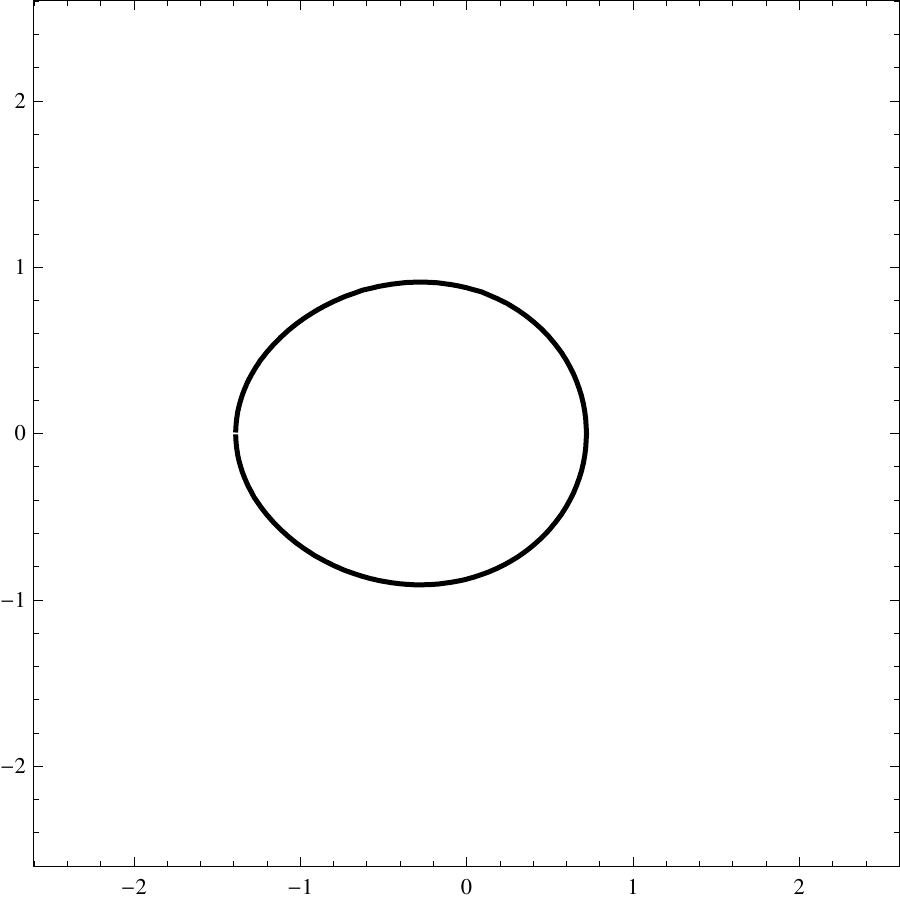}
\includegraphics[scale=0.38,angle=0]{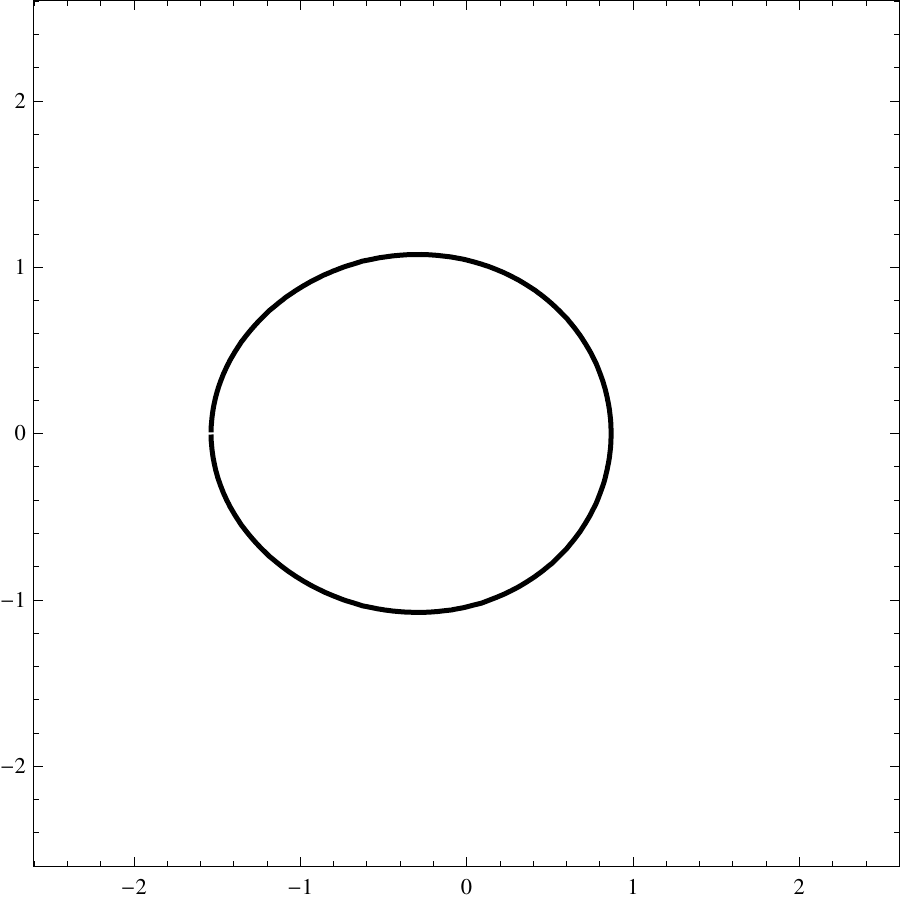}
\includegraphics[scale=0.38,angle=0]{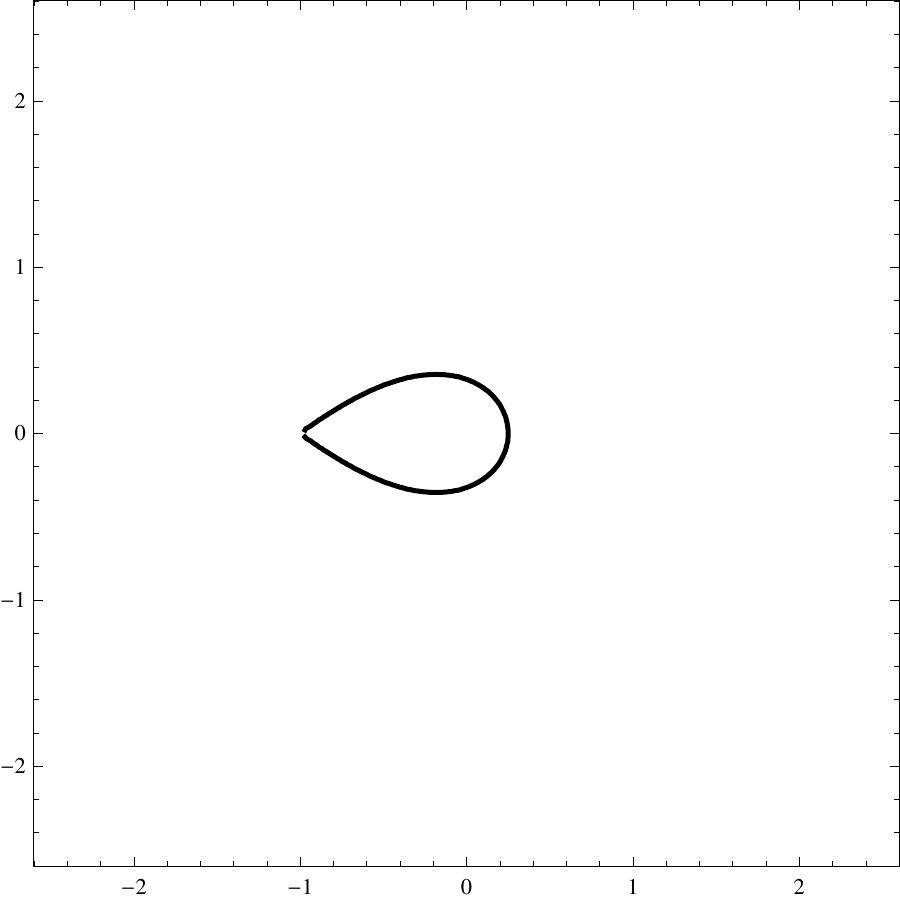}
\includegraphics[scale=0.38,angle=0]{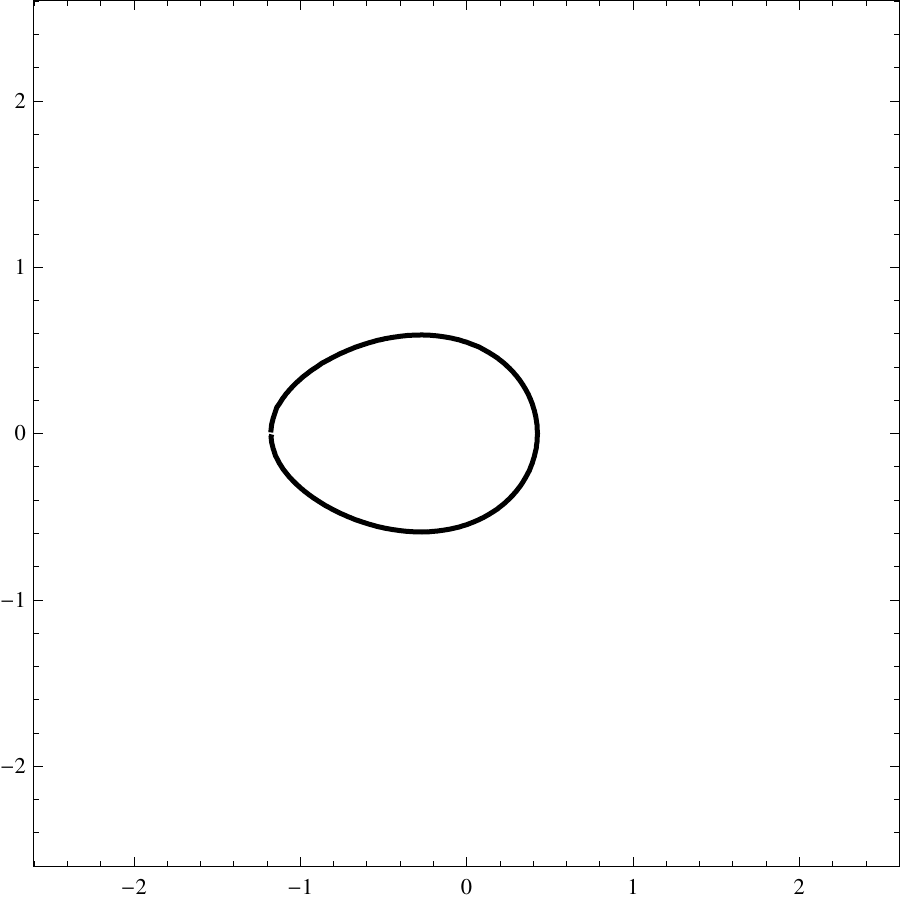}
\includegraphics[scale=0.38,angle=0]{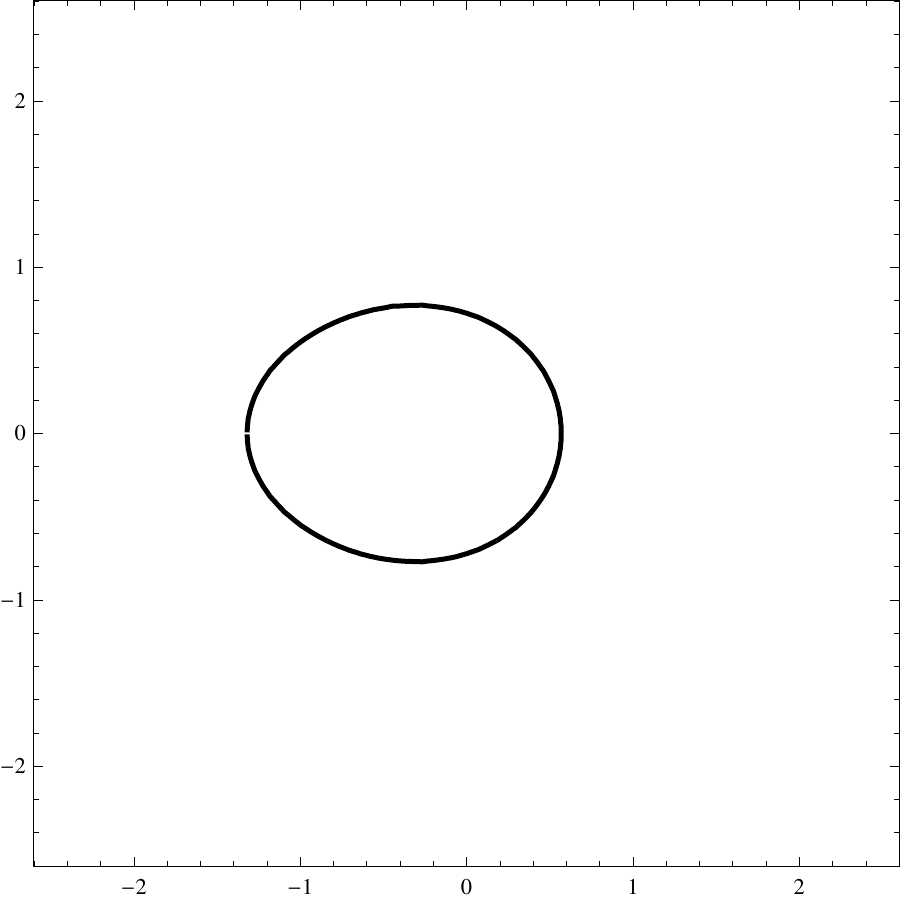}
\includegraphics[scale=0.38,angle=0]{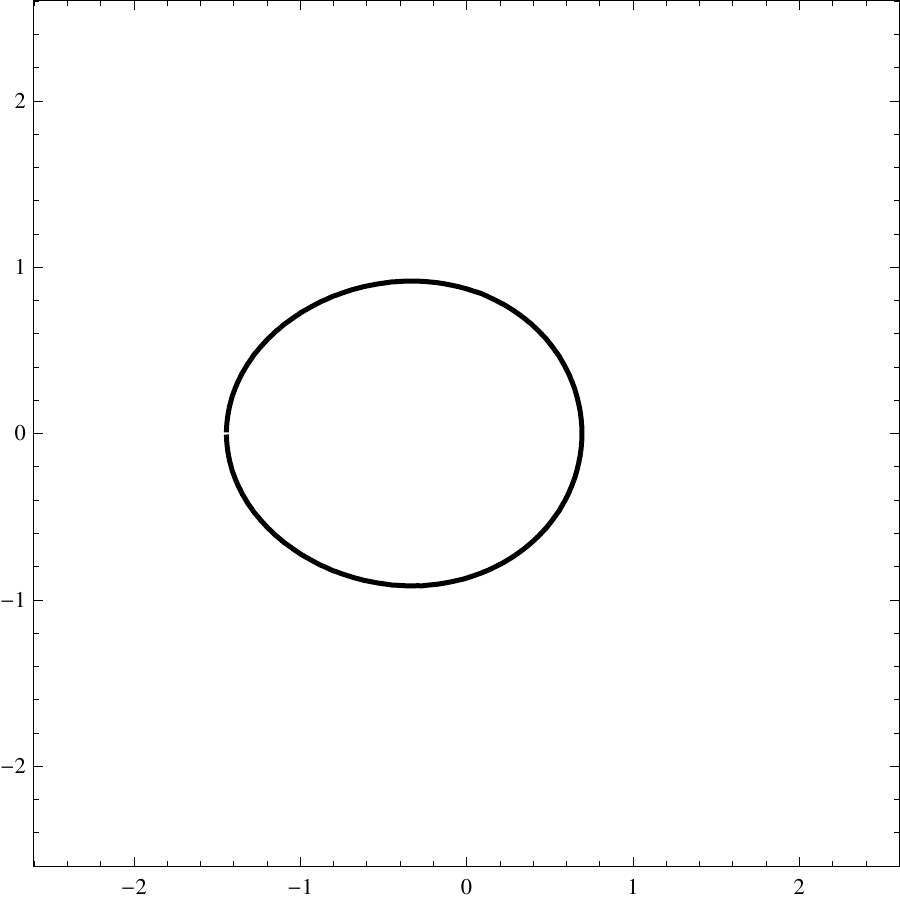}
\end{center}
\caption{The curve $\widetilde\gamma$ for $\theta=2$ and $\frac{c_0}{c_1}=1, 2, 3, 4$ (upper row), for $\theta=3$ and $\frac{c_0}{c_1}=1, 2, 3, 4$ (middle row), and for $\theta=4$ and $\frac{c_0}{c_1}=1, 2, 3, 4$ (bottom row).}
\label{figure: J2}
\end{figure}
\begin{lemma}
Let $c_0>c_1>0$ and let $D$ be the open region enclosed by the curves $\widetilde\gamma_1$ and $\widetilde\gamma_2$. $\widetilde J$ maps $\mathbb C\setminus \overline D$ bijectively to $\mathbb C\setminus [a,b]$, and it maps $D\setminus[-1,0]$ bijectively to $\mathbb H_\theta\setminus[a,b]$.
\end{lemma}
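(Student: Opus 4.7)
The plan is to mirror the proof of Lemma \ref{lemma J} for the simpler mapping $J_c$, carefully adapting the boundary analysis to the new mapping $\widetilde J = \widetilde J_{c_0,c_1}$. The argument has two ingredients: a direct verification of how $\widetilde J$ behaves on the boundary of each region, followed by an application of the argument principle to deduce bijectivity in the interior.

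First I would establish the boundary behavior. The construction of the curves $\widetilde\gamma_1,\widetilde\gamma_2$ in the discussion preceding the lemma already shows that each is mapped bijectively onto $[a,b]$, with $s_a\mapsto a$ and $s_b\mapsto b$. Beyond this, I would verify that the real intervals $(-\infty,s_a]$ and $[s_b,+\infty)$ are mapped injectively into $\mathbb R$ (no critical points of $\widetilde J$ lie in their interiors, and direct inspection shows that $[s_b,+\infty)$ maps onto $[b,+\infty)$). The upper and lower edges of the slit $[-1,0]$, which belong to the boundary of $D\setminus[-1,0]$, are mapped onto the rays $\{re^{\pm i\pi/\theta}:r>0\}$ that form the boundary of $\mathbb H_\theta$: for $x\in(-1,0)$ approached from above, $(x+1)/x$ is negative so $((x+1)/x)^{1/\theta}$ has argument $+\pi/\theta$, while the prefactor $c_1 x + c_0 \geq c_0 - c_1 > 0$ is positive, and the modulus sweeps monotonically from $0$ at $x=-1$ to $+\infty$ at $x=0$.

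With these boundary identifications in hand, I would apply the argument principle exactly as in the proof of Lemma \ref{lemma J}. For the interior statement, fix $z_0\in\mathbb H_\theta\setminus[a,b]$ and take a positively oriented Jordan curve $\Sigma\subset D\setminus[-1,0]$ hugging the boundary of that region (so $\Sigma$ runs close to $\widetilde\gamma$ and around both sides of the slit $[-1,0]$), chosen close enough that $\widetilde J$ is injective on $\Sigma$. Then
\begin{equation*}
\frac{1}{2\pi i}\oint_{\Sigma}\frac{\widetilde J'(s)}{\widetilde J(s)-z_0}\,ds = \frac{1}{2\pi i}\oint_{\widetilde J(\Sigma)}\frac{dw}{w-z_0}.
\end{equation*}
As $\Sigma$ is pushed towards the boundary, the image $\widetilde J(\Sigma)$ approaches the boundary of $\mathbb H_\theta\setminus[a,b]$ traversed once in the positive sense, so the right-hand side tends to $1$, and the argument principle yields exactly one preimage of $z_0$ in $D\setminus[-1,0]$. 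For the exterior statement, I would use a contour consisting of a Jordan curve just outside $\widetilde\gamma$ together with a large circle $|s|=R$: the image of $\widetilde\gamma$ is $[a,b]$ traversed once forward and once backward, with zero net winding around any $z_0\in\mathbb C\setminus[a,b]$, while the asymptotic $\widetilde J(s)\sim c_1 s$ shows the large circle contributes winding number $1$, giving exactly one preimage in $\mathbb C\setminus\overline D$.

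The main obstacle is the bookkeeping near the critical points $s_a,s_b$ and the branch points $-1,0$, where $\widetilde J$ is either not locally injective or develops non-analytic behavior. One must verify that $\Sigma$ can be chosen to avoid these points while still having an image that approaches the full boundary of $\mathbb H_\theta\setminus[a,b]$ (respectively of $\mathbb C\setminus[a,b]$) with the correct orientation. This is the same delicate matching that is handled for Lemma \ref{lemma J}, and no new difficulty arises because the local structure of $\widetilde J$ near its critical and branch points is of the same type as that of $J_c$: near $s_a$ and $s_b$ one has a square-root type behaviour, while near $-1$ and $0$ one has the same fractional power $(s+1)^{1/\theta}$ respectively $s^{-1/\theta}$ as in the one-parameter case.
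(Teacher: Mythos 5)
Your proposal follows exactly the argument-principle strategy that the paper itself invokes: the paper disposes of this lemma with the single remark that it can be proved ``in a similar way as we did before for $J$'', referring to the proof of Lemma \ref{lemma J}, and you have spelled out that analogous argument. There is, however, one bookkeeping error in your boundary analysis: when $s$ approaches a point of $(-1,0)$ from above, so that $\arg(s)\to\pi$ while $\arg(s+1)\to 0$, the continuous branch of $\left((s+1)/s\right)^{1/\theta}$ normalized by $\widetilde J(s)\sim c_1 s$ at infinity acquires argument tending to $-\pi/\theta$, not $+\pi/\theta$. This agrees with Remark \ref{remark: local}, where $J_c(s)=ce^{\mp\pi i/\theta}(s+1)^{1+1/\theta}\left(1+\bigO(s+1)\right)$ for $\pm\Im s>0$, and with the labeling in Figure \ref{figure: tilde J}, where the upper edge $(5)$ of the slit $[-1,0]$ is sent to the lower boundary ray of $\mathbb H_\theta$. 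The slip merely interchanges the two boundary rays and does not alter the winding-number count in your argument-principle step, so the proof is otherwise sound, but the sign should be corrected.
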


\subsection{Solution in the one-cut regular case without hard edge}\label{section: soft}
We assume that $\mu$ is a probability measure with a continuous density $\psi$ supported on $[a,b]$ which satisfies the Euler-Lagrange condition (\ref{EL1}).
Consider as before the logarithmic transforms
\begin{align}
g(z)&=\int_a^b\log(z-y)d\mu(y), &z\in\mathbb{C}\setminus(-\infty,b]\nn,\\
\widetilde{g}(z)&=\int_a^b\log(z^\theta-y^\theta)d\mu(y), &z\in\mathbb{H_\theta}\setminus[0,b].
\end{align} The variational condition (\ref{EL1}) translates into the following conditions for $G(z)=g'(z), \widetilde{G}(z)=\widetilde{g}'(z)$:
\subsubsection*{RH problem for $(G,\widetilde G)$}
\bit
\item[(a)] $(G(z), \widetilde{G}(z))$ is analytic in $(\mathbb{C}\setminus[a, b], \mathbb{H}_\theta\setminus[a, b])$,
\item[(b)] $G_{\pm}(x)+\widetilde{G}_{\mp}(x)=V'(x)$ for $x\in (a, b)$,\\
$\widetilde{G}(e^{-\pi i/\theta}x)=e^{2\pi i/\theta}\widetilde{G}(e^{\pi i/\theta}x)$ for $x>0$.
\item[(c1)] $G(z)\sim \frac{1}{z}$ as $z\to\infty$,
\item[(c2)]$\widetilde{G}(z)\sim \frac{\theta}{z}$ as $z\to\infty$ in $\mathbb{H}_\theta$,
\eit
We proceed in a similar way as in the hard edge case, and pull back the domain $(\mathbb{C}\setminus[a, b], \mathbb{H}_\theta\setminus[a, b])$ to $\mathbb C\setminus(\widetilde\gamma\cup[-1,0])$ using $\widetilde J$, where $\widetilde\gamma$ is the counterclockwise oriented union of $\widetilde\gamma_1$ and $\widetilde\gamma_2$.
Introducing
\beq
N(s)=\begin{cases}
\widetilde J_{c_0, c_1}(s)G(\widetilde J_{c_0, c_1}(s))\;\;\; \text{outside } \widetilde\gamma\\
\widetilde J_{c_0, c_1}(s)\widetilde{G}(\widetilde J_{c_0, c_1}(s))\;\;\; \text{inside } \widetilde\gamma
\end{cases}
\eeq
we obtain the following RH problem for $N$:
\subsubsection*{RH problem for $N$}
\bit
\item[(a)] $N$ analytic in $\mathbb{C}\setminus \widetilde\gamma$.
\item[(b)] $N_+(s)+N_-(s)=U_{c_0, c_1}(s)$ for $s\in\widetilde\gamma\setminus\{s_a,s_b\}$,
where \[U_{c_0,c_1}(s)=V'(\widetilde J_{c_0,c_1}(s))\widetilde J_{c_0,c_1}(s).\]
\item[(c1)] $N(s)\to 1$ as $s\to\infty$.
\item[(c2)] $N(0)=\theta$, $N(-1)=0$.
\eit
As in the hard edge case, we can again show that if $\mu$ satisfies the variational conditions (\ref{EL1})-(\ref{EL2}), $N$ has to be bounded near $s_a$ and $s_b$.
The unique solution to conditions (a), (b), and (c1) which is bounded near $s_a$ and $s_b$ is given by
\beq
N(s)=\begin{cases}
-\frac{1}{2\pi i}\oint_{\widetilde\gamma}\frac{U_{c_0, c_1}(\xi)}{\xi-s}d\xi+1\;\;\;\text{outside of }\widetilde\gamma\\
\frac{1}{2\pi i}\oint_{\widetilde\gamma}\frac{U_{c_0, c_1}(\xi)}{\xi-s}d\xi-1\hspace{20pt}\text{inside of }\widetilde\gamma.\\
\end{cases}
\label{solution N}\eeq
In order to have condition (c2), we need that $c_0,c_1$ satisfy the equations (\ref{eqc0})-(\ref{eqc1}).
Then we have
\begin{align*}
\psi(x) 	&=-\frac{1}{2\pi i}(G_+(x)-G_-(x))=-\frac{1}{2\pi i x}(N_-(\widetilde I_+(x))-N_-(\widetilde I_-(x)))\\
		&=-\frac{1}{4\pi^2 x}\oint_{\widetilde\gamma} U_{c_0, c_1}(t)\left(\frac{1}{t-\widetilde I_+(x)}-\frac{1}{t-\widetilde I_-(x)}\right)dt\\
		&=\frac{1}{2\pi^2 x}\int_a^bV'(y)y\;\text{Re}\left[\frac{1}{\widetilde I_+(y)-\widetilde I_+(x)}-\frac{1}{\widetilde I_+(y)-\widetilde I_-(x)}\right]\widetilde{I}_+'(y)dy\\
		&=\frac{1}{2\pi^2x}\int_a^bV'(y)y\;\frac{d}{dy}\text{Re}\left[\log\frac{\widetilde I_+(y)-\widetilde I_+(x)}{\widetilde I_+(y)-\widetilde I_-(x)}\right]dy\\
		&=\frac{1}{2\pi^2x}\int_a^b(V''(u)u+V'(u))\log\left|\frac{\widetilde I_+(y)-\widetilde I_-(x)}{\widetilde I_+(y)-\widetilde I_+(x)}\right|dy.
\end{align*}
This gives us formula (\ref{psi2}), and Theorem \ref{theorem: eq measure formula soft edge} is proved.

\subsection{Examples}\label{section: ex}

If $\theta=2$, we have $J_c(s)=c\left(s+1\right)^{3/2}s^{-1/2}$. In order to compute the inverses $I_\pm(x)$, we need to find two complex conjugate roots of the equation $J_c(s)=x$, which reduces to the third degree equation
\[s^3+3s^2+\left(3-\frac{x^2}{c^2}\right)s+1=0.\]
By Cardano's formulas, we have
\begin{equation}\label{Ipm2}
I_\pm(x)=\frac{x^{2/3}}{2^{1/3}c^{2/3}}\left(u\chi_\mp(t)+\overline{u}\chi_\pm(t)\right)-1,
\end{equation}
where
\begin{equation}\label{chi u}
\chi_\pm(t)=\left(1\pm\sqrt{1-\frac{4x^2}{27c^2}}\right)^{1/3},\qquad u=\frac{1+i\sqrt{3}}{2}.
\end{equation}
Substituting (\ref{Ipm2}) and (\ref{chi u}) into (\ref{psi}), one has an explicit integral formula for the equilibrium density $\psi$.

\subsubsection{The Laguerre case}
\begin{figure}[t]
\begin{center}
\includegraphics[scale=0.7,angle=0]{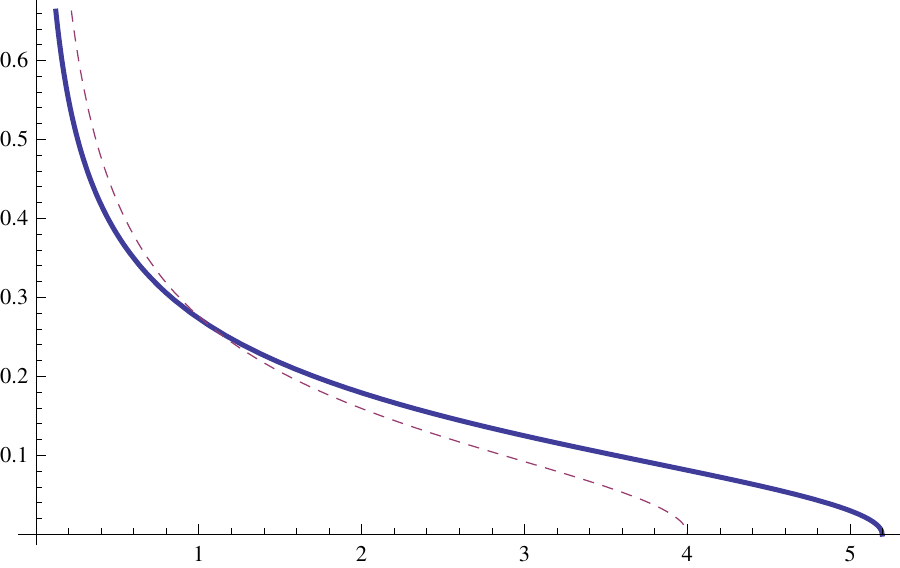}
\end{center}
\caption{The equilibrium density $\psi(x)$ for the Laguerre weight $e^{-nx}$ if $\theta=2$ (solid curve) and $\theta=1$ (dashed curve).  \label{figure: Lag2}}
\end{figure}

We let $V(x)=\rho x$ and $\theta>1$. The conditions of Theorem \ref{theorem: one-cut reg} are satisfied. By (\ref{getc0}),
\begin{equation}\label{c-ex2}
c=2\pi i\frac{1+\theta}{\rho}\left(\int_\gamma \left(\frac{s+1}{s}\right)^{1+1/\theta}ds\right)^{-1}=\frac{\theta}{\rho},
\end{equation}
and by (\ref{b}), the endpoint $b$ is given by $b=\frac{1}{\rho}(1+\theta)^{1+1/\theta}$.
The RH problem for $N$ can be solved explicitly in this case: we have
\begin{equation}\label{Nsol2}
N(s)=\begin{cases}\theta s+\theta,& \mbox{$s$ inside $\gamma$,}\\
\theta(s+1)\left(\frac{s+1}{s}\right)^{\frac{1}{\theta}}-\theta s-\theta,&\mbox{$s$ outside $\gamma$}.
\end{cases}
\end{equation}
Using (\ref{def N}) and (\ref{def M}), we obtain
\begin{equation}
\widetilde G_\pm(x)=\frac{1}{x}\left(\theta I_\mp(x)+\theta\right),\qquad 0<x<b,
\end{equation}
and by (\ref{tilde g}) and (\ref{G}), we have
\begin{eqnarray}
\psi(x)&=&-\frac{1}{2\pi i}\left(\widetilde G_+(x)-\widetilde G_-(x)\right)\nonumber\\
&=&\frac{\theta}{2\pi ix}\left(I_+(x)-I_-(x)\right)\label{densityLag}.
\end{eqnarray}
For $\theta=2$, we have
$b=\frac{3\sqrt{3}}{\rho}$, and we can substitute the explicit expressions (\ref{Ipm2})-(\ref{chi u}) for $I_\pm$ into (\ref{densityLag}).
We then obtain
\begin{equation}
\psi(x)=\frac{\sqrt{3}\rho^{2/3}}{2\pi x^{1/3}}\left(\left(1+\sqrt{1-\frac{\rho^2x^2}{27}}\right)^{1/3}-\left(1-\sqrt{1-\frac{\rho^2x^2}{27}}\right)^{1/3}\right).\end{equation}
This formula for the density, which is shown in Figure \ref{figure: Lag2}, is the same as the one obtained in \cite{LSZ}.

\subsubsection{$V(x)=\tau x^2+\rho x$: hard-to-soft edge transition}\label{section: hardsoft}
\begin{figure}[t]
\begin{center}
\includegraphics[scale=0.5,angle=0]{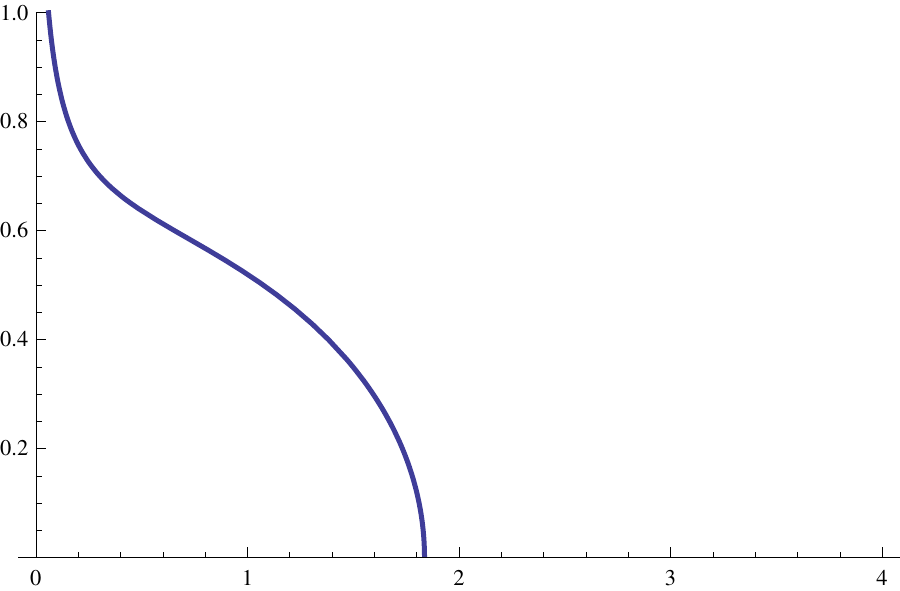}
\includegraphics[scale=0.5,angle=0]{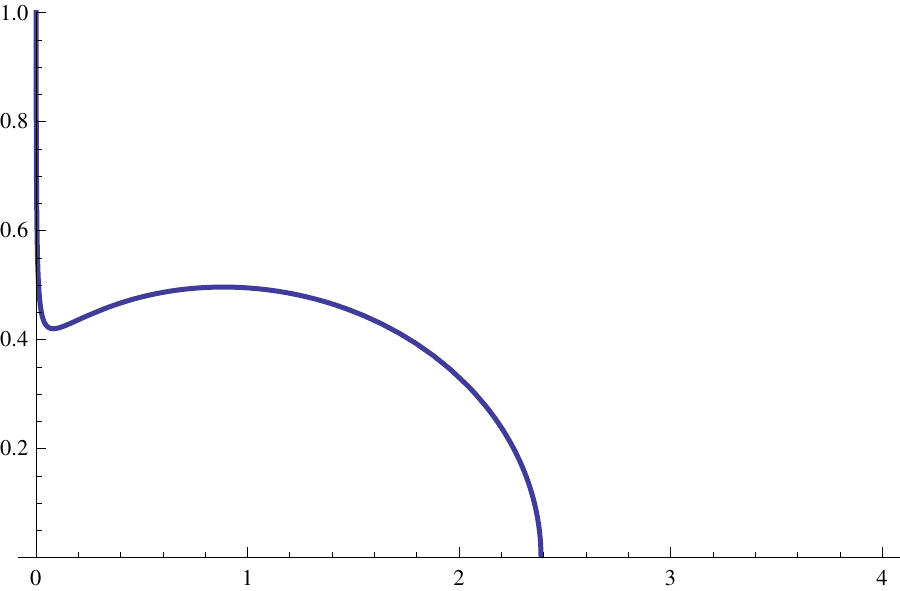}
\includegraphics[scale=0.5,angle=0]{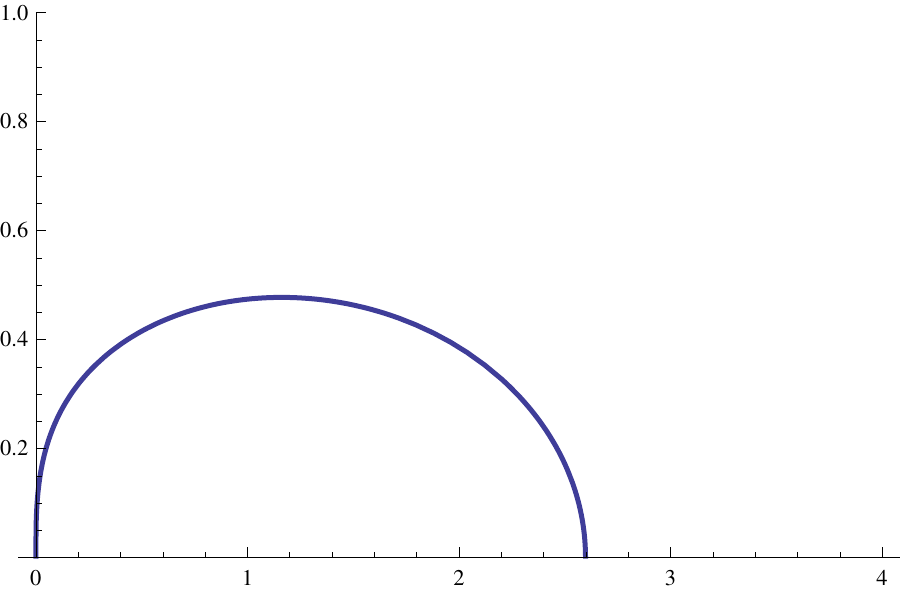}
\includegraphics[scale=0.5,angle=0]{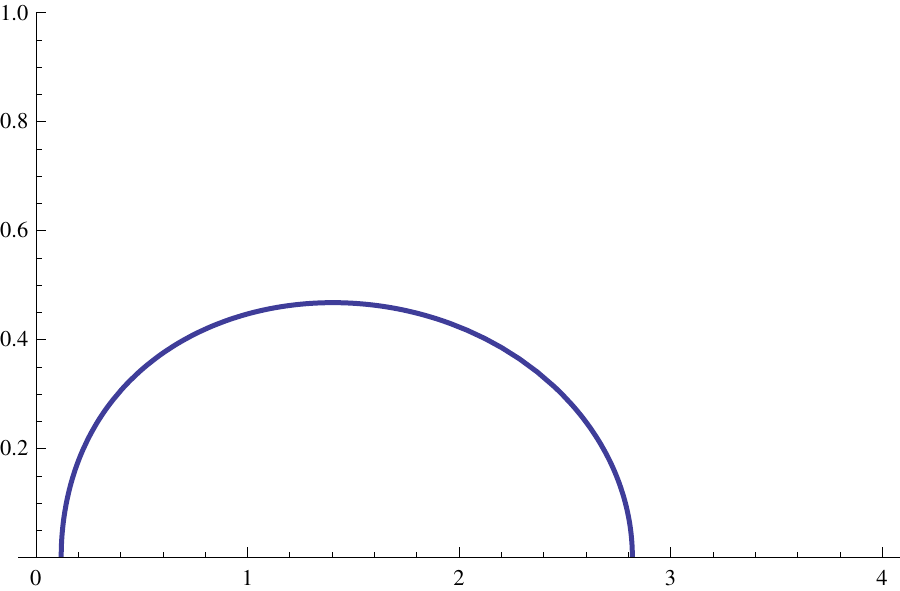}
\includegraphics[scale=0.5,angle=0]{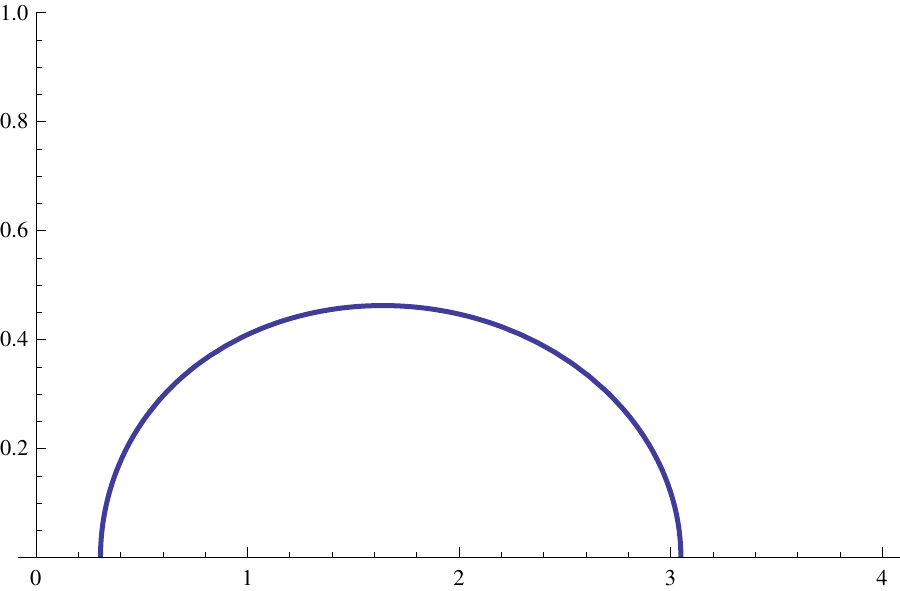}
\includegraphics[scale=0.5,angle=0]{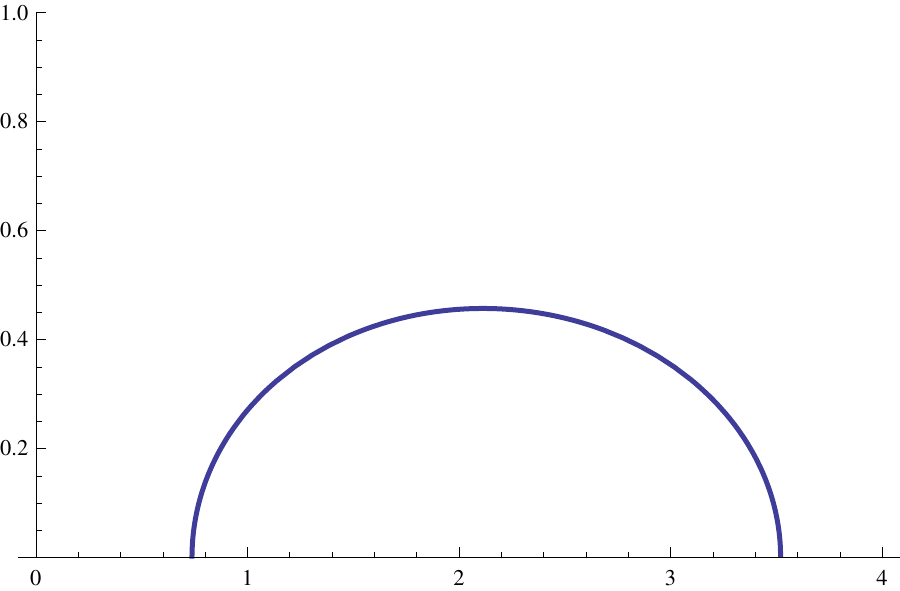}
\includegraphics[scale=0.5,angle=0]{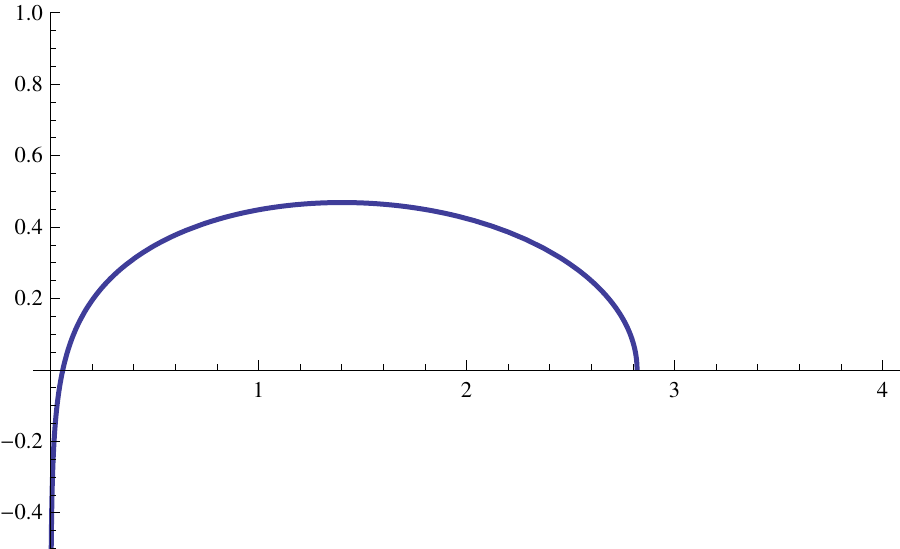}
\includegraphics[scale=0.5,angle=0]{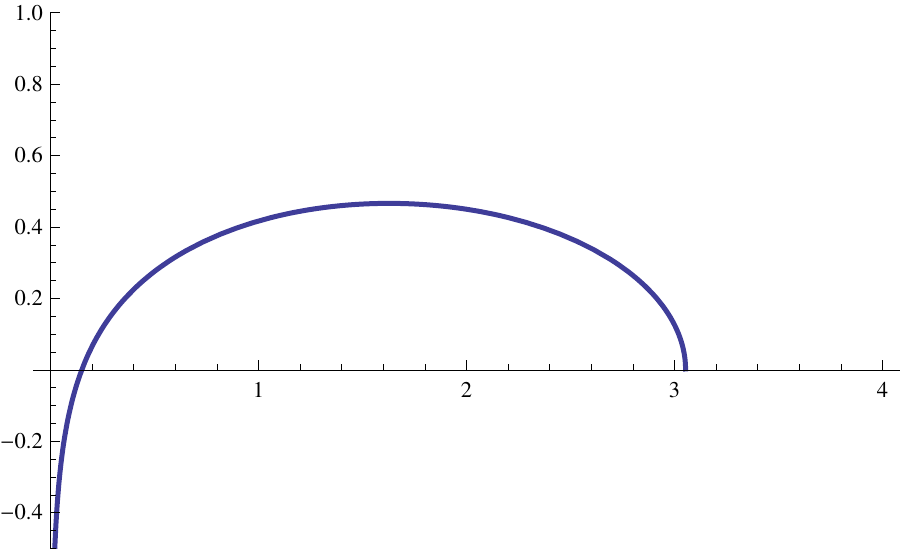}
\includegraphics[scale=0.5,angle=0]{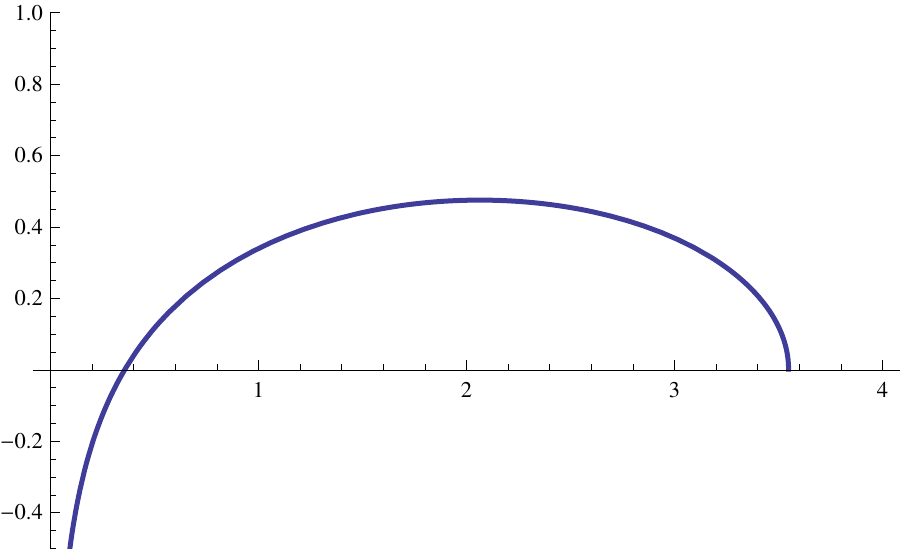}
\end{center}
\caption{The equilibrium density $\psi(x)$ for $V(x)=x^2+\rho x$ if $\rho=0$, $\rho=-1.8$, and $\rho=-2$ (upper row), $\rho=-2.5$, $\rho=-3$, and $\rho=-4$ (middle row), with $\theta=2$. Observe the transition where the hard edge turns into a soft edge, at $\rho=-2$.
For comparison, in the bottom row, the densities constructed under the (false) one-cut assumption with hard edge for $\rho=-2.5$, $\rho=-3$, and $\rho=-4$, which are negative near $0$.}
\label{figure: hardsoft}
\end{figure}

We let $V(x)=\tau x^2+\rho x$, with $\tau>0$ and $\rho\in\mathbb R$. Then $V''(x)=2\tau>0$, and $xV''(x)+V'(x)=4\tau x+\rho$. This means that the conditions given in Theorem \ref{theorem: one-cut reg} to have a one-cut supported equilibrium measure with hard edge are satisfied only if $\rho\geq 0$.
Nevertheless, even for $\rho<0$ , we can pursue with the construction done in Section \ref{section: eq hard} under the one-cut assumption with hard edge, without knowing that the constructed measure will be the equilibrium measure satisfying (\ref{EL1})-(\ref{EL2}).

By residue calculus, it follows from (\ref{getc0}) that $c$ is given by
\begin{equation}
c=\label{c-ex4}\frac{-\rho \theta+\theta \sqrt{\rho^2+16 \tau+8 \tau \theta}}{4 (2 \tau+\tau \theta)},
\end{equation}
and
\begin{equation}
b=\frac{-\rho + \sqrt{\rho^2+16 \tau+8 \tau \theta}}{4 (2 \tau+\tau \theta)}(1+\theta)^{1+1/\theta}.
\end{equation}

The RH solution $N$ can be constructed explicitly, and it is given by \begin{equation*}\label{Nsol3}
N(s)=\begin{cases}2\tau c^2s^2+4\tau c^2\frac{\theta+1}{\theta} s +\rho c s + \rho c\frac{\theta+1}{\theta}+2\tau c^2\frac{(\theta+1)(\theta+2)}{\theta^2}-1\equiv N_{in}(s),& \mbox{$s$ inside $\gamma$,}\\
J_c(s)V'(J_c(s))-N_{in}(s),&\mbox{$s$ outside $\gamma$},
\end{cases}
\end{equation*}
where $N_{in}(s)$ denotes the analytic continuation of $N$ from the inside of $\gamma$.
This gives the formula
\begin{eqnarray}
\psi(x)&=&-\frac{1}{2\pi i}\left(\widetilde G_+(x)-\widetilde G_-(x)\right)\nonumber\\
&=&\frac{1}{2\pi ix}\left(N_{in}(I_+(x))-N_{in}(I_-(x))\right).\label{densityex3}\end{eqnarray}
Let us take a closer look at the case $\theta=2$. One can then substitute the explicit expressions (\ref{Ipm2}) for $I_\pm$.
The density $\psi(x)$ constructed in this way is shown in Figure \ref{figure: hardsoft} for $\tau=1$ for different values of $\rho$. We observe that the constructed density is a probability density for $\rho\geq -2$, but that it becomes negative near $0$ for $\rho<-2$.
It is natural to expect that the constructed density is the true equilibrium density for $\rho\geq -2$, although the conditions of Theorem \ref{theorem: one-cut reg} are not satisfied.
If $\rho<-2$, the one-cut assumption with hard edge is false. Following the construction done in Section \ref{section: soft}, we can construct the equilibrium density under the assumption that it is one-cut supported without hard edge.

Substituting (\ref{def tilde J}) for $\theta = 2$ and $V(x) = x^2 + \rho x$ into (\ref{eqc0})-(\ref{eqc1}),  we find the solution
$$
c_0 = -\frac{\rho}{2},\qquad c_1 = -\frac{2}{\rho},
$$
which indeed satisfies $c_0 > c_1 > 0$ for all $\rho < -2$. Plugging this into (\ref{sab})-(\ref{sab2}) we get
$$
s_a = -\frac{1}{4}(1 + \sqrt{2\rho^2 + 1}), \qquad s_b = -\frac{1}{4}(1 - \sqrt{2\rho^2 + 1}),
$$
and $a=\tilde{J}(s_a)$, $b=\tilde{J}(s_b)$.
Again we can write the solution (\ref{solution N}) of the RH problem for $N$ explicitly. Inside $\tilde{\gamma}$, it is given by
$$
N_{in}(s) = \frac{2}{\rho^2}(4s + \rho^2)(s + 1).
$$
As before, the inverse of $\tilde{J}$ in the upper/lower half plane is given explicitly in terms of Cardano's formulas, and we can finally substitute in (\ref{densityex3}) to obtain the expression of the equilibrium density.
This construction leads indeed to a valid density, see Figure \ref{figure: hardsoft}, which  behaves locally like a square root at both of the endpoints $a$ and $b$.
As $\rho\searrow -2$, $a$ approaches $0$, and for $\rho=-2$, we have the local behavior
\begin{equation}
\psi(x)=Cx^{1/3}(1+o(1)),\qquad x\to 0.
\end{equation}
For general $\theta$, one expects a critical value $\rho_c=\rho_c(\theta)<0$ such that the equilibrium measure is one-cut supported without hard edge for $\rho<\rho_c$, one-cut supported with hard edge and with local behavior
\begin{equation}
\psi(x)=d_1x^{-\frac{1}{\theta+1}}(1+o(1)),\qquad x\to 0,
\end{equation}
for $\rho>\rho_c$, and such that
\begin{equation}
\psi(x)=Cx^{\frac{\theta-1}{\theta+1}}(1+o(1)),\qquad x\to 0,
\end{equation}
for $\rho=\rho_c$. At the critical value, the constant $d_1$ in (\ref{exponent})-(\ref{c1}) must vanish.

\section{Scaling limits of the correlation kernel}\label{section: univ}

In the previous sections, we explained how one can compute the equilibrium measure associated to the point processes (\ref{probability distribution}) if $w=e^{-nV}$, which is expected to describe the macroscopic behavior of the particles in the large $n$ limit. In order to have more detailed information about the microscopic behavior of the particles, one is typically interested in scaling limits of the correlation kernel (\ref{Kn}). Such scaling limits have been investigated in detail in the random matrix case $\theta=1$ and have lead to a number of universal limiting kernels if $w(x)=x^\alpha e^{-nV(x)}$.
Loosely speaking, for $\theta=1$, there are three different {\em regular} scaling limits:
\begin{itemize}
\item[(i)] If $x>0$ lies in the bulk of the spectrum, i.e.\ $x$ lies in the interior of the support of the equilibrium measure and the density $\psi(x)>0$, then the scaled correlation kernel tends to the sine kernel:
    \begin{equation}\label{sine}
    \lim_{n\to\infty}\frac{1}{\psi(x)n}K_n\left(x+\frac{u}{\psi(x)n},x+\frac{v}{\psi(x)n}\right)=\frac{\sin \pi(u-v)}{\pi(u-v)}.
    \end{equation}
\item[(ii)] If $x>0$ is a soft edge, i.e.\ an endpoint of the support of $\mu$ where the density $\psi$ behaves locally like a square root, the scaled correlation kernel tends to the Airy kernel:
    \begin{equation}\label{Airy}
    \lim_{n\to\infty}\frac{1}{|c|n^{2/3}}K_n\left(x+\frac{u}{cn^{2/3}},x+\frac{v}{cn^{2/3}}\right)=\frac{\Ai(u)\Ai'(v)-\Ai'(u)\Ai(v)}{u-v},
    \end{equation} for some suitably chosen $c\in\mathbb R$.
\item[(iii)] If the equilibrium density blows up like one over a square root at the hard edge $0$, the scaled correlation kernel tends to the Bessel kernel: if $u,v>0$, we have
    \begin{equation}\label{Bessel}
    \lim_{n\to\infty}\frac{1}{cn^{2}}K_n\left(\frac{u}{cn^{2}},\frac{v}{cn^{2}}\right)=\frac{J_\alpha(\sqrt{u})\sqrt{v}J_\alpha'(\sqrt{v})-\sqrt{u}J_\alpha'(\sqrt{u})J_\alpha(\sqrt{v})}{2(u-v)},
    \end{equation}
    for some $c>0$.
\end{itemize}
Apart from the regular scaling limits, various {\em singular} double scaling limits have been obtained if the equilibrium density behaves differently than in the three regular cases. One double scaling limit worth mentioning here, is one that occurs when the left endpoint $a$ of the support of the equilibrium measure approaches the hard edge $0$.
This leads to a family of kernels built out of Painlev\'e II functions \cite{ClaeysKuijlaars3}.

\medskip

Proving similar scaling limits for $\theta>1$ and general $V$ is out of reach for now, since we would need asymptotics for the biorthogonal polynomials for that. For weights of the form $w(x)=x^\alpha e^{-nV(x)}$, it is generally believed that scaling limits of the correlation kernel near a point $x^*$ are determined only by the local behavior of the equilibrium density $\psi(x)=\psi_{V,\theta}(x)$ near $x^*$, except if $x^*=0$, when the limiting kernel will also depend on the value of $\alpha$.
Away from zero, we have shown in this paper that the equilibrium density typically vanishes like a square root at soft edges of the support, and there is thus no difference in the local behavior of the density compared to the case $\theta=1$. Therefore, one can expect that the sine and Airy limits (i) and (ii) are still valid for general $\theta>1$. This is not true for the Bessel kernel near $0$.
In the Laguerre case where $w(x)=x^\alpha e^{-nx}$, Borodin \cite{Borodin} proved that
\begin{equation}
\lim_{n\to\infty}\frac{1}{n^{1+1/\theta}}K_n\left(\frac{u}{n^{1+1/\theta}},\frac{v}{n^{1+1/\theta}}\right)=\mathbb K_{\alpha,\theta}(u,v),
\end{equation}
where the limiting kernel $\mathbb K_{\alpha,\theta}$ is given by
\begin{equation}\label{gen Bessel}
K_{\alpha,\theta}(u,v)=\theta(uv)^{\alpha/2} \int_0^1 J_{\frac{\alpha+1}{\theta},\frac{1}{\theta}}(ut) J_{\alpha+1,\theta}((vt)^\theta)t^\alpha dt,
\end{equation}
and $J_{\alpha,\theta}$ is Wright's generalized Bessel function, also known as the Bessel-Maitland function, given by
\begin{equation}
J_{a,b}(x)=\sum_{m=0}^\infty \frac{(-x)^m}{m!\Gamma(a+bm)}.
\end{equation}
If $\theta=1$, $\mathbb K_{\alpha,1}$ is the well-known hard-edge Bessel kernel \cite{KuijlaarsVanlessen} given in scaling limit (iii) above.
In view of the universal nature of the scaling limits of the correlation kernel, one can expect a scaling limit
\begin{equation}
\lim_{n\to\infty}\frac{1}{cn^{1+1/\theta}}K_n\left(\frac{u}{cn^{1+1/\theta}},\frac{v}{cn^{1+1/\theta}}\right)=\mathbb K_{\alpha,\theta}(u,v)
\end{equation}
whenever $V$ is such that the equilibrium measure is one-cut supported with hard edge, and with $c=c_{V,\theta}$ depending on $V$ and $\theta$, as long as $d_1$ given in (\ref{exponent}) and (\ref{c1}) is different from zero.
In the critical regime where the soft edge hits the hard edge, see Section \ref{section: hardsoft}, a new family of kernels can be expected.

\section*{Acknowledgements}
The authors are grateful to G. Akemann for drawing their attention to \cite{LSZ}, to M. Atkin and A. Kuijlaars for useful discussions, and to the anonymous reviewers for their useful suggestions.
They were supported by the European Research Council under the European Union's Seventh Framework Programme (FP/2007/2013)/ ERC Grant Agreement n.\, 307074 and by the Belgian Interuniversity Attraction Pole P07/18. TC was also supported by FNRS.

\end{document}